\newtheorem{thm}{Theorem}
\newtheorem*{thm*}{Theorem}
\newtheorem{prop}[thm]{Proposition}
\newtheorem*{prop*}{Proposition}
\newtheorem{lem}[thm]{Lemma}
\newtheorem{cor}[thm]{Corollary}
\theoremstyle{definition}
\newtheorem{dfn}{Definition}
\newtheorem*{ex}{Example}
\newcommand{\avg}[1]{\left\langle #1 \right\rangle}
\newcommand{\CC}{\mathbb{C}}
\newcommand{\ZZ}{\mathbb{Z}}
\newcommand{\ket}[1]{\left| #1 \right>} 
\newcommand{\bra}[1]{\left< #1 \right|} 
\newcommand{\braket}[2]{\left< #1 \middle| #2 \right>} 
\renewcommand{\t}[1]{\ensuremath{^{\otimes #1}}}
\DeclareMathOperator{\spans}{span}
\newcommand{\fW}{\mathfrak{W}}
\newcommand{\cA}{\mathcal{A}}
\newcommand{\cE}{\mathcal{E}}
\newcommand{\cF}{\mathcal{F}}
\newcommand{\cG}{\mathcal{G}}
\newcommand{\cM}{\mathcal{M}}
\newcommand{\cT}{\mathcal{T}}
\newcommand{\cU}{\mathcal{U}}
\DeclareMathOperator{\Holant}{Holant}
\newcommand{\Holp}[2][]{\textsc{Holant}^{ #1 }\left( #2 \right)}
\newcommand{\csp}{\#\textsc{CSP}}
\newcommand{\FP}{\textsf{FP}}
\newcommand{\sP}{\#\textsf{P}}
\newcommand{\NP}{\textsf{NP}}
\newcommand{\etal}[0]{\emph{et al.}}
\tikzstyle{none}=[inner sep=0pt]
\tikzstyle{every picture}=[baseline=-0.1cm,scale=0.5]
\tikzstyle{unitary}=[rectangle,fill=white,draw=black,minimum height=14pt,minimum width=14pt,inner sep=0pt]
\tikzstyle{wideunitary}=[rectangle,fill=white,draw=black,minimum height=14pt,minimum width=24pt,inner sep=0pt]
\tikzstyle{graph}=[ellipse,fill=White,draw=Black,minimum height=0.5cm,minimum width=3cm,inner sep=0pt]
\tikzstyle{biggraph}=[ellipse,fill=White,draw=Black,minimum height=0.75cm,minimum width=6.5cm,inner sep=0pt]
\tikzstyle{scalardiamond}=[diamond,draw=black,fill=white,inner sep=1pt,minimum size=0.4cm]
\tikzstyle{dtr}=[regular polygon,regular polygon sides=3,shape border rotate=180,fill=white,draw=black,minimum size=22pt,inner sep=0pt]
\tikzstyle{utr}=[regular polygon,regular polygon sides=3,fill=white,draw=black,minimum size=22pt,inner sep=0pt]
\tikzstyle{dtrwide}=[isosceles triangle,isosceles triangle stretches,shape border rotate=270,fill=white,draw=black,minimum height=22pt,minimum width=1.5cm,inner sep=0pt]
\tikzstyle{bigcloud}=[cloud,fill=white,draw=black]
\tikzstyle{diagram}=[rectangle,fill=white,draw=black,minimum height=0.5cm,minimum width=1cm,inner sep=0pt]
\tikzstyle{map}=[trapezium,trapezium left angle=90,trapezium right angle=120,fill=White,draw=Black,inner sep=0pt, minimum height=0.5cm]
\tikzstyle{map_dagger}=[trapezium,trapezium left angle=90,trapezium right angle=60,fill=White,draw=Black,inner sep=0pt, minimum height=0.5cm]
\tikzstyle{map_transpose}=[trapezium,trapezium left angle=120,trapezium right angle=90,fill=White,draw=Black,inner sep=0pt, minimum height=0.5cm]
\tikzstyle{transposemap}=[trapezium,trapezium left angle=120,trapezium right angle=90,fill=White,draw=Black,inner sep=0pt, minimum height=0.5cm] 
\tikzstyle{solidn}=[circle,fill=Black,draw=Black,line width=0.8 pt,minimum size=5pt,inner sep=0pt]
\tikzstyle{hollown}=[circle,fill=White,draw=Black,line width=0.5 pt,minimum size=5pt,inner sep=0pt]
\tikzstyle{greyn}=[circle,fill=Grey,draw=Black,line width=0.5 pt,minimum size=5pt,inner sep=0pt]
\title{A new Holant dichotomy inspired by quantum computation}
\author{Miriam Backens
\institute{School of Mathematics, University of Bristol, UK}
\email{m.backens@bristol.ac.uk}
}
\begin{document}

\maketitle

\begin{abstract}
 Holant problems are a framework for the analysis of counting complexity problems on graphs.
 This framework is simultaneously general enough to encompass many other counting problems on graphs and specific enough to allow the derivation of dichotomy results, partitioning all problem instances into those which can be solved in polynomial time and those which are \sP-hard.
 The Holant framework is based on the theory of holographic algorithms, which was originally inspired by concepts from quantum computation, but this connection appears not to have been explored before.
 
 Here, we employ quantum information theory to explain existing results in a concise way and to derive a dichotomy for a new family of problems, which we call \textsc{Holant}$^+$.
 This family sits in between the known families of \textsc{Holant}$^*$, for which a full dichotomy is known, and \textsc{Holant}$^c$, for which only a restricted dichotomy is known.
 Using knowledge from entanglement theory -- both previously existing work and new results of our own -- we prove a full dichotomy theorem for \textsc{Holant}$^+$, which is very similar to the restricted \textsc{Holant}$^c$ dichotomy.
 Other than the dichotomy for \#\textsc{R$_3$-CSP}, ours is the first Holant dichotomy in which the allowed functions are not restricted and in which only a finite number of functions are assumed to be freely available.
\end{abstract}

\section{Introduction}

Quantum computation provided the inspiration for holographic algorithms \cite{valiant_holographic_2008}, which in turn inspired the Holant framework \cite{cai_holant_2009}.
While Holant problems are an area of active research, so far there appear to have been no attempts to apply knowledge from quantum information theory or quantum computation to their analysis.
Yet, as we show in the following, quantum information theory, and particularly the theory of quantum entanglement, offer promising new avenues of research into Holant problems.

The Holant framework encompasses a wide range of counting complexity problems on graphs, parameterised by sets of functions $\cF$.
A signature grid is a mathematical object constructed by assigning a complex-valued Boolean function $f_v\in\cF$ to each vertex $v$ in a finite graph in such a way that each edge incident on the vertex corresponds to an input of the function.
The signature grid is then assigned a complex number, called the \emph{Holant}, defined as:
\begin{equation}
 \sum_{\sigma:E\to\{0,1\}} \prod_{v\in V} f_v(\sigma|_{E(v)})
\end{equation}
where $E$ is the set of edges of the graph, $V$ the set of vertices, $\sigma$ an assignment of Boolean values to each edge, and $\sigma|_{E(v)}$ the restriction of $\sigma$ to the edges incident on $v$.
The associated counting problem $\Holp{\cF}$ is the following: given a signature grid with functions taken from $\cF$, find the value of the Holant \cite{cai_holant_2009}.

From a quantum theory perspective, a signature grid can be thought of as a tensor network, where each function is considered to be a tensor with one index for each input.
Then $\Holp{\cF}$ is the problem of evaluating the contraction of that tensor network.

Counting complexity problems that can be expressed in the Holant framework include the problem of counting matchings or perfect matchings, counting vertex covers \cite{cai_holant_2009}, or counting Eulerian orientations \cite{huang_dichotomy_2016}.
The Holant framework also encompasses other counting complexity frameworks like counting constraint satisfaction problems (\csp) or the counting complexity version of the graph homomorphism problem \cite{cai_holant_2009}.

On the other hand, the Holant framework is specific enough to allow the derivation of dichotomy theorems, which state that for function sets $\cF$ within certain classes, the Holant problem can either be solved in polynomial time or it is \sP-hard.
Such a dichotomy is not expected to hold for general counting complexity problems \cite{cai_dichotomy_2011}. 
One class for which a dichotomy has been derived is that of \textsc{Holant}$^*$ problems, which only considers function sets containing all unary functions \cite{cai_dichotomy_2011}.
We say that in \textsc{Holant}$^*$ problems, all unary functions are \emph{freely available}.
Another class, denoted \csp, involves only function sets containing equality functions of arbitrary arity, i.e.\ where equality functions are freely available \cite{cai_holant_2009,cai_holant_2012}.
A third class is symmetric \textsc{Holant}$^c$, where only two unary functions are freely available: the ones pinning edges to values 0 or 1, respectively.
Additionally, all functions are required to be \emph{symmetric}, meaning their value depends only on the Hamming weight of the input \cite{cai_holant_2012}.
Additional dichotomies exist for the plain \textsc{Holant} problem with no freely available functions, but these, too, restrict the functions sets considered: either to symmetric functions \cite{cai_complete_2013} or to real-valued functions only \cite{lin_complexity_2016}.
A full dichotomy for all Holant problems, as well as a dichotomy for \textsc{Holant}$^c$ without the symmetry restriction, have so far remained elusive.
While techniques derived from the idea of holographic algorithms play an important role in many of the dichotomy proofs, none of the existing results use the connection to quantum computation.

We on the other hand take inspiration from the origin of Holant problems in ideas taken from quantum computation in order to make a step towards a general dichotomy for \textsc{Holant}$^c$.

First, we analyse existing dichotomies in quantum terms and find that many of the polynomial-time solvable classes have very natural descriptions in quantum theory.
In particular, the \textsc{Holant}$^*$ dichotomy \cite{cai_dichotomy_2011} can be described in terms of the types of quantum entanglement present in the allowed functions.
Entanglement is a core concept in quantum theory: a quantum state of multiple systems is entangled if it cannot be written as a tensor product of states of subsystems.
For states of more than two systems, there are different types of entanglement which can be used for different information-theoretic tasks \cite{nielsen_quantum_2010} -- the classification of those entanglement types is an area of ongoing research \cite{dur_three_2000,verstraete_four_2002,lamata_inductive_2006,lamata_inductive_2007,backens_inductive_2016} (cf.\ Sections \ref{s:entanglement}--\ref{s:inductive_classification}).

We additionally find that the tractable class of \emph{affine functions} arising in the dichotomies for \csp{} and symmetric \textsc{Holant}$^c$ \cite{cai_holant_2009,cai_holant_2012} (see also Section \ref{s:csp}) is well-known in quantum information theory as \emph{stabilizer states} \cite{gottesman_heisenberg_1998}.

Motivated by this, we define a new class of Holant problems which we call \textsc{Holant}$^+$.
This class encompasses problems where four specific unary functions are freely available, including the two that are available in \textsc{Holant}$^c$ (see Section \ref{s:Holant_plus}).
In this way, \textsc{Holant}$^+$ fits between \textsc{Holant}$^*$, for which there is a general dichotomy, and \textsc{Holant}$^c$, for which there is no general dichotomy.
We pick these four unary functions because of their special role in quantum information theory and also because they enable us to use a known result from entanglement theory about producing two-system entangled states from many-system ones via projections \cite{popescu_generic_1992,gachechiladze_addendum_2016}: this corresponds to the ability to produce non-degenerate binary functions via gadgets.
In fact, we prove an extension of that result, enabling the construction of three-qubit entangled states, or equivalently ternary functions (cf.\ Section \ref{s:proof:three-qubit-entanglement}).

Using this, we derive our dichotomy theorem for \textsc{Holant}$^+$. The tractable classes in this dichotomy are very similar to those of the dichotomy for symmetric \textsc{Holant}$^c$ \cite{cai_holant_2009} (see also Section \ref{s:other_dichotomies}).
The one exception is that a family of several tractable sets for \textsc{Holant}$^c$, which are related to the set of affine functions, reduces simply to subsets of the affine functions in the \textsc{Holant}$^+$ case.

Our dichotomy is the first full Holant dichotomy -- i.e.\ a dichotomy which does not restrict the type of functions involved -- where only a finite number of functions is freely available, except for the dichotomy for \#\textsc{R$_3$-CSP} \cite{cai_complexity_2014}.
Furthermore, given that no dichotomy is known yet for \textsc{Holant}$^c$ with not necessarily symmetric functions, our result represents a step towards such a general \textsc{Holant}$^c$ dichotomy, as well as a general dichotomy for all Holant problems.

In the following, we introduce the Holant problem and associated notions in more detail in Section \ref{s:Holant_problem}.
In Section \ref{s:existing_results}, we recap the relevant existing dichotomies and results.
Next, we introduce the quantum perspective on Holant problems together with important notions from entanglement theory, this is in Section \ref{s:quantum_states}.
We define and motivate the new family of Holant problems called \textsc{Holant}$^+$ and prove the dichotomy theorem in Section \ref{s:Holant_plus}.

\section{Holant problems}
\label{s:Holant_problem}

Holant problems are a framework for counting complexity problems on graphs, introduced by Cai \etal{} \cite{cai_holant_2009}, and based on the theory of holographic algorithms developed by Valiant \cite{valiant_holographic_2008}.

Let $\cF$ be a set of complex-valued functions with Boolean inputs, also called \emph{signatures}, and let $G=(V,E)$ be an undirected graph with vertices $V$ and edges $E$.
Throughout, graphs are allowed to have parallel edges (i.e.\ $E$ is a multi-set) and self-loops.
A \emph{signature grid} is a tuple $\Omega=(G,\cF,\pi)$ where $G$ is an undirected graph, $\cF$ is a set of functions, and $\pi$ is a function that assigns to each $n$-ary vertex $v\in V$ a function $f_v:\{0,1\}^n\to\CC$ in $\cF$ and also specifies which edge corresponds to which input.

A complex value called the \emph{Holant} can be associated with each signature grid:
\begin{equation}
 \Holant_\Omega = \sum_{\sigma:E\to\{0,1\}} \prod_{v\in V} f_v(\sigma |_{E(v)} ),
\end{equation}
where $\sigma |_{E(v)}$ denotes the restriction of $\sigma$ to the edges incident on $v$.

\begin{dfn}
 The Holant problem for a set of signatures $\cF$, denoted by \textsc{Holant}$(\cF)$, is defined as follows:
 \begin{description}[align=right, labelwidth=2cm]
  \item[Input] a signature grid $\Omega=(G,\cF,\pi)$ over the signature set $\cF$,
  \item[Output] $\Holant_\Omega$.
 \end{description}
\end{dfn}

The Holant problem is general enough to encode a wide range of counting problems defined on graphs, e.g. counting (not necessarily) perfect matchings, counting vertex covers, counting graph homomorphisms, or the counting constraint satisfaction problem \cite{cai_holant_2009,cai_graph_2010}.
Simultaneously, it is specific enough to allow the derivation of dichotomy theorems: for all families of sets of signatures analysed so far, the complexity classification takes the form:
\begin{quotation}
 \noindent For any signature set $\cF$ in the family, either \textsc{Holant}$(\cF)$ can be solved in polynomial time or \textsc{Holant}$(\cF)$ is \sP-hard.
\end{quotation}
By an analogue of Ladner's Theorem about \NP-intermediate problems \cite{ladner_structure_1975}, such a dichotomy is not expected to hold for general counting problems \cite{cai_dichotomy_2011}.

A \emph{symmetric signature} is a function that depends only on the Hamming weight of the input.
In other words, the value of this function does not change under interchange of any two inputs.
Let $f:\{0,1\}^n\to\CC$ be symmetric.
Then $f$ is often written as:
\begin{equation}
 f = [f_0,f_1,\ldots,f_n],
\end{equation}
where $f_k$ is the value $f$ takes on inputs of Hamming weight $k$ for $k=0,\ldots,n$.
Many complexity results for the Holant problem are specifically about symmetric signatures, e.g. \cite{cai_holant_2012,cai_dichotomy_2011}.

A signature is called \emph{degenerate} if it is a product of unary signatures.
Any signature that cannot be expressed as a product of unary signatures is called \emph{non-degenerate}.
For example, the binary signature $f$ satisfying:
\begin{equation}
 f(x,y) = \begin{cases} 1&\text{if } x=y=0 \\ 0&\text{otherwise} \end{cases}
\end{equation}
is degenerate: it can be written as $f(x,y)=g(x)g(y)$, where:
\begin{equation}
 g(x) = \begin{cases} 1&\text{if } x=0 \\ 0&\text{if } x=1. \end{cases}
\end{equation}
The binary equality signature $(=_2) = [1,0,1]$ on the other hand is non-degenerate: it is impossible to find two unary functions whose product is $=_2$.

Given a bipartite graph, we can define a \emph{bipartite signature grid} by specifying two signature sets $\cF$ and $\cG$ and assigning vertices from the first (second) partition signatures from $\cF$ ($\cG$).
A bipartite signature grid is denoted by a tuple $(G,\cF\mid\cG,\pi)$.
The corresponding \emph{bipartite Holant problem} is \textsc{Holant}$(\cF\mid\cG)$.

\subsection{Signature grids in terms of vectors}
\label{s:vector_perspective}

As noted in \cite{cai_valiants_2006}, any signature $f:\{0,1\}^n\to\CC$ can be considered as a complex vector of $2^n$ components indexed by $\{0,1\}^n$.

Let $\{\ket{x}\}_{x\in\{0,1\}^n}$ be an orthonormal basis for $\CC^{2^n}$.\footnote{In using this notation for vectors, called \emph{Dirac notation} and common in the field of quantum computing and quantum information theory, we anticipate the interpretation of the vectors associated to signatures as quantum states, cf.\ Section \ref{s:quantum_states}.}
The vector corresponding to the signature $f$ is then denoted by:
\begin{equation}
 \ket{f} = \sum_{x\in\{0,1\}^n} f(x)\ket{x}.
\end{equation}

Suppose $\Omega=(G,\cF\mid\cG,\pi)$ is a bipartite signature grid, where $G=(V,W,E)$ has vertex partitions $V$ and $W$.
Then the Holant for $\Omega$ can be written as:
\begin{equation}\label{eq:bipartite_Holant_vectors}
 \Holant_\Omega = \left(\bigotimes_{w\in W} \left(\ket{g_w}\right)^T\right) \left(\bigotimes_{v\in V} \ket{f_v}\right) = \left(\bigotimes_{v\in V} \left(\ket{f_v}\right)^T\right) \left(\bigotimes_{w\in W} \ket{g_w}\right),
\end{equation}
where the tensor products are assumed to be ordered such that, everywhere, the two systems associated with the same edge meet.

\subsection{Reductions}

Given two counting problems $A$ and $B$, the expression $A \leq_T B$ denotes that there exists a polynomial time reduction from problem $A$ to problem $B$, i.e.\ the complexity of $A$ is at most that of $B$.
If $A\leq_T B$ and $B\leq_T A$, we write $A\equiv_T B$.

\subsubsection{Holographic reductions}

The most important type of reduction -- and the source of the name `Holant problem' -- are holographic reductions.
Let $M$ be a 2 by 2 complex matrix.
Then, for any $f:\{0,1\}^n\to\CC$, we write $M\circ f$ for the function corresponding to the vector $M\t{n}\ket{f}$.
Furthermore, for any signature set $\cF$, we write:
\begin{equation}
 M\circ\cF = \{ M\circ f : f\in\cF \}.
\end{equation}

\begin{thm}[Valiant's Holant Theorem, \cite{valiant_holographic_2008}]\label{thm:Valiant_Holant}
 Suppose $\cF$ and $\cG$ are two sets of signatures, $M$ an invertible 2 by 2 complex matrix, and $\Omega=(G,\cF\mid\cG, \pi)$ a signature grid.
 Let $\Omega'=(G,M\circ\cF\mid (M^{-1})^T\circ\cG, \pi')$ be the signature grid resulting from $\Omega$ by replacing each $f_v$ or $g_w$ by $M\circ f_v$ or $(M^{-1})^T\circ g_w$, respectively. Then:
 \begin{equation}
  \Holant_\Omega = \Holant_{\Omega'}.
 \end{equation}
\end{thm}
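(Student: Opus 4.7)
The plan is to work from the vectorised bipartite Holant formula (\ref{eq:bipartite_Holant_vectors}) and exploit the fact that on each edge the two meeting systems are contracted by an identity matrix, which we are free to split as $I = M^{-1} M$ and redistribute between the two endpoints. Because the graph is bipartite, every edge has exactly one endpoint in $V$ and one in $W$, so the $M$ factor can be absorbed into the $V$-side tensor and the $M^{-1}$ factor into the $W$-side tensor, producing exactly the replacements $f_v \mapsto M\circ f_v$ and $g_w \mapsto (M^{-1})^T \circ g_w$.

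Concretely, I would first rewrite (\ref{eq:bipartite_Holant_vectors}) as an edge-by-edge contraction: for each edge $e$ between $v\in V$ and $w\in W$, the corresponding pair of tensor indices is contracted via $\sum_{x\in\{0,1\}} \bra{x}_w \otimes \ket{x}_v$, which is just the identity acting across the edge. For each edge I would insert $(M^{-1})^T M^T$ along this identity, using the elementary identity $(\ket{a})^T \ket{b} = (\ket{a})^T (M^{-1})^T M^T \ket{b}$, which rearranges to $((M^{-1})^T \ket{a})^T (M^T \ket{b})$. Doing this edge by edge, the transposed matrices $(M^{-1})^T$ attach themselves to the vectors $\ket{g_w}$ on the $W$ side, while the matrices $M^T$ attach on the $V$ side. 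Taking the transpose of the whole $V$-block in (\ref{eq:bipartite_Holant_vectors}) then converts these $M^T$ back into the desired $M$ acting on each $\ket{f_v}$.

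Since $v$ has arity $n_v$, there is one such $M$ factor for each of its $n_v$ input edges, giving precisely $M^{\otimes n_v}\ket{f_v} = \ket{M\circ f_v}$; the analogous count works for $w\in W$. After all edges have been processed, the contraction has the same combinatorial structure as before but with each $\ket{f_v}$ replaced by $\ket{M\circ f_v}$ and each $\ket{g_w}$ replaced by $\ket{(M^{-1})^T \circ g_w}$, which is exactly $\Holant_{\Omega'}$.

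The only genuine care required is bookkeeping: one must check that the tensor-product ordering convention assumed in (\ref{eq:bipartite_Holant_vectors}) (that the two systems associated with the same edge meet) is preserved when the $M$ and $(M^{-1})^T$ factors are absorbed, and that self-loops and parallel edges are handled correctly (each self-loop still contributes an edge-contraction with a single inserted $M^{-1} M$, and parallel edges contribute independently). Invertibility of $M$ is used only to guarantee $M^{-1}$ exists so that the edgewise insertion $M^{-1} M = I$ is legitimate; no further algebraic hypothesis on $M$ is needed. I expect no serious obstacle beyond careful notation; the content of the theorem is really the single identity $M \otimes (M^{-1})^T \cdot (\sum_x \ket{x}\otimes\ket{x}) = \sum_x \ket{x}\otimes\ket{x}$ applied once per edge.
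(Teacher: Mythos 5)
Your overall strategy is the paper's own proof run in the opposite direction: both arguments work from the vectorised bipartite formula \eqref{eq:bipartite_Holant_vectors} and use that, because every edge has one endpoint in $V$ and one in $W$, a copy of $M$ meets a copy of $M^{-1}$ on each wire; the paper starts from $\Holant_{\Omega'}$ and cancels $M^{-1}M=I$, whereas you start from $\Holant_\Omega$ and insert it. However, two of your concrete algebraic steps are wrong as written. The claimed rearrangement $(\ket{a})^T (M^{-1})^T M^T \ket{b} = \left((M^{-1})^T\ket{a}\right)^T\left(M^T\ket{b}\right)$ is false for general invertible $M$: the right-hand side equals $(\ket{a})^T M^{-1} M^T \ket{b}$, which reduces to $(\ket{a})^T\ket{b}$ only when $M$ is complex orthogonal. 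With your grouping it is $M^{-1}$ (not $(M^{-1})^T$) that attaches to $\ket{g_w}$, and $M^T$ (not $M$) that attaches to $\ket{f_v}$; and the proposed repair does not work either, since transposing the whole $V$-block turns $M^T\ket{f_v}$ into $(\ket{f_v})^T M$, which is still the signature $M^T\circ f_v$, not $M\circ f_v$. Carried out literally, your bookkeeping proves the statement with $M$ replaced by $M^T$ (i.e.\ $f_v\mapsto M^T\circ f_v$, $g_w\mapsto M^{-1}\circ g_w$), which is equivalent to the theorem because $M$ is arbitrary, but is not the replacement rule you set out to verify.

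The fix is one line and lands you exactly on the paper's computation read backwards: insert $M^{-1}M=I$ on each edge and group as $\left((\ket{a})^T M^{-1}\right)\left(M\ket{b}\right) = \left((M^{-1})^T\ket{a}\right)^T\left(M\ket{b}\right)$, so that $(M^{-1})^T$ genuinely acts on each wire of $\ket{g_w}$ and $M$ on each wire of $\ket{f_v}$; collecting the $\mathrm{arity}(f_v)$ resp.\ $\mathrm{arity}(g_w)$ factors per vertex gives $M\t{\mathrm{arity}(f_v)}\ket{f_v}$ and $\left((M^{-1})^T\right)\t{\mathrm{arity}(g_w)}\ket{g_w}$, hence $\Holant_{\Omega'}$. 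One further small point: in a bipartite signature grid there are no self-loops to handle (a self-loop would join a vertex to its own partition), and that is just as well, since for a self-loop both wire ends would receive a copy of $M$ and the insertion trick would not produce the stated replacements.
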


For completeness and to illustrate the use of the vector notation, we give a proof.

\begin{proof}
 Use the definition of the Holant in terms of vectors, \eqref{eq:bipartite_Holant_vectors}. Then:
 \begin{align*}
  \Holant_{\Omega'} &= \left(\bigotimes_{w\in W} \left(\left(\left(M^{-1}\right)^T\right)\t{arity(g_w)}\ket{g_w}\right)^T\right) \left(\bigotimes_{v\in V} M\t{arity(f_v)}\ket{f_v}\right) \\
  &= \left(\bigotimes_{w\in W} \left(\ket{g_w}\right)^T \left(M^{-1}\right)\t{arity(g_w)} \right) \left(\bigotimes_{v\in V} M\t{arity(f_v)}\ket{f_v}\right),
  \intertext{so on each system in the tensor product, a copy of $M^{-1}$ meets a copy of $M$. Hence:}
  \Holant_{\Omega'} &= \left(\bigotimes_{w\in W} \left(\ket{g_w}\right)^T \right) \left(\bigotimes_{v\in V} \ket{f_v}\right) = \Holant_\Omega. \qedhere
 \end{align*}
\end{proof}

\begin{cor}[\cite{valiant_holographic_2008}]
 Suppose $\cF$ is a set of signatures and $O$ is a 2 by 2 complex orthogonal matrix, i.e.\ $O^TO=OO^T=I$. Let $\Omega = (G,\cF,\pi)$ be a signature grid and let $\Omega' = (G,O\circ\cF,\pi')$ be the signature grid that results from $\Omega$ by replacing $f_v$ with $O\circ f_v$. Then:
 \begin{align}
  \Holant_\Omega = \Holant_{\Omega'}.
 \end{align}
\end{cor}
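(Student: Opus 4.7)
The plan is to reduce the non-bipartite statement to the bipartite one (Theorem \ref{thm:Valiant_Holant}) by a standard edge-subdivision trick, and then use orthogonality of $O$ to show that the signatures introduced on one side of the bipartition are invariant under the holographic transformation.

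First, I would construct an auxiliary bipartite signature grid $\widetilde{\Omega} = (\widetilde{G}, \cF \mid \{=_2\}, \widetilde\pi)$ from $\Omega$ by subdividing each edge $e \in E$: replace $e$ with two edges joined at a new degree-2 vertex assigned the binary equality signature $=_2 = [1,0,1]$. Every original vertex sits in one partition, every new vertex in the other, so $\widetilde{G}$ is bipartite. Because $=_2(x,y)$ is nonzero only when $x=y$, the sum over assignments on $\widetilde{G}$ collapses to the sum over assignments on $G$ with a factor of $1$ per edge, giving $\Holant_{\widetilde{\Omega}} = \Holant_\Omega$.

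Next, I apply Theorem \ref{thm:Valiant_Holant} to $\widetilde{\Omega}$ with $M = O$. On the $\cF$ side, each $f_v$ is replaced by $O \circ f_v$; on the $\{=_2\}$ side, each $=_2$ is replaced by $(O^{-1})^T \circ (=_2) = O \circ (=_2)$, using $O^T O = I$ to conclude $(O^{-1})^T = O$. Call the resulting bipartite grid $\widetilde{\Omega}'$; Theorem \ref{thm:Valiant_Holant} gives $\Holant_{\widetilde{\Omega}} = \Holant_{\widetilde{\Omega}'}$.

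The central calculation is then to verify that $O \circ (=_2) = (=_2)$. In vector form, $\ket{=_2} = \sum_{k\in\{0,1\}} \ket{kk}$, so
\begin{equation}
(O \otimes O)\ket{=_2} = \sum_{k} O\ket{k}\otimes O\ket{k} = \sum_{i,j}\left(\sum_k O_{ik}O_{jk}\right)\ket{ij} = \sum_{i,j}(OO^T)_{ij}\ket{ij} = \sum_i \ket{ii} = \ket{=_2},
\end{equation}
using $OO^T = I$. Thus the $=_2$ signatures on the new side are unchanged, and $\widetilde{\Omega}'$ is just the subdivision of the grid $\Omega' = (G, O \circ \cF, \pi')$. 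Undoing the subdivision exactly as in the first step yields $\Holant_{\widetilde{\Omega}'} = \Holant_{\Omega'}$, and chaining the three equalities gives $\Holant_\Omega = \Holant_{\Omega'}$.

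The only nontrivial step is the invariance $O \circ (=_2) = (=_2)$, which is immediate from $OO^T = I$; the rest is bookkeeping around edge subdivision, and I expect no real obstacle beyond making the subdivision/desubdivision argument clean enough that the bipartite form of Theorem \ref{thm:Valiant_Holant} applies verbatim.
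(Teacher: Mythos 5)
Your proposal is correct and follows essentially the same route as the paper: subdivide each edge with a $=_2$ vertex to make the grid bipartite, apply Theorem \ref{thm:Valiant_Holant} with $M=O$, observe that the binary equality signature is invariant under the orthogonal transformation (the paper phrases this as $(\ket{=_2})^T(O^{-1})\t{2}=(\ket{=_2})^T$, equivalent to your $(O\otimes O)\ket{=_2}=\ket{=_2}$), and then undo the subdivision.
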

\begin{proof}
 Any signature grid can be made bipartite by adding a vertex in the middle of each edge and assigning to each new vertex the binary equality signature: i.e.\ given a signature grid $\Omega = (G,\cF,\pi)$, we can construct a bipartite signature grid:
 \begin{equation}
  \Omega'' = (G', \cF\mid\{=_2\}, \pi'')
 \end{equation}
 where $G'$ results from $G$ by adding an extra vertex in the middle of each edge and $\pi''$ is the function that is $\pi$ on the original vertices and assigns $=_2$ to the new ones.
 Then, by construction, $\Holant_{\Omega}=\Holant_{\Omega''}$.
 
 Now we can apply Theorem \ref{thm:Valiant_Holant} to $\Omega''$ to construct a new bipartite signature grid $\Omega'''=(G',O\circ\cF\mid\{=_2\},\pi'')$ with $\Holant_{\Omega''}=\Holant_{\Omega'''}$, noting that:
 \begin{equation}
  (\ket{=_2})^T \left(O^{-1}\right)\t{2} = (\ket{=_2})^T
 \end{equation}
 for any 2 by 2 complex orthogonal matrix $O$.
 
 The corollary then follows by transforming $\Omega'''$ back into a non-bipartite signature grid by removing the equality vertices and merging the edges incident on them.
\end{proof}

As a result of Valiant's Holant theorem, we thus have:
\begin{equation}
 \Holp{\cF\mid\cG} \equiv_T \Holp{M\circ\cF\mid (M^{-1})^T\circ\cG}
\end{equation}
for any invertible 2 by 2 complex matrix $M$, and:
\begin{equation}
 \Holp{\cF} \equiv_T \Holp{O\circ\cF}
\end{equation}
for any orthogonal 2 by 2 complex matrix $O$.
The process of going from a signature set $\cF$ to a set $M\circ\cF$ is a holographic reduction.

\subsubsection{Gadgets and polynomial interpolation}

A \emph{gadget} over a signature set $\cF$ (also called $\cF$-gate) is a fragment of a signature grid with some `dangling' edges.
Any gadget can be assigned an effective signature ${g}$.
If ${g}$ is the effective signature of some gadget over $\cF$, ${g}$ is said to be \emph{realisable over $\cF$}.

\begin{lem}[\cite{cai_dichotomy_2011}]
 Suppose $\cF$ is some signature set and ${g}$ is realisable over $\cF$. Then:
 \begin{equation}\label{eq:F_cup_psi}
  \Holp{\cF\cup\{{g}\}} \equiv_T \Holp{\cF}.
 \end{equation}
\end{lem}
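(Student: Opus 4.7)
The plan is to prove the two directions separately. The easy direction $\Holp{\cF} \leq_T \Holp{\cF\cup\{g\}}$ is trivial: every $\cF$-signature grid is already an $\cF\cup\{g\}$-signature grid, so the identity map is a valid polynomial-time reduction.

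For the non-trivial direction $\Holp{\cF\cup\{g\}} \leq_T \Holp{\cF}$, I would reduce any input signature grid $\Omega = (G,\cF\cup\{g\},\pi)$ to a new signature grid $\Omega' = (G',\cF,\pi')$ obtained by substituting, for each vertex $v$ with $\pi(v) = g$, a fresh copy of the gadget that realises $g$; the dangling edges of the gadget copy take the place of the edges formerly incident on $v$. Since the gadget realising $g$ has a fixed size (depending on $g$ and $\cF$ but not on $\Omega$), the size of $\Omega'$ exceeds that of $\Omega$ by at most a constant factor, so the reduction runs in polynomial time.

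The key claim is $\Holant_\Omega = \Holant_{\Omega'}$. This is essentially the defining property of the effective signature: by definition, $g$ evaluated at a given input equals the sum, over all Boolean assignments of the gadget's internal edges, of the product of the gadget's constituent signatures at those assignments. Formally, I would split the sum over $\sigma : E' \to \{0,1\}$ appearing in $\Holant_{\Omega'}$ into a sum over the external edges (those inherited from $G$) and a sum over the internal edges of each gadget copy. The internal-edge sums factorise across the distinct gadget copies because no edge is shared between them, and by the definition of effective signature each such factor is $g$ evaluated at the external-edge values of the corresponding copy. Substituting these factors back turns $\Holant_{\Omega'}$ into $\Holant_\Omega$ term by term.

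I do not anticipate any real obstacle; the only points requiring a little care are (i) matching the arity and input ordering of the gadget's dangling edges to those of the original vertex $v$ via $\pi'$, and (ii) ensuring that distinct vertices originally labelled $g$ are replaced by independent gadget copies so that their internal-edge assignments are summed independently and the factorisation used above is valid.
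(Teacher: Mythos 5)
Your argument is correct, and it is the standard gadget-substitution proof of this lemma; the paper itself gives no proof, citing the result from Cai, Lu and Xia, so there is nothing to diverge from. The only point worth adding is that the paper immediately afterwards relaxes ``realisable'' to mean realisable up to a non-zero scalar $c$; in that case your key identity becomes $\Holant_{\Omega'} = c^{k}\,\Holant_{\Omega}$ with $k$ the number of vertices carrying $g$, and since the gadget (hence $c$) is a fixed constant independent of the input, dividing by $c^{k}$ keeps the reduction polynomial-time.
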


As multiplying one or more signatures by a constant does not change the complexity of a Holant instance, we consider a signature realisable if some multiple of it by a non-zero complex constant is realisable.

Following \cite{lin_complexity_2016}, we define for any signature set $\cF$:
\begin{equation}
 S(\cF) = \{ {g} \mid {g} \text{ is realisable over } \cF \}.
\end{equation}
Then:
\begin{equation}
 \Holp{S(\cF)} \equiv_T \Holp{\cF}.
\end{equation}

The concept of a gadget can be extended to bipartite signature grids.
A left-side gadget over $\cF\mid\cG$ is a fragment of a bipartite signature grid where all dangling edges are connected to vertices from the right partition, i.e.\ the gadget can be used as if it was in $\cF$.
A right-side gadget can be defined analogously.

If ${g}\notin S(\cF)$, in certain cases it is nevertheless possible to show a result like \eqref{eq:F_cup_psi} by analysing a family of signature grids that differ in specific ways.
This process is called \emph{polynomial interpolation} and will not be used here, though it is a crucial ingredient in some of the results we build upon.
The interested reader can find a discussion of polynomial interpolation in \cite{cai_holant_2009}.

\section{Existing results about the Holant problem}
\label{s:existing_results}

We now introduce the existing families of Holant problems and the associated dichotomy results.

Gadget constructions, which are at the heart of many reductions, are easier the more freely available signatures there are.
As a result, several families of Holant problems have been defined, in which certain sets of signatures are freely available (and can thus be used in gadget constructions and polynomial interpolation).
These families are:
\begin{itemize}
 \item Complex-weighted Boolean \csp, a counting constraint satisfaction problem, which corresponds to a Holant problem in which equality functions of any arity are freely available. Formally:
  \begin{equation}
   \csp(\cF) = \Holp{\cF\cup\cG},
  \end{equation}
  where $\cG=\{ =_1, =_2, =_3, \ldots \}$ with $=_1$ being the function that is equal to 1 one both inputs \cite{cai_holant_2009,cai_holant_2012,cai_complexity_2014}.
 \item $\Holp[*]{\cF}$, the Holant problem in which all unary signatures are available for free, i.e.:
  \begin{equation}
   \Holp[*]{\cF} = \Holp{\cF\cup\cU},
  \end{equation}
  where $\cU$ is the set of all unary signatures \cite{cai_holant_2009,cai_dichotomy_2011}.
 \item $\Holp[c]{\cF}$, the Holant problem in which edges can be pinned to be 0 or 1, respectively. Formally:
  \begin{equation}
   \Holp[c]{\cF} = \Holp{\cF\cup\{\delta_0,\delta_1\}},
  \end{equation}
  where $\delta_0(0)=1$, $\delta_0(1)=0$, and the other way around for $\delta_1$ \cite{cai_holant_2009,cai_holant_2012}.
\end{itemize}

In addition, there is, of course, plain $\textsc{Holant}$: the full Holant problem with no freely-available signatures \cite{cai_complete_2013,lin_complexity_2016}.

\subsection{The \textsc{Holant}$^*$ dichotomy}
\label{s:Holant_star}

We first note the dichotomy for the Holant problem in which all unary signatures are freely available.

Given a bit string $x$, let $\bar{x}$ be its bit-wise complement. For a set of signatures $\cF$, denote by $\avg{\cF}$ the closure of $\cF$ under tensor products.
Furthermore, let:
\begin{itemize}
 \item $\cT$ be the set of all binary signatures,
 \item $\cE$ the set of signatures which are non-zero only on two inputs $x$ and $\bar{x}$, and
 \item $\cM$ the set of signatures which are non-zero only on inputs of Hamming weight at most 1.
\end{itemize}
Finally, define:\footnote{The matrix we denote by $K$ is usually denoted by $Z$ in the literature; we have changed the label to avoid confusion with the Pauli-$Z$ matrix commonly used in quantum theory.}
\begin{equation}
 K = \begin{pmatrix}1&1\\i&-i\end{pmatrix} \quad\text{and}\quad X = \begin{pmatrix}0&1\\1&0\end{pmatrix}.
\end{equation}
The matrix $K$ is useful because it satisfies $K^TK\doteq X$, where `$\doteq$' denotes equality up to non-zero scalar factor.
In fact, up to multiplication by a diagonal matrix or by $X$ itself, $K$ is the only solution to this equation; see Appendix \ref{a:ATA=X}.

\begin{thm}[\cite{cai_dichotomy_2011}]\label{thm:Holant-star}
 Let $\cF$ be any set of complex valued functions in Boolean variables. The problem $\Holp[*]{\cF}$ is polynomial time computable if:
 \begin{itemize}
  \item $\cF\subseteq\avg{\cT}$, or
  \item $\cF\subseteq\avg{O\circ\mathcal{E}}$, where $O$ is a complex orthogonal 2 by 2 matrix, or
  \item $\cF\subseteq\avg{K\circ\mathcal{E}}$, or
  \item $\cF\subseteq\avg{K\circ\mathcal{M}}$ or $\cF\subseteq\avg{KX\circ\mathcal{M}}$.
 \end{itemize}
 In all other cases, $\Holp[*]{\cF}$ is \sP-hard. The dichotomy is still valid even if the inputs are restricted to planar graphs.
\end{thm}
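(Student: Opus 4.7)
The argument splits into a tractability direction and a hardness direction, with the free availability of all unary signatures playing a central role in both.

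For tractability, each of the five listed classes admits a polynomial-time algorithm via holographic reduction to a canonical tractable base. The case $\cF\subseteq\avg{\cT}$ is handled by observing that every signature factors as a tensor product of binary signatures; after absorbing the free unaries at the leaves of any gadget decomposition, the Holant reduces to a sequence of $2\times 2$ matrix contractions along the edges of the graph. For $\cF\subseteq\avg{O\circ\cE}$ with $O$ orthogonal, the corollary to Valiant's Holant Theorem gives $\Holp[*]{\cF}\equiv_T\Holp[*]{\avg{\cE}}$, and $\cE$-signatures impose a generalised parity constraint on each connected component, which can be evaluated by union--find over bit assignments. For $\cF\subseteq\avg{K\circ\cE}$ and $\cF\subseteq\avg{K\circ\cM}$ (and the $KX$-variant), I would use the bipartite reduction: insert $=_2$ vertices on every edge, apply $M=K$ on the $\cF$-side via Theorem \ref{thm:Valiant_Holant}, and use $K^TK\doteq X$ to convert each $=_2$ into $\neq_2$. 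The resulting bipartite problems $\Holp{\cE\mid\{\neq_2\}}$ and $\Holp{\cM\mid\{\neq_2\}}$ reduce to generalised parity satisfaction and a matchgate-style weighted matching evaluation, respectively, neither of which requires planarity.

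For hardness, the strategy is to show that if $\cF$ lies in none of the five tractable classes, then $S(\cF\cup\cU)$ realises a signature set known to be $\sP$-hard. The main tool is pinning: attaching unary signatures to the dangling edges of any $f\in\cF$ produces effective signatures of strictly smaller arity, and a genericity argument over the space of unary attachments yields, from any non-degenerate $f$, a non-degenerate binary signature in $S(\cF\cup\cU)$. This reduces matters to understanding $\Holp[*]{\cdot}$ restricted to sets of binary signatures, a finite-dimensional linear-algebra problem. Up to holographic reductions -- which preserve the class $\Holp[*]{\cdot}$ because $\cU$ is stable under any invertible $M$ -- a distinguished non-degenerate binary signature can be normalised to a representative of $\cT$, $\cE$, or (after further $K$-conjugation) $\cM$ type. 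One then checks that unless every remaining signature in $\cF$ lies in the tensor closure matching this normal form, some pairing with another signature in $\cF\cup\cU$ realises a set already known to be $\sP$-hard, such as \#\textsc{Vertex Cover} or a complex-weighted variant.

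The principal obstacle is the case analysis in the hardness direction. One must verify that the five listed classes exhaust all maximal tensor-closed families that remain tractable in the presence of free unaries, and that any non-degenerate signature of arity at least $3$ lying outside them contributes, via pinning, a binary signature that breaks the containment. This requires an essentially complete classification of binary signatures under the action of invertible holographic transformations (cf.\ Appendix \ref{a:ATA=X}, which pins down the matrices $A$ with $A^TA\doteq X$ and thereby distinguishes the $O$-orbit of $\cE$ from the $K$-orbit), together with a careful genericity argument ensuring that the projections used to reduce arity do not accidentally produce degenerate signatures. The planarity clause then follows because the tractability algorithms above never invoke planarity, while each hardness reduction can be implemented using planar gadgets.
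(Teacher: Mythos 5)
You should first note that the paper does not prove this statement at all: Theorem \ref{thm:Holant-star} is Cai, Lu and Xia's \textsc{Holant}$^*$ dichotomy, quoted from \cite{cai_dichotomy_2011} as background, so there is no in-paper proof to compare against and your sketch is in effect an attempt to reprove that (long and technical) result from scratch. Judged on its own terms, the hardness half of your plan has a structural gap. You propose to use the free unaries to pin any non-degenerate signature down to a non-degenerate \emph{binary} signature and then to ``reduce matters to understanding $\Holp[*]{\cdot}$ restricted to sets of binary signatures''. This cannot work: $\cT$ is the set of \emph{all} binary signatures, so any collection of binary signatures lies inside $\avg{\cT}$ and is tractable by the very first case of the theorem; no analysis of binary signatures alone can certify \sP-hardness. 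Worse, your proposed normalisation of the extracted binary signature ``to a representative of $\cT$, $\cE$, or $\cM$ type'' is vacuous, because all non-degenerate binary signatures are equivalent under invertible holographic transformations (there is a single SLOCC class of entangled two-qubit states, as this paper itself remarks), so the binary signature carries essentially no information about which, if any, tractable family $\cF$ belongs to. The case distinctions that drive both the statement and the actual proof live at arity $\geq 3$: one must realise non-degenerate \emph{ternary} signatures and distinguish GHZ-type from W-type structure, and the \sP-hardness then comes from gadget constructions and polynomial interpolation of the kind recorded in Section \ref{s:results_ternary_symmetric} (Theorems \ref{thm:GHZ-state}, \ref{thm:W-state}, Lemma \ref{lem:W-state}). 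That is exactly the strategy the present paper adapts for its \textsc{Holant}$^+$ result, and it is entirely absent from your sketch, where it is compressed into ``one then checks \ldots''.

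Two further points would need repair even on the tractable side. Your justification for the $\avg{K\circ\cM}$ and $\avg{KX\circ\cM}$ cases -- ``a matchgate-style weighted matching evaluation \ldots\ [not requiring] planarity'' -- proves nothing as stated, since counting weighted matchings in general (non-planar) graphs is \sP-hard; the polynomial-time algorithm for $\Holp{\cM\mid\{\neq_2\}}$ has to exploit the specific structure of $\cM$ together with binary disequality (each subdivided edge contributes exactly one $1$-end and each original vertex at most one, which collapses the sum to a component-wise count), not matchgate/FKT machinery. Finally, the planarity clause of the theorem is dispatched in one sentence (``each hardness reduction can be implemented using planar gadgets'') with no argument; in \cite{cai_dichotomy_2011} this requires its own treatment. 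So while your tractability outline for $\avg{\cT}$ and $\avg{O\circ\cE}$ is broadly in the right spirit, the proposal as a whole does not constitute a proof of the quoted dichotomy.
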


Note that the set $\avg{KX\circ\mathcal{E}}$ is equal to $\avg{K\circ\mathcal{E}}$ since $X\in\cE$, hence it does not need to be listed separately.
Furthermore, the set $\avg{\cE}$ itself is included in the second tractable case by taking $O$ to be the identity matrix.
Subsets of $\cM$ on the other hand become tractable only after a holographic transformation by $K$ or $KX$.

\subsection{The \csp{} dichotomy}
\label{s:csp}

The set of equality signatures has rather different properties to that of unary signatures, hence the \csp{} dichotomy is somewhat different to that for \textsc{Holant}$^*$.
In particular, a new family of tractable signatures arise, called `affine signatures'.

\begin{dfn}\label{dfn:affine_signature}
 A signature $f:\{0,1\}^n\to\CC$ is called \emph{affine} if it has the following form:
 \begin{equation}
  f(x) = c i^{l(x)} (-1)^{q(x)} \chi_{Ax=b},
 \end{equation}
 where $c\in\CC$, $l:\{0,1\}^n\to\ZZ_2$ is a linear function, $q:\{0,1\}^n\to\ZZ_2$ is a quadratic function, $A$ is a $m$ by $n$ matrix with Boolean entries for some $0\leq m\leq n$, $b\in\{0,1\}^m$, and $\chi$ is a 0-1 indicator function:
 \begin{equation}
  \chi_{Ax=b} = \begin{cases} 1 & \text{if } Ax=b\\ 0 & \text{otherwise.} \end{cases}
 \end{equation}
\end{dfn}
In the above, $i^2=-1$.
The set of $x$ such that $Ax=b$ form an affine space, hence the name for this class of signatures.
Note that all equality signatures are affine.

For the reader familiar with quantum information theory, the affine signatures correspond -- up to a scalar factor -- to stabilizer states (cf.\ Section \ref{s:existing_quantum}).

Let $\cA$ be the set of affine signatures, which is already closed under tensor products.
Then the dichotomy for \csp{} takes the following form.

\begin{thm}[\cite{cai_complexity_2014}]\label{thm:csp}
 Suppose $\cF$ is a class of functions mapping Boolean inputs to complex numbers. If $\cF\subseteq\mathcal{A}$ or $\cF\subseteq\avg{\mathcal{E}}$, then \csp($\cF$) is computable in polynomial time. Otherwise, \csp($\cF$) is \sP-hard.
\end{thm}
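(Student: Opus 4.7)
The plan is to treat the two tractable cases essentially as algorithmic claims and then spend the bulk of the effort on hardness. The overall shape is: (a) show each tractable case admits a polynomial-time algorithm directly; (b) for the hardness direction, use witnesses $f \in \cF\setminus\cA$ and $g \in \cF\setminus\avg{\cE}$ together with the freely available equality signatures to gadget-realise a canonically hard problem, normalising via holographic reductions where helpful.

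For $\cF\subseteq\cA$, I would exploit the identification of affine signatures with stabilizer states (up to scalar). Any $\csp(\cF)$ instance glues copies of the variable value across edges via equalities, so its Holant can be rewritten as a single expression of the form $c\sum_{x\in V} i^{L(x)}(-1)^{Q(x)}$, where $V\subseteq\{0,1\}^n$ is the affine subspace cut out by combining the indicator parts $A_v x = b_v$ of all signatures, and $L$, $Q$ are the linear and quadratic forms obtained by summing the per-signature contributions modulo $2$ and $4$. This Gauss sum can be evaluated in polynomial time by Gaussian elimination over $\mathbb{F}_2$ on $V$, followed by a standard normalisation of the quadratic form — in effect, the Gottesman--Knill algorithm. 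For $\cF\subseteq\avg{\cE}$, each signature factorises as a tensor product of functions supported on complementary pairs $\{x,\bar{x}\}$; after tracking how the equality vertices glue tensor factors, the Holant factorises over connected components of a derived graph and each component contributes a sum of at most two terms, giving an immediate polynomial-time algorithm.

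The hardness direction is the real work. Given $f\in\cF\setminus\cA$ and $g\in\cF\setminus\avg{\cE}$ (possibly $f=g$), the plan is threefold: (i) apply a holographic reduction that fixes the equality signatures to normalise $\cF$ to a convenient representative; concretely, since $\csp(\cF)\equiv_T \Holp{\{=_k\}\mid\cF}$, the only allowed holographic changes are those of the form $\cF\mapsto M\circ\cF$ with $(M^{-1})^T\t{k}\ket{=_k}\doteq\ket{=_k}$ for all $k$, giving a small but nontrivial symmetry group to exploit; (ii) use gadgets over $\cF\cup\{=_k\}_{k\geq 1}$, combined with polynomial interpolation, to realise a canonical hard signature — typically a non-affine binary signature together with an auxiliary signature that, in the presence of equalities, encodes a known $\sP$-hard problem such as counting independent sets or a hard weighted graph homomorphism; (iii) conclude by reduction from that problem.

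The main obstacle is step (ii). Neither failure condition on its own is enough: $f\in\cF\setminus\cA$ alone is consistent with $\cF\subseteq\avg{\cE}$ and vice versa, and both are tractable. The reduction must therefore genuinely combine both pathologies, usually by producing auxiliary witnesses via gadget compositions rather than reading them off directly from single signatures in $\cF$. Organising the case analysis so that every $\cF$ outside the union of the two tractable families provably realises a common hard target — without gaps arising from accidentally falling back into one of the tractable cases after holographic normalisation — is the technical heart of the argument, and is where the bulk of the work in the original proof of [cai\_complexity\_2014] is concentrated.
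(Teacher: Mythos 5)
This statement is not proved in the paper at all: Theorem \ref{thm:csp} is imported wholesale from the literature (the complex-weighted Boolean \csp{} dichotomy of Cai, Lu and Xia, cited as \cite{cai_complexity_2014}), and the paper only uses it as a black box, remarking elsewhere that the affine class coincides with stabilizer states and is tractable for Gottesman--Knill-type reasons. So there is no in-paper argument to compare yours against; the comparison has to be with the known proof you are implicitly reconstructing.

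Against that benchmark, your two tractability sketches are essentially the standard ones and are fine in outline: for $\cF\subseteq\cA$ the Holant of a \csp{} instance collapses to a Gauss sum $c\sum_{x} i^{L(x)}(-1)^{Q(x)}$ over an affine subspace, computable by $\FF_2$ linear algebra plus quadratic-form normalisation (equivalently Gottesman--Knill), and for $\cF\subseteq\avg{\cE}$ the instance decomposes into connected components each admitting at most two consistent assignments. (One small point to nail down in the affine case: when you multiply several factors $i^{l_v(x)}$ the exponents add over $\ZZ$, not $\ZZ_2$, so you must verify that the product of affine signatures is again affine --- this is true but needs the mod-4 bookkeeping made explicit.) The genuine gap is the hardness direction, which is the entire substance of the theorem and for which your proposal contains no actual argument. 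Steps (i)--(iii) are a description of a strategy, not a proof: you never show how to pin (realise $\delta_0,\delta_1$) from the equalities, never exhibit a concrete gadget or interpolation construction that turns an arbitrary $f\notin\cA$ together with a $g\notin\avg{\cE}$ into a canonical \sP-hard signature, and never carry out the case analysis ruling out the possibility that every realisable witness falls back into one of the tractable families --- indeed you explicitly concede that this "is where the bulk of the work in the original proof is concentrated." Deferring the technical heart to \cite{cai_complexity_2014} means the hardness half of the dichotomy is asserted rather than proved, so as it stands the proposal establishes only the polynomial-time half of the statement.
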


The same dichotomy also holds for \#\textsc{R$_3$-CSP}, which corresponds to the following bipartite Holant problem \cite{cai_complexity_2014}:
\begin{equation}
 \#\textsc{R$_3$-CSP}(\cF) = \Holp{\cF\mid\{=_1,=_2,=_3\}}.
\end{equation}
This dichotomy follows immediately from that for \csp{} if $\cF$ contains the binary (or indeed any non-unary) equality function, but it is non-trivial if $\cF$ does not contain any non-unary equality functions.

\subsection{Other dichotomies}
\label{s:other_dichotomies}

There is no full dichotomy for \textsc{Holant}$^c$ yet, though there is a dichotomy that applies to sets of symmetric signatures only.
This dichotomy for symmetric \textsc{Holant}$^c$ combines the tractable classes of \textsc{Holant}$^*$ and \csp{} (up to an additional holographic transformation in the latter case).

\begin{thm}[\cite{cai_holant_2012}]\label{thm:Holant-c}
 Let $\mathcal{F}$ be a set of complex symmetric signatures. $\Holp[c]{\cF}$ is \sP-hard unless $\cF$ satisfies one of the following conditions, in which case it is tractable:
 \begin{itemize}
  \item $\Holp[*]{\cF}$ is tractable (cf. Theorem \ref{thm:Holant-star}), or
  \item there exists a $T\in\mathcal{I}$ such that $\cF\subseteq T\circ\cA$, where:
   \begin{equation}
    \mathcal{I} = \left\{ T \,\middle|\, \left(T^{-1}\right)^T\circ \{ =_2, \delta_0, \delta_1 \} \subset \mathcal{A} \right\}.
   \end{equation}
 \end{itemize}
\end{thm}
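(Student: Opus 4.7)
The plan is to prove both directions of the dichotomy: tractability of the two listed cases, and \sP-hardness otherwise. Tractability of the first case is immediate, since $\{\delta_0, \delta_1\} \subset \cU$ means $\Holp[c]{\cF}$ reduces directly to $\Holp[*]{\cF}$, which is tractable by assumption. For the second tractable case, I would make the problem bipartite by inserting $=_2$ vertices in the middle of every edge, turning $\Holp[c]{\cF}$ into the bipartite form $\Holp{\cF \mid \{=_2, \delta_0, \delta_1\}}$. I would then apply Valiant's Holant theorem (Theorem \ref{thm:Valiant_Holant}) with a matrix derived from $T$, chosen so that $\cF$ on the left transforms into a subset of $\cA$ (using $\cF \subseteq T \circ \cA$) while simultaneously the signatures on the right transform into $\cA$ (by the defining property of $T \in \cI$). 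The result is an instance of \csp{} on affine signatures, which is polynomial-time computable by Theorem \ref{thm:csp}.

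For hardness I would argue the contrapositive: assume $\cF$ is symmetric, $\Holp[*]{\cF}$ is \sP-hard, and $\cF \not\subseteq T \circ \cA$ for every $T \in \cI$; show $\Holp[c]{\cF}$ is \sP-hard. The strategy is to exploit the free signatures $\delta_0$ and $\delta_1$ to construct gadgets that, together with polynomial interpolation, realise a sufficiently rich family of unary signatures -- ideally all of $\cU$ -- thereby promoting $\Holp[c]{\cF}$ to $\Holp[*]{\cF}$ and inheriting its \sP-hardness. The basic gadget is pinning: for a symmetric $n$-ary signature $f = [f_0,\ldots,f_n]$, attaching $\delta_0$ to $k$ of its edges and $\delta_1$ to $m$ others yields the symmetric sub-signature $[f_k,\ldots,f_{n-m}]$ of arity $n-k-m$. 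Iterating pinnings across different signatures in $\cF$ produces a wide range of realisable lower-arity signatures, which can then be combined in further gadgets or plugged into a polynomial-interpolation scheme targeting unary signatures.

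The main obstacle is the case analysis that must accompany this strategy: one has to show that whenever the above process fails to realise all of $\cU$, the only possible reason is that $\cF$ admits a single holographic transformation $T$ sending $\cF \subseteq T \circ \cA$ while simultaneously sending $\{=_2,\delta_0,\delta_1\}$ under $(T^{-1})^T$ into $\cA$, contradicting the assumption. This requires enumerating the algebraic degenerations that can obstruct interpolation and showing each one forces exactly the affine-transformation structure encoded by $\cI$. The symmetry hypothesis on $\cF$ is essential: a symmetric signature of arity $n$ is determined by only $n+1$ values rather than $2^n$, so the obstruction space is small enough to be classified completely. Without symmetry no such classification is currently known, which is precisely why the unrestricted \textsc{Holant}$^c$ dichotomy remains open and motivates the \textsc{Holant}$^+$ framework developed later in the paper.
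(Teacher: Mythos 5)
This statement is not proved in the paper at all: it is Theorem~\ref{thm:Holant-c}, quoted verbatim from Cai \emph{et al.} \cite{cai_holant_2012}, and the paper only uses it as an imported black box. So the relevant comparison is against the known proof in the literature, and there your proposal has a genuine gap. The tractability direction is fine and standard (reduce to $\Holp[*]{\cF}$ since $\delta_0,\delta_1\in\cU$; or insert $=_2$ on every edge, apply Valiant's Holant theorem with the transformation determined by $T\in\mathcal{I}$, and land in \csp{} over $\cA$, tractable by Theorem~\ref{thm:csp} -- modulo some bookkeeping about which side of the bipartition carries $\delta_0,\delta_1$ and is hit by $T^{-1}$ versus $(T^{-1})^T$). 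The hardness direction, however, is where the entire content of the theorem lives, and your sketch defers it wholesale: the sentence ``one has to show that whenever the above process fails to realise all of $\cU$, the only possible reason is that $\cF$ admits a single holographic transformation $T$\dots'' \emph{is} the dichotomy. Asserting that the obstruction space ``is small enough to be classified completely'' because symmetric signatures have only $n+1$ values is not an argument; classifying exactly those obstructions is what occupies the bulk of the original paper.

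Moreover, the specific strategy you propose -- use $\delta_0,\delta_1$ pinnings plus interpolation to realise all of $\cU$ and thereby reduce $\Holp[*]{\cF}$ to $\Holp[c]{\cF}$ -- cannot work as stated. The affine case itself shows why: for $\cF\subseteq T\circ\cA$ with $T\in\mathcal{I}$, the problem $\Holp[c]{\cF}$ is tractable while $\Holp[*]{\cF}$ may be \sP-hard, so no family of gadgets or interpolation scheme over $\cF\cup\{\delta_0,\delta_1\}$ can yield all of $\cU$ there; but the same algebraic degenerations that block interpolation also occur for sets that are \emph{not} of this form, and in those cases hardness must be established by a different route. The actual proof does not promote $\Holp[c]{}$ to $\Holp[*]{}$; it constructs specific non-degenerate binary and ternary symmetric signatures and invokes results of the kind recapped in Section~\ref{s:results_ternary_symmetric} (Theorems~\ref{thm:GHZ-state} and \ref{thm:W-state}, Lemma~\ref{lem:W-state}), i.e.\ reductions to \csp{} and to the ternary-signature dichotomies, together with a careful analysis of the exceptional (affine-transformable) sets. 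Your proposal names the right free signatures and the right tools (pinning, interpolation, holographic transformations) but leaves the decisive classification step entirely unproven, so it does not constitute a proof of the theorem.
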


In the case of \textsc{Holant} with no free signatures, there exist the following results:
\begin{itemize}
 \item a dichotomy for complex-valued symmetric signatures \cite{cai_complete_2013}, and
 \item a dichotomy for (not necessarily symmetric) signatures taking non-negative real values \cite{lin_complexity_2016}.
\end{itemize}
We shall not explore those in any detail here.

\subsection{Results about ternary symmetric signatures}
\label{s:results_ternary_symmetric}

There are some comprehensive results classifying the hardness of the Holant problem for bipartite signature sets of the form:
\begin{equation}
 \{[y_0,y_1,y_2]\} \mid \{[x_0,x_1,x_2,x_3]\},
\end{equation}
i.e.\ where one partition only contains binary vertices, the other only contains ternary ones, and all vertices of the same arity are assigned the same symmetric signature.
In this case, the curly braces around the signature sets are often dropped from the notation.
Furthermore, if the ternary signature is non-degenerate, it can always be mapped to $[1,0,0,1]$ or $[1,1,0,0]$ by a holographic transformation \cite{cai_holant_2012}.
If the ternary signature is degenerate, the problem is tractable by the first case of Theorem \ref{thm:Holant-star}.
It thus suffices to consider the cases:
\begin{equation}
 [y_0,y_1,y_2]\mid[1,0,0,1] \quad\text{and}\quad [y_0,y_1,y_2]\mid[1,1,0,0].
\end{equation}

For $[1,0,0,1]$, note that there are non-trivial holographic transformations leaving this signature invariant \cite{cai_holant_2012}.
In particular:
\begin{equation}
 \begin{pmatrix}1&0\\0&\omega\end{pmatrix}\circ [1,0,0,1] = [1,0,0,1],
\end{equation}
where $\omega$ is a third root of unity, i.e.\ $\omega^3=1$.
Thus, by Valiant's Holant Theorem:
\begin{equation}\label{eq:normalisation}
 \Holp{[y_0,y_1,y_2]\mid[1,0,0,1]} \equiv_T \Holp{[y_0, \omega y_1, \omega^2 y_2]\mid[1,0,0,1]}.
\end{equation}
This relationship can be used to reduce the number of symmetric binary signatures needing to be considered.
Following \cite{cai_holant_2012}, a signature of the form $[y_0,y_1,y_2]$ is called \emph{$\omega$-normalised}\footnote{We use the term $\omega$-normalisation to distinguish it from other notions of normalisation, e.g.\ ones relating to the norm of the vector associated with a signature.} if:
\begin{itemize}
 \item $y_0=0$, or
 \item there does not exist a primitive $(3t)$-th root of unity $\lambda$, where $gcd(t,3)=1$, such that $y_2=\lambda y_0$.
\end{itemize}
Similarly, a unary signature $[a,b]$ is $\omega$-normalised if:
\begin{itemize}
 \item $a=0$, or
 \item there does not exist a primitive $(3t)$-th root of unity $\lambda$, where $gcd(t,3)=1$, such that $b=\lambda a$.
\end{itemize}
If a binary signature is not $\omega$-normalised, it can be made so through application of a holographic transformation of the form given in \eqref{eq:normalisation}.
Unary signatures will only be required when the binary signature has the form $[0,y_1,0]$; in that case the binary signature is $\omega$-normalised and remains so under a holographic transformation that $\omega$-normalises the unary signature.

These definitions allow a characterisation of the Holant problem for bipartite signature grids, where there is a ternary equality signature on one partition and a non-degenerate symmetric binary signature on the other partition.

\begin{thm}[\cite{cai_holant_2012}]\label{thm:GHZ-state}
 Let $\mathcal{G}_1,\mathcal{G}_2$ be two sets of signatures and let $[y_0,y_1,y_2]$ be a $\omega$-normalised and non-degenerate signature.
 In the case of $y_0=y_2=0$, further assume that $\mathcal{G}_1$ contains a unary signature $[a,b]$ which is $\omega$-normalised and satisfies $ab\neq 0$.
 Then:
 \begin{equation}
  \Holp{\{[y_0,y_1,y_2]\}\cup\mathcal{G}_1 \mid \{[1,0,0,1]\}\cup\mathcal{G}_2} \equiv_T \csp(\{[y_0,y_1,y_2]\}\cup\mathcal{G}_1\cup\mathcal{G}_2).
 \end{equation}
 More specifically, $\Holp{\{[y_0,y_1,y_2]\}\cup\mathcal{G}_1 \mid \{[1,0,0,1]\}\cup\mathcal{G}_2}$ is \sP-hard unless:
 \begin{itemize}
  \item $\{[y_0,y_1,y_2]\}\cup\mathcal{G}_1\cup\mathcal{G}_2\subseteq\avg{\cE}$, or
  \item $\{[y_0,y_1,y_2]\}\cup\mathcal{G}_1\cup\mathcal{G}_2\subseteq\mathcal{A}$,
 \end{itemize}
 in which cases the problem is in \FP.
\end{thm}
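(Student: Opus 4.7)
The plan is to show the Turing equivalence
\[
 \Holp{\{[y_0,y_1,y_2]\}\cup\mathcal{G}_1 \mid \{[1,0,0,1]\}\cup\mathcal{G}_2} \equiv_T \csp(\mathcal{H}),
\]
where $\mathcal{H} := \{[y_0,y_1,y_2]\}\cup\mathcal{G}_1\cup\mathcal{G}_2$, and then read off the dichotomy from Theorem~\ref{thm:csp}. The direction $\Holp{\cdots}\leq_T \csp(\mathcal{H})$ is immediate: $[1,0,0,1]$ is the ternary equality $=_3$, which is always freely available in $\csp$, so a bipartite signature grid over these signatures is already a legitimate $\csp(\mathcal{H})$ instance once the bipartition is forgotten.

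For the reverse direction, the task is to realise every equality signature $=_k$ (for $k\geq 1$) as the effective signature of a gadget over the given bipartite signature set, on both the left and the right side. Once these are available, any $\csp(\mathcal{H})$ instance can be transcribed into a bipartite signature grid by placing $\mathcal{G}_1$- and $[y_0,y_1,y_2]$-constraints on the left, $\mathcal{G}_2$-constraints on the right, and splicing equality gadgets in on the appropriate side wherever a variable is shared between constraints of different types. I would realise these equalities by polynomial interpolation over chain gadgets: a chain of $n$ left-side $[y_0,y_1,y_2]$-bridges linking $n+1$ right-side copies of $=_3$ produces a family of effective signatures controlled by the $n$-th power of the $2\times 2$ matrix $M$ with entries $M_{ij}=y_{i+j}$ (the $=_3$ vertices collapsing their three indices), and varying $n$ yields enough linearly independent signatures to isolate each $=_k$ by a Vandermonde argument. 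The $\omega$-normalisation and non-degeneracy hypotheses on $[y_0,y_1,y_2]$ are precisely what make this interpolation non-singular: non-degeneracy keeps $M$ invertible, and $\omega$-normalisation excludes the primitive $3t$-th roots of unity (with $\gcd(t,3)=1$) that would otherwise cause the powers of $M$ to commute with $=_3$ up to a scalar, making the Vandermonde system singular (recall the $\diag(1,\omega)$-invariance of $=_3$ underlying~\eqref{eq:normalisation}).

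The main obstacle is the degenerate-looking case $y_0=y_2=0$, when $M$ is purely off-diagonal and $M^2$ is a scalar multiple of the identity: the chain then produces only two linearly independent signatures no matter how long it is, and the interpolation collapses. This is exactly why the theorem assumes an $\omega$-normalised unary $[a,b]\in\mathcal{G}_1$ with $ab\neq 0$ in this sub-case: capping one end of the chain with $[a,b]$ twists the product into a matrix whose powers again span a large enough space, and the $\omega$-normalisation of $[a,b]$ ensures once more that no problematic roots of unity occur. With the equivalence to $\csp(\mathcal{H})$ in hand, the dichotomy is immediate from Theorem~\ref{thm:csp}: the problem is in \FP{} iff $\mathcal{H}\subseteq\mathcal{A}$ or $\mathcal{H}\subseteq\avg{\cE}$, and is \sP-hard otherwise, exactly as stated.
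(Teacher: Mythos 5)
The first thing to note is that the paper itself contains no proof of this statement: Theorem~\ref{thm:GHZ-state} is imported directly from \cite{cai_holant_2012} and is only used as a black box, so there is no in-paper argument to compare yours against. Judged on its own terms, your sketch does follow the general strategy of the cited work: the direction $\Holp{\{[y_0,y_1,y_2]\}\cup\mathcal{G}_1 \mid \{=_3\}\cup\mathcal{G}_2}\leq_T\csp(\{[y_0,y_1,y_2]\}\cup\mathcal{G}_1\cup\mathcal{G}_2)$ is indeed immediate, and the substance is the converse simulation of all equalities by gadgets and interpolation built from chains of $=_3$ vertices joined by $[y_0,y_1,y_2]$, with the unary $[a,b]$ rescuing the case $y_0=y_2=0$ (for instance, plugging $[a,b]$ into the third edge of a $=_3$ whose other two edges carry copies of $[0,y_1,0]$ yields the weighted equality $[b,0,a]$ as a left-side binary gadget). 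Minor quibble: the chain of $n$ bridges linking $n+1$ copies of $=_3$ is a right-side gadget of arity $n+3$, not a binary signature equal to $M^n$, and the interpolation argument works instance-wise rather than by ``isolating each $=_k$'' separately; but this is a matter of exposition.

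The genuine gap is your claim that $\omega$-normalisation is ``precisely what makes this interpolation non-singular''. It is not: $\omega$-normalisation only removes the redundancy coming from the stabiliser $\diag(1,\omega)$ of $=_3$ used in \eqref{eq:normalisation}, i.e.\ it excludes ratios $y_2/y_0$ that are primitive $(3t)$-th roots of unity with $\gcd(t,3)=1$, because exactly those can be twisted away; it does not exclude other roots of unity. For example $[1,0,-1]$ and $[1,0,i]$ are $\omega$-normalised and non-degenerate, yet the matrix $M=\left(\begin{smallmatrix}y_0&y_1\\y_1&y_2\end{smallmatrix}\right)$ has eigenvalue ratio $-1$, respectively $i$, so the powers of $M$ -- and hence your chain gadgets -- cycle after finitely many steps, the Vandermonde system you invoke is singular, and the interpolation as described collapses, even though the theorem still asserts the equivalence (and must deliver \sP-hardness whenever, say, $\mathcal{G}_1$ is not affine). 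Any complete proof has to branch on the eigenvalues of $M$ and handle the remaining root-of-unity cases by separate constructions, which is exactly what the case analysis in \cite{cai_holant_2012} does; that analysis, together with a precise specification of the gadget family and of how a whole $\csp$ instance is recovered from polynomially many Holant evaluations, is missing from your proposal. The same vagueness affects the $y_0=y_2=0$ sub-case: ``capping one end of the chain with $[a,b]$'' does not by itself produce a usable left-side binary family, and the $\omega$-normalisation of $[a,b]$ again fails to rule out all problematic ratios $a/b$, so this case too needs the finer treatment of the cited paper.
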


Any signature reducible to $[1,1,0,0]$ via holographic transformations can be written as $[x_0,x_1,x_2,x_3]$ with \cite{cai_holant_2009}:
\begin{itemize}
 \item $x_k=Ak\alpha^{k-1}+B\alpha^k$, where $A\neq 0$, or
 \item $x_k=A(3-k)\alpha^{2-k}+B\alpha^{3-k}$, where $A\neq 0$.
\end{itemize}
For $\alpha=0$, $k\alpha^{k-1}$ is considered to be 1 if $k=1$ and zero for the other values of $k$.
The second case above is equivalent to the first one under a relabelling of the inputs $0\leftrightarrow 1$.
It thus suffices to consider the first case without loss of generality.
This yields the following result.

\begin{lem}[\cite{cai_holant_2012}]\label{lem:W-state}
 Let $x_k = Ak\alpha^{k-1} + B\alpha^k$, where $A\neq 0$ and $k=0,1,2,3$. $\Holp{[x_0,x_1,x_2,x_3]}$ is \sP-hard unless $\alpha=\pm i$, in which case the problem is in \FP.
\end{lem}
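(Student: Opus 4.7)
My plan is to exploit the holographic equivalence between $[x_0,x_1,x_2,x_3]$ and $[1,1,0,0]$ already recorded in the discussion preceding the lemma. After rescaling so that $B=1$ (which does not affect Holant complexity), a direct check shows $M^{\otimes 3}\ket{[1,1,0,0]}=\ket{[x_0,x_1,x_2,x_3]}$ for the lower-triangular matrix $M=\begin{pmatrix}1 & 0\\ \alpha & A\end{pmatrix}$. Since Holant is preserved under general holographic reduction only in the bipartite setting, I first rewrite $\Holp{[x_0,x_1,x_2,x_3]}=\Holp{[x_0,x_1,x_2,x_3]\mid\{=_2\}}$ by inserting $=_2$ on every edge, and then apply Valiant's Holant Theorem with the transformation $N=M^{-1}$ to obtain
\begin{equation*}
  \Holp{[x_0,x_1,x_2,x_3]} \equiv_T \Holp{[1,1,0,0] \mid \{g\}},
\end{equation*}
where $g=(M^T)\circ(=_2)$. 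A short calculation yields $g=[1+\alpha^2,\,\alpha A,\,A^2]$.

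The dichotomy on $\alpha$ is now visible in the weight-$0$ entry $1+\alpha^2$ of $g$: when $\alpha=\pm i$ this entry vanishes, giving $g=[0,\pm iA,A^2]$, whereas otherwise $g$ is fully non-zero and non-degenerate. For the tractability direction ($\alpha=\pm i$), I would argue that the constraint $g_0=0$ forbids edge-assignments that leave both endpoints of a right-side vertex at zero; combined with the ``at most one $1$'' constraint of $[1,1,0,0]$ on the left side, the resulting counting problem can be placed into a known tractable class, for instance after transforming to the affine class $\cA$ (Theorem~\ref{thm:csp}) or to $\avg{\cE}$ via a further holographic move, yielding a polynomial-time algorithm.

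For the hardness direction ($\alpha\ne\pm i$), $g$ is fully non-degenerate alongside the non-degenerate ternary $[1,1,0,0]$. I would use gadget constructions---for example a self-loop on the ternary signature, producing the unary $[x_0+x_2,\,x_1+x_3]$---together with polynomial interpolation to enrich the available signature set, and then perform a further holographic transformation converting $[1,1,0,0]$ into the GHZ-like $[1,0,0,1]$ in order to invoke Theorem~\ref{thm:GHZ-state}. That theorem reduces the problem to a \csp{} whose only tractable cases are $\avg{\cE}$ and $\cA$, and a direct case analysis shows that for $\alpha\ne\pm i$ the transformed signatures lie in neither. The main obstacle I expect is establishing hardness uniformly over the two-parameter family $(A,\alpha)$ with $\alpha\ne\pm i$: for special values of $A$, intermediate gadget-generated signatures could coincidentally fall into a tractable class, so ruling out such subcases requires careful interpolation arguments in the spirit of the tools cited in Section~\ref{s:Holant_problem}.
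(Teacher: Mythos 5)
The paper itself offers no proof of this lemma --- it is quoted directly from \cite{cai_holant_2012} --- so your argument has to stand on its own, and while its first step is sound, both halves of the case analysis then fail for the same reason. The opening reduction is fine: with $B$ rescaled to $1$ one indeed has $[x_0,x_1,x_2,x_3]=M\circ[1,1,0,0]$ for $M=\begin{pmatrix}1&0\\ \alpha&A\end{pmatrix}$, and inserting $=_2$ on every edge followed by Valiant's theorem gives $\Holp{[x_0,x_1,x_2,x_3]}\equiv_T\Holp{[1,1,0,0]\mid[1+\alpha^2,\alpha A,A^2]}$. But from there you rely on transformations that would have to change the entanglement class of a ternary signature, which is impossible: holographic transformations are SLOCC operations, $[1,1,0,0]$ is a W-class state and $[1,0,0,1]$ is GHZ-class, so there is no ``further holographic transformation converting $[1,1,0,0]$ into $[1,0,0,1]$'' and Theorem~\ref{thm:GHZ-state} can never be invoked for this family. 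That removes the engine of your hardness direction; a self-loop gadget and an appeal to unspecified interpolation do not replace it. The quoted result that actually fits W-type ternaries is Theorem~\ref{thm:W-state}, applied to $\Holp{[1+\alpha^2,\alpha A,A^2]\mid[1,1,0,0]}$ (its tractability condition $[0,*,*]$ is precisely $1+\alpha^2=0$, i.e.\ $\alpha=\pm i$), but that theorem comes from the same source as the present lemma, so leaning on it risks circularity; a self-contained proof needs the Cai--Lu--Xia interpolation machinery, ultimately reducing from problems such as counting matchings in $3$-regular graphs, which you only gesture at.

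The tractability direction has the same structural flaw. For $\alpha=\pm i$ the signature is still W-class, and no genuinely entangled three-qubit affine (stabilizer) state or element of $M\circ\avg{\cE}$ is W-class, so neither Theorem~\ref{thm:csp} nor a transformation into $\avg{\cE}$ can yield the polynomial-time algorithm you sketch. The correct tractable class --- pointed out in the paper immediately after the lemma --- is $K\circ\cM$ (resp.\ $KX\circ\cM$): for $\alpha=\pm i$ one has $[x_0,x_1,x_2,x_3]\in K\circ\cM$ (resp.\ $KX\circ\cM$), hence $\Holp[*]{\{[x_0,x_1,x_2,x_3]\}}$ is in \FP{} by the fourth case of Theorem~\ref{thm:Holant-star}, and a fortiori so is the plain Holant problem. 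Finally, a small parametric gap: rescaling to $B=1$ is impossible when $B=0$, where the signature is $A[0,1,2\alpha,3\alpha^2]$, an SLOCC image of $[0,1,0,0]$ rather than of $[1,1,0,0]$, so that case needs separate (easy) treatment.
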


Note that the case $\alpha=\pm i$ corresponds exactly to $[x_0,x_1,x_2,x_3]\in K\circ\cM$ or $[x_0,x_1,x_2,x_3]\in KX\circ\cM$.

In the case where a binary symmetric signature is present, we have:

\begin{thm}[\cite{cai_holant_2012}]\label{thm:W-state}
 $\Holant([y_0,y_1,y_2]|[x_0,x_1,x_2,x_3])$ is \sP-hard unless $[x_0,x_1,x_2,x_3]$ and $[y_0,y_1,y_2]$ satisfy one of the following conditions, in which case the problem is in \FP:
 \begin{itemize}
  \item $[x_0,x_1,x_2,x_3]$ is degenerate, or
  \item there is a 2 by 2 matrix $M$ such that:
   \begin{itemize}
    \item $[x_0,x_1,x_2,x_3]=M\circ[1,0,0,1]$ and $(M^T)^{-1}\circ[y_0,y_1,y_2]$ is in $\mathcal{A}\cup\mathcal{P}$,
    \item $[x_0,x_1,x_2,x_3]=M\circ[1,1,0,0]$ and $(M^T)^{-1}\circ[y_0,y_1,y_2]$ is of the form $[0,*,*]$,
    \item $[x_0,x_1,x_2,x_3]=M\circ[0,0,1,1]$ and $(M^T)^{-1}\circ[y_0,y_1,y_2]$ is of the form $[*,*,0]$,
   \end{itemize}
 \end{itemize}
 with $*$ denoting an arbitrary complex number.
\end{thm}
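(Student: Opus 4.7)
The plan is to split on whether $[x_0,x_1,x_2,x_3]$ is degenerate, then for the non-degenerate case reduce the ternary signature to one of three canonical forms via a holographic transformation and invoke the existing dichotomies quoted as Theorem \ref{thm:GHZ-state} and Lemma \ref{lem:W-state}.

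First, if $[x_0,x_1,x_2,x_3]$ is degenerate, then $\ket{[x_0,x_1,x_2,x_3]}=\ket{u_1}\otimes\ket{u_2}\otimes\ket{u_3}$ for some unary signatures $u_i$; each ternary vertex then decouples into three independent unary factors along its three incident edges, and the Holant sum factorises as a product of local edge contributions, each computable in constant time. This yields the first tractable case.

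Assume now $[x_0,x_1,x_2,x_3]$ is non-degenerate. The discussion preceding Lemma \ref{lem:W-state} provides, for any such signature, an invertible $2\times 2$ matrix $M$ with $[x_0,x_1,x_2,x_3]=M\circ[1,0,0,1]$ or $M\circ[1,1,0,0]$ (the case $M\circ[0,0,1,1]$ follows from the bit-flip $0\leftrightarrow 1$). Apply Valiant's Holant Theorem (Theorem \ref{thm:Valiant_Holant}) to the bipartite signature grid: this transforms the ternary vertex back to its canonical form while replacing $[y_0,y_1,y_2]$ by $(M^T)^{-1}\circ[y_0,y_1,y_2]$ on the other partition. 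It then remains to characterise the three resulting binary-plus-canonical-ternary problems.

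For the GHZ case $[1,0,0,1]$, invoke Theorem \ref{thm:GHZ-state} with $\mathcal{G}_1=\mathcal{G}_2=\emptyset$, noting that the $\omega$-normalisation hypothesis can be achieved by an additional diagonal holographic transformation stabilising $[1,0,0,1]$ as in equation \eqref{eq:normalisation}, and that the degenerate sub-case of $(M^T)^{-1}\circ[y_0,y_1,y_2]$ is absorbed into the product-signature class $\mathcal{P}$. The resulting reduction to $\csp$ together with Theorem \ref{thm:csp} yields tractability exactly when $(M^T)^{-1}\circ[y_0,y_1,y_2]\in\mathcal{A}\cup\mathcal{P}$. For the W case $[1,1,0,0]$, the tractable instances should be precisely those where the signature-set factorises into a known Holant$^*$ tractable class: when $(M^T)^{-1}\circ[y_0,y_1,y_2]=[0,*,*]$, a direct computation shows the whole signature set lies (after a further holographic transformation) in $K\circ\mathcal{M}$ or $KX\circ\mathcal{M}$, placing it in the corresponding tractable case of Theorem \ref{thm:Holant-star}; the $[*,*,0]$ condition and the $[0,0,1,1]$ canonical form follow by the same argument up to the bit-flip symmetry.

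The main obstacle is the hardness direction in the W case: for every $(M^T)^{-1}\circ[y_0,y_1,y_2]$ \emph{not} of the form $[0,*,*]$ (respectively $[*,*,0]$), one must construct a gadget whose effective ternary signature $[x_0',x_1',x_2',x_3']=A k\alpha^{k-1}+B\alpha^k$ has $\alpha\notin\{\pm i\}$ with $A\neq 0$, so that Lemma \ref{lem:W-state} delivers \sP-hardness. The natural attempt is a two-vertex gadget attaching the given binary signature to one leg of the W-state ternary, producing a new ternary signature whose parameter $\alpha$ can be read off from the coefficients $y_0,y_1,y_2$; verifying that the excluded form $[0,*,*]$ is precisely the locus $\alpha=\pm i$, across all algebraic sub-cases, is the heart of the argument, and polynomial interpolation may be required to extract intermediate signatures when the single gadget is insufficient.
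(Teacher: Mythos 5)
You should first note that the paper does not prove this statement at all: Theorem \ref{thm:W-state} is imported verbatim from \cite{cai_holant_2012}, so your proposal is competing with the original proof of Cai \etal, which is a substantial argument combining holographic reductions, gadget constructions and polynomial interpolation. Your skeleton (dispose of the degenerate case; reduce the non-degenerate ternary signature to $[1,0,0,1]$, $[1,1,0,0]$ or $[0,0,1,1]$; handle the GHZ case via Theorem \ref{thm:GHZ-state}) does match the broad structure of that proof, but the proposal leaves genuine gaps in both directions of the W case and in the boundary cases of the GHZ case, so it is not a proof.

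The central gap is the W-case hardness, which you yourself flag as ``the heart of the argument'': showing \sP-hardness of $\Holp{\{[y_0',y_1',y_2']\}\mid\{[1,1,0,0]\}}$ for every binary signature not of the form $[0,*,*]$ is exactly the difficult content of \cite{cai_holant_2012}, and it is not supplied. Moreover, the route you sketch does not type-check: Lemma \ref{lem:W-state} concerns the \emph{plain} Holant of a ternary signature, i.e.\ $\Holp{\{=_2\}\mid\{[x_0',x_1',x_2',x_3']\}}$, and in the problem at hand binary equality is precisely what is unavailable; attaching the given binary signature to one leg of $[1,1,0,0]$ yields a fragment with dangling edges on both sides of the bipartition, not a right-side ternary gadget, so Lemma \ref{lem:W-state} cannot be invoked until $=_2$ (or a suitable family of binary signatures) has been simulated or interpolated --- this is where the interpolation machinery of the original proof is indispensable, not an optional fallback. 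Your W-case tractability justification is also wrong: no holographic transformation can place $\{[0,*,*]\}\cup\{[1,1,0,0]\}$ inside $K\circ\cM$, $KX\circ\cM$ or any other \textsc{Holant}$^*$-tractable class, since $[1,1,0,0]$ has $\alpha=0$ and $\Holant([1,1,0,0])$ alone is already \sP-hard by Lemma \ref{lem:W-state}; the actual reason for tractability is a direct computation (e.g.\ $[1,1,0,0]$ allows at most one incident $1$ per ternary vertex while $[0,*,*]$ forces at least one $1$ per binary vertex, and $|E|=\tfrac32|V|>|V|$ makes the Holant vanish on every instance). Finally, in the GHZ case, invoking Theorem \ref{thm:GHZ-state} with $\cG_1=\cG_2=\emptyset$ does not cover the sub-cases excluded by its hypotheses --- a degenerate transformed binary signature, and the case $[0,*,0]$, where the theorem requires an auxiliary $\omega$-normalised unary signature with $ab\neq 0$ --- both are tractable but need separate arguments, and the identification of the $\avg{\cE}$-type tractable class with the class $\mathcal{P}$ appearing in the statement is asserted rather than checked.
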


Hence, to show that a Holant problem is \sP-hard, it suffices to construct a ternary symmetric non-degenerate signature that falls into one of the \sP-hard cases of one of the above theorems when combined with the binary equality signature; or a ternary and a non-trivial binary signature that do.
Yet even that gadget construction is not in general easy or necessarily even possible.
We therefore introduce a new family of Holant problems, motivated by results from quantum theory, and explicitly designed to enable the construction of ternary symmetric non-degenerate signatures.

\section{The quantum state perspective on signature grids}
\label{s:quantum_states}

In Section \ref{s:vector_perspective}, we introduced the idea of considering signatures as complex vectors.
This perspective is useful for proving Valiant's Holant Theorem, which is at the heart of the theory of Holant problems.

But the vector notation has further advantages: it gives a connection to the theory of quantum computation.
In quantum computation and quantum information, the basic system of interest is a \emph{qubit} (quantum bit), which takes the place of the usual bit in standard computer science.
The state of a qubit is described by a vector\footnote{Strictly speaking, vectors only describe \emph{pure} quantum states: there are also \emph{mixed} states, which need to be described differently; but we do not consider those here.} in $\CC^2$.
The state of $n$ qubits is described by a vector in:
\begin{equation}
  \left(\CC^2\right)\t{n} := \underbrace{\CC^2 \otimes \CC^2 \otimes \ldots \otimes \CC^2}_{n\text{ copies}}.
\end{equation}
Now, $\left(\CC^2\right)\t{n}$ is isomorphic to $\CC^{2^n}$.
Thus, the vector associated with an $n$-ary signature can be considered to be a quantum state of $n$ qubits.
Quantum states are normally required to have norm 1, but for the methods used here, multiplication by a non-zero complex number does not make a difference, so we can work with states having arbitrary norms.

Let $\{\ket{0},\ket{1}\}$ be an orthonormal basis for $\CC^2$.
We call this the \emph{computational basis}.
The induced basis on $\left(\CC^2\right)\t{n}$ is labelled by $\{\ket{x}\}_{x\in\{0,1\}^n}$ as a short-hand, e.g.\ we write:
\begin{equation}
  \ket{00\ldots 0} := \ket{0}\otimes\ket{0}\otimes\ldots\otimes\ket{0}.
\end{equation}
This is just the same as the basis introduced in Section \ref{s:vector_perspective}.

In quantum theory, the Hermitian adjoint or conjugate transpose of a vector $\ket{\psi}$ is written as $\bra{\psi}$.
The inner product of a vector $\ket{\psi}$ with a vector $\ket{\phi}$ is then denoted $\braket{\psi}{\phi}$.
The Holant picture does not involve any complex conjugation.
Nevertheless, we will sometimes use the notation $\braket{\psi}{\phi}$, either if the components of $\ket{\psi}$ are real so that complex conjugation leaves it invariant, or if $\ket{\psi}$ is an arbitrary element of a set that is closed under complex conjugation.

If $\ket{\phi}$ is a single-qubit state and $\ket{\psi}$ an $n$-qubit state, the notation $\bra{\phi}_k\ket{\psi}$ for $1\leq k\leq n$, denotes the signature associated with the gadget in which a node with signature $\ket{\phi}$ is connected to the $k$-th output of $\ket{\psi}$. 

From now on, we will use standard Holant terminology and quantum terminology interchangeably, and sometimes mix the two.

\subsection{Holographic transformations and SLOCC}

Holographic transformations also have a natural interpretation in quantum information theory: going from an $n$-qubit state $\ket{f}$ to $M\t{n}\ket{f}$, where $M$ is some invertible 2 by 2 matrix, is a `stochastic local operation with classical communication' or SLOCC \cite{bennett_exact_2000,dur_three_2000}.
This term means that if $n$ people hold a qubit each and the $n$ qubits are in the joint state $\ket{f}$, there is a procedure for transforming the state $\ket{f}$ to $M\t{n}\ket{f}$ using only:
\begin{itemize}
 \item local operations, i.e.\ operations that can be applied on one qubit without needing access to the others, and
 \item classical communication, i.e.\ communication of non-quantum information,
\end{itemize}
which succeeds with non-zero probability.

SLOCC operations are slightly more general than holographic transformations, in that the former do not need to be symmetric under interchange of the qubits.
The most general SLOCC operation on an $n$-qubit state is given by:
\begin{equation}
  M_1\otimes M_2\otimes \ldots \otimes M_n,
\end{equation}
where $M_1,M_2,\ldots M_n$ are invertible complex 2 by 2 matrices \cite{dur_three_2000}.

\subsection{Entanglement and its classification}
\label{s:entanglement}

One major difference between quantum theory and preceding theories of physics (known as `classical physics') is the possibility of \emph{entanglement} in states of multiple systems.

\begin{dfn}
 A state of multiple systems is \emph{entangled} if it cannot be written as a tensor product of states of individual systems.
\end{dfn}

\begin{ex}
 In the case of two qubits:
 \begin{equation}
  \ket{00}+\ket{01}+\ket{10}+\ket{11}
 \end{equation}
 is a product state -- it can be written as $(\ket{0}+\ket{1})\otimes(\ket{0}+\ket{1})$.
 On the other hand, consider the state:
 \begin{equation}
  \ket{00}+\ket{11}.
 \end{equation}
 It is impossible to find single-qubit states $\ket{f},\ket{g}\in\CC^2$ such that $\ket{f}\otimes\ket{g} = \ket{00}+\ket{11}$.
 Thus, $\ket{00}+\ket{11}$ is entangled.
\end{ex}

Where a state involves more than two systems, it is possible for some of the systems to be entangled with each other and for other systems to be in a product state with respect to the former.
Even when all systems are entangled with each other, when there are more than two subsystems, there are different ways in which this can happen.
We sometimes use the term \emph{genuinely entangled state} to refer to a state in which no subsystem is in a product state with the others.
For example, $\ket{000}+\ket{111}$ is genuinely entangled but $\ket{0}\otimes(\ket{00}+\ket{11})$ is not.

Entanglement is an important resource in quantum computation, where it has been shown that quantum speedups are impossible without the presence of unboundedly growing amounts of entanglement \cite{jozsa_role_2003}.
Similarly, it is a resource in quantum information theory.
There, the standard set-up for quantum tasks involves several parties sharing an entangled state.
Each party may perform arbitrary (physically-allowed) operations locally on her subsystem but does not have access to the other subsystems.
Additionally, the parties can communicate over a classical (i.e.\ non-quantum) channel \cite{nielsen_quantum_2010}.
Examples of such protocols include quantum teleportation \cite{bennett_teleporting_1993} and quantum key distribution \cite{ekert_quantum_1991}.

It therefore makes sense to introduce the following equivalence relation on entangled states.

\begin{dfn}
 Two $n$-qubit states are \emph{equivalent under SLOCC} if one can be transformed into the other using SLOCC.
 More formally: suppose $\ket{f}$ and $\ket{g}$ are two $n$-qubit states. Then $\ket{f} \sim_{SLOCC} \ket{g}$ if and only if there exist invertible complex 2 by 2 matrices $M_1,M_2,\ldots M_n$ such that:
 \begin{equation}
  \left(M_1\otimes M_2\otimes \ldots \otimes M_n\right) \ket{f} = \ket{g}.
 \end{equation}
\end{dfn}
The equivalence classes of this relation are called \emph{entanglement classes} or \emph{SLOCC classes}.

For two qubits, there is only one class of entangled states, i.e.\ all entangled two-qubit states are equivalent to $\ket{00}+\ket{11}$ under SLOCC.
For three qubits, there are two classes of genuinely entangled states \cite{dur_three_2000}.
The first one is called GHZ class, these are states that are equivalent under SLOCC to the GHZ state:
\begin{equation}
 \ket{GHZ} := \ket{000}+\ket{111}.
\end{equation}
The second one is called W class and the standard representative is the W state:
\begin{equation}
 \ket{W} := \ket{001}+\ket{010}+\ket{100}.
\end{equation}

For more than three qubits, there are infinitely many SLOCC classes.
It is possible to partition these into families which share similar properties.
Yet, so far, there is no consensus as to how to partition the classes: there are different schemes for partitioning even the four-qubit entanglement classes, yielding different families \cite{verstraete_four_2002,lamata_inductive_2007,backens_inductive_2016}.

It is sometimes useful to generalise the definitions of GHZ and W states to $n$-qubit states.
\begin{dfn}
 The generalised GHZ state on $n$ qubits is:
 \begin{equation}
  \ket{GHZ_n} := \ket{0}\t{n} + \ket{1}\t{n}.
 \end{equation}
 The generalised W state on $n$ qubits is:
 \begin{equation}
  \ket{W_n} := \begin{cases} \ket{1} & \text{if } n= 1 \\ \ket{0}\t{n-1}\otimes\ket{1} + \ket{0}\otimes\ket{W_{n-1}} & \text{if } n>1. \end{cases}
 \end{equation}
\end{dfn}

This makes the $n$-qubit generalised GHZ state equal to $\ket{00\ldots 0} + \ket{11\ldots 1}$, i.e.\ it is the state corresponding to the $n$-ary equality signature.
The $n$-qubit generalised W state is:
\begin{equation}
 \ket{W_n} = \ket{00\ldots 01} + \ket{010\ldots 0} + \ldots + \ket{100\ldots 0},
\end{equation}
i.e.\ it corresponds to the indicator function for inputs of Hamming weight 1.

We sometimes drop the word `generalised' when talking about generalised GHZ or W states.
It should generally be clear from context whether or not we mean the three-qubit state specifically.

\subsection{Identifying types of three-qubit entanglement}
\label{s:li_et_al}

Li \emph{et al.} \cite{li_simple_2006} give an equational procedure for identifying the entanglement class of a three-qubit state from the its coefficients in the computational basis.
We recap their results here.

Let $\ket{\psi}$ be a three-qubit state and write:
\begin{equation}
 \ket{\psi} = a_0\ket{000} + a_1\ket{001} + a_2\ket{010} + a_3\ket{011} + a_4\ket{100} + a_5\ket{101} + a_6\ket{110} + a_7\ket{111}.
\end{equation}
Then:
\begin{itemize}
 \item $\ket{\psi}$ is in the GHZ class if and only if:
  \begin{equation}\label{eq:GHZ_criterion}
   (a_0a_7 - a_2a_5 + a_1a_6 - a_3a_4)^2 - 4(a_2a_4-a_0a_6)(a_3a_5-a_1a_7) \neq 0.
  \end{equation}
 \item $\ket{\psi}$ is in the W class if and only if \eqref{eq:GHZ_criterion} is not satisfied and additionally the following holds:
  \begin{equation}\label{eq:W_criterion}
   (a_0a_3\neq a_1a_2 \vee a_5a_6\neq a_4a_7) \wedge (a_1a_4\neq a_0a_5 \vee a_3a_6\neq a_2a_7) \wedge (a_3a_5\neq a_1a_7 \vee a_2a_4 \neq a_0a_6).
  \end{equation}
\end{itemize}
There are also inequalities for identifying the not-genuinely-entangled classes, which we do not use in this paper.

\subsection{The inductive entanglement classification}
\label{s:inductive_classification}

One approach for partitioning the entanglement classes of multi-qubit states into families  is the inductive entanglement classification by Lamata \emph{et al.} \cite{lamata_inductive_2006}.
In this scheme, the partition of $n$-qubit entangled states relies on the partition of $(n-1)$-qubit entangled states; hence the name.
The idea is the following.

Consider an $n$-qubit genuinely entangled state $\ket{\psi}$.
This state can be written as:
\begin{equation}
 \ket{\psi} = \ket{0}\ket{\phi_0} + \ket{1}\ket{\phi_1},
\end{equation}
where $\ket{\phi_0}, \ket{\phi_1}$ are $(n-1)$-qubit states.
As $\ket{\psi}$ is entangled, $\ket{\phi_0}$ and $\ket{\phi_1}$ must be linearly independent.
Families of entanglement classes can now be defined according to the types of entangled vectors found in $\fW = \spans\{\ket{\phi_0}, \ket{\phi_1}\}$.

In the three-qubit case, the GHZ class contains the states for which $\fW$ contains two linearly independent product vectors while the W class contains the states for which any basis of $\fW$ contains at least one entangled vector \cite{lamata_inductive_2006}.

In the four-qubit case, the inductive classification yields 10 families of genuinely entangled states \cite{lamata_inductive_2007, backens_inductive_2016}.
For each family, one can derive a set of representative states, each of which contains the smallest possible number of free parameters, such that any member of the family can be reduced to a representative state by SLOCC.

As an example, one of the four-qubit entanglement families is that of four-qubit GHZ states, which are of the form:
\begin{equation}
 (A\otimes B\otimes C\otimes D)(\ket{0000}+\ket{1111})
\end{equation}
for some complex invertible 2 by 2 matrices $A,B,C,D$.
Another family, usually labelled $\fW_{0_k\Psi,0_k\Psi}$, has representatives:
\begin{gather*}
 \ket{0000}+\ket{1100}+\lambda\ket{0011}+\mu\ket{1111} \quad\text{and} \\
 \ket{0000}+\ket{1100}+\lambda\ket{0001}+\lambda\ket{0010}+\mu\ket{1101}+\mu\ket{1110},
\end{gather*}
where $\lambda,\mu\in\CC$ with $\lambda\neq\mu$.
Any state in the family is equivalent under SLOCC to one of the representatives with appropriate values of $\lambda$ and $\mu$.

More information about the inductive entanglement classification is given in Appendix \ref{s:appendix}, where details of the classification of four- and five-qubit states are used.

\subsection{The existing results in the quantum picture}
\label{s:existing_quantum}

Several of the existing dichotomies have a very simple description in the quantum picture.

The tractable cases of the \textsc{Holant}$^*$ dichotomy (cf.\ Section \ref{s:Holant_star}) can be described as follows:
\begin{itemize}
 \item either there is no multipartite entanglement -- this corresponds to the case $\cF\subseteq\avg{\cT}$, or
 \item there is GHZ-type multipartite entanglement but it is impossible to produce W-type multipartite entanglement via gadgets -- this corresponds to the cases $\cF\subseteq\avg{O\circ\cE}$ or $\cF\subseteq\avg{K\circ\cE}$, or 
 \item there is W-type multipartite entanglement and it is impossible to produce GHZ-type multipartite entanglement via gadgets -- this corresponds to the case $\cF\subseteq\avg{K\circ\cM}$ or $\cF\subseteq\avg{KX\circ\cM}$.
\end{itemize}
By GHZ-type entanglement we mean states that are equivalent to generalised GHZ states under SLOCC, and similarly for W-type entanglement.

The tractable case of \csp{} (cf.\ Theorem \ref{thm:csp}) that does not appear in \textsc{Holant}$^*$ is also easy to describe in the quantum picture: in quantum theory, the states corresponding to affine signatures are known as \emph{stabilizer states} \cite{dehaene_clifford_2003}.
These states and the associated operations play an important role in the context of quantum error-correcting codes \cite{gottesman_heisenberg_1998} and are thus at the core of most attempts to build large-scale quantum computers \cite{devitt_quantum_2013}.
Nevertheless, the fragment of quantum theory consisting of stabilizer states and operations that preserve the set of stabilizer states is efficiently simulable on a classical computer \cite{gottesman_heisenberg_1998}; this result is known as the Gottesman-Knill theorem.

In other words, the Holant problem and quantum information theory are linked not only by quantum algorithms being an inspiration for holographic algorithms.
Instead, the known-to-be tractable signature sets of various Holant problems correspond directly to sets of states that are of independent interest in quantum computation and quantum information theory.

\section{\textsc{Holant}$^+$}
\label{s:Holant_plus}

The new family of Holant problems, called \textsc{Holant}$^+$, sits in between \textsc{Holant}$^*$ and \textsc{Holant}$^c$: it has only a small number of free signatures, which are all unary.
Yet, using results from quantum theory, these can be shown to be sufficient for constructing the gadgets required to use Theorem \ref{thm:GHZ-state} and Lemma \ref{lem:W-state}.

Formally, this variant of the Holant problem is defined as follows:
\begin{equation}
 \Holp[+]{\cF} = \Holp{\cF\cup\{\ket{0},\ket{1},\ket{+},\ket{-}\}},
\end{equation}
where $\ket{+}=\ket{0}+\ket{1}$ corresponds to the `unary equality function' and $\ket{-}=\ket{0}-\ket{1}$ is a vector that is orthogonal to $\ket{+}$.
In quantum theory, the set $\{\ket{+},\ket{-}\}$ (or their normalised equivalents) are known as the \emph{Hadamard basis}, since they are related to the computational basis vectors by a Hadamard transformation: $\{\ket{+},\ket{-}\}\doteq H\circ\{\ket{0},\ket{1}\}$, where `$\doteq$' means equality up to scalar factor and:
\begin{equation}
 H = \frac{1}{\sqrt{2}}\begin{pmatrix}1&1\\1&-1\end{pmatrix}.
\end{equation}

\subsection{Why these free signatures?}

The definition of \textsc{Holant}$^+$ is motivated by the following result from quantum theory, which we give here in updated notation.

\begin{thm}[\cite{popescu_generic_1992},\cite{gachechiladze_addendum_2016}]\label{thm:popescu-rohrlich}
 Let $\ket{\Psi}$ be an $n$-system entangled state. For any two of the $n$ systems, there exists a projection, onto a tensor product of states of the other $(n-2)$ systems, that leaves the two systems in an entangled state.
\end{thm}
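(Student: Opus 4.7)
The plan is to prove the contrapositive: if every projection of the other $n-2$ systems onto a product state of single-qubit states leaves the two chosen systems $A, B$ in a product state, then $\ket{\Psi}$ must factorise across some nontrivial bipartition and so cannot be genuinely $n$-partite entangled. I would begin by writing $\ket{\Psi}=\sum_{a,b\in\{0,1\}}\ket{ab}_{AB}\otimes\ket{\chi_{ab}}$, with $\ket{\chi_{ab}}$ supported on the other $n-2$ systems, so that the projection onto $\ket{\phi_1}\otimes\cdots\otimes\ket{\phi_{n-2}}$ produces on $AB$ a $2\times 2$ matrix whose $(a,b)$ entry $M_{ab}(\phi)$ is the bilinear pairing of $\ket{\chi_{ab}}$ with $\ket{\phi_1}\otimes\cdots\otimes\ket{\phi_{n-2}}$. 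The projected state on $AB$ is a product state if and only if $\det M(\phi)=M_{00}M_{11}-M_{01}M_{10}$ vanishes at $\phi$.

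Next I would exploit multilinearity. Each $M_{ab}$ is linear in each $\phi_i$, so $\det M$ is a multihomogeneous polynomial of multidegree $(2,2,\ldots,2)$ in the single-qubit parameters $(\alpha_i,\beta_i)$ of the $\phi_i$. Vanishing on all product states forces $\det M\equiv 0$ as a polynomial in the UFD $R=\CC[\alpha_i,\beta_i : 1\le i\le n-2]$, yielding the identity $M_{00}M_{11}=M_{01}M_{10}$ in $R$. If some $M_{ab}$ is zero, this identity forces a second $M_{ab}$ also to vanish, and the resulting zero row or column of the coefficient matrix lets one pull out a $\ket{0}$ or $\ket{1}$ factor from system $A$ or $B$, exhibiting $\ket{\Psi}$ as a product across that system and contradicting genuine entanglement. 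Otherwise every $M_{ab}$ is a nonzero multilinear polynomial and, by unique factorisation, each decomposes into multilinear irreducibles whose supports partition the subset of indices in which that $M_{ab}$ actually appears.

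The hard part is translating the matched factorisations on the two sides of $M_{00}M_{11}=M_{01}M_{10}$ into a tensor-product decomposition of $\ket{\Psi}$ itself. The aim is to show that, after relabelling, $\{1,\ldots,n-2\}$ splits into a nontrivial pair of blocks $S$ and $S^c$ such that every $\ket{\chi_{ab}}$ factorises compatibly across $S \mid S^c$; this simultaneous factorisation of all four $\ket{\chi_{ab}}$ then lifts to a product decomposition of $\ket{\Psi}$ across the bipartition $(AB\cup S) \mid S^c$, again contradicting genuine $n$-partite entanglement. This combinatorial case analysis of multilinear factorisations in a UFD---carefully handling the possibility that different $M_{ab}$ have different factor supports, and in particular ensuring that the resulting block $S^c$ is nonempty---is precisely the gap in the original \cite{popescu_generic_1992} argument that the addendum \cite{gachechiladze_addendum_2016} fills, and I would expect essentially all of the technical work to go there, perhaps by induction on $n-2$ together with careful bookkeeping of which variable pairs appear in each irreducible factor.
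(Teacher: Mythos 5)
First, a point of comparison: the paper does not prove Theorem \ref{thm:popescu-rohrlich} at all --- it is imported from \cite{popescu_generic_1992} as corrected by \cite{gachechiladze_addendum_2016} --- so there is no in-paper proof for your argument to match, and it has to stand on its own. It does not yet. The setup is sound: writing $\ket{\Psi}=\sum_{a,b}\ket{ab}_{AB}\otimes\ket{\chi_{ab}}$, observing that the entries $M_{ab}(\phi)$ of the projected $2\times 2$ coefficient matrix are multilinear in the projector parameters, that degeneracy of every projected state forces $\det M\equiv 0$ and hence $M_{00}M_{11}=M_{01}M_{10}$ in $\CC[\alpha_i,\beta_i]$, and the case where some $M_{ab}\equiv 0$ is handled correctly (a vanishing row or column pulls a basis-state factor off $A$ or $B$). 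But the entire substance of the theorem lies in the step you explicitly defer --- converting the matched factorisations in the UFD into a tensor factorisation of $\ket{\Psi}$ --- and for that step you only announce an aim, with the admission that essentially all the technical work would go there.

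Moreover, that aim is false as stated. Take $\ket{\Psi}=(\ket{0}+\ket{1})_A\otimes\left(\ket{0}_B\ket{0\cdots 0}+\ket{1}_B\ket{1\cdots 1}\right)$, a GHZ state on $B$ and the $n-2$ projected qubits: all four $\ket{\chi_{ab}}$ are non-zero and $\det M\equiv 0$, yet there is no non-empty $S^c$ such that $\ket{\Psi}$ factorises across $(AB\cup S)\mid S^c$ --- the only factorisation splits off $A$ alone. Similarly, for $\ket{\Psi}=(\ket{00}_{A3}+\ket{11}_{A3})\otimes(\ket{00}_{B4}+\ket{11}_{B4})$ all $\ket{\chi_{ab}}$ are non-zero, $\det M\equiv 0$, and the only factorisation puts $A$ and $B$ on opposite sides. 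So in the all-non-zero case the correct target is simply ``$\ket{\Psi}$ is not genuinely entangled'', and the combinatorial analysis must also produce outcomes in which $A$ or $B$ (possibly together with some of the projected systems) separates from the other; aiming only at a bipartition with $A$ and $B$ on the same side cannot succeed. Until that analysis is actually carried out, and aimed at the right disjunction of conclusions, the proposal is a framework rather than a proof. Two smaller points: the theorem as used in the paper needs genuine entanglement as its hypothesis (your contrapositive rightly targets that, even though the quoted wording says only ``entangled''), and your argument is written for qubits whereas the theorem is stated for arbitrary systems --- harmless for the paper's application, but worth flagging.
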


The original proof of this statement in \cite{popescu_generic_1992} was flawed but it has recently been corrected \cite{gachechiladze_addendum_2016}.
The following corollary is not stated explicitly in either paper, but can be seen to hold by inspecting the proof in \cite{gachechiladze_addendum_2016}.

\begin{cor}
 Let $\ket{\Psi}$ be an $n$-qubit entangled state. For any two of the $n$ qubits, there exists a projection of the other $(n-2)$ qubits onto a tensor product of computational and Hadamard basis states that leaves the two qubits in an entangled state.
\end{cor}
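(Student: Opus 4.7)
The plan is to use Theorem \ref{thm:popescu-rohrlich} as a black box and upgrade it to the four-vector restriction by a low-degree polynomial argument. The theorem supplies \emph{some} single-qubit product projection on the non-designated qubits that leaves the designated pair (say qubits $1$ and $2$) entangled; the corollary asks that each single-qubit factor be drawn from $\{\ket{0},\ket{1},\ket{+},\ket{-}\}$. I will show that the polynomial detecting two-qubit entanglement of the projected state has low enough degree in each factor that it cannot vanish on all such discrete choices unless it is identically zero.

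First, set up the polynomial. Write a generic single-qubit vector as $\ket{p_k}=\alpha_k\ket{0}+\beta_k\ket{1}$ for $k=3,\ldots,n$, and expand $\bra{p_3\cdots p_n}\ket{\Psi}$ in the computational basis on qubits $1,2$. The resulting coefficients $\phi_{ab}$ are multilinear in the pairs $(\alpha_k,\beta_k)$, so the separability indicator $f:=\phi_{00}\phi_{11}-\phi_{01}\phi_{10}$ is a polynomial in $\{(\alpha_k,\beta_k)\}_{k=3}^n$ that is homogeneous of degree two in each pair. Theorem \ref{thm:popescu-rohrlich} says precisely that $f\not\equiv 0$.

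Next, prove by induction on $n$ that any nonzero polynomial of this shape is nonzero at some tuple whose $k$-th pair lies in $\{(1,0),(0,1),(1,1),(1,-1)\}$ for each $k=3,\ldots,n$. Fixing all pairs except that for qubit $n$, the polynomial specialises to $A\alpha_n^2+B\alpha_n\beta_n+C\beta_n^2$. Evaluating at the four candidate vectors gives $A$, $C$, $A+B+C$, and $A-B+C$ respectively; these four scalars vanish jointly iff $A=B=C=0$, i.e.\ iff $f$ itself vanishes for those fixed values of the other pairs. Hence at least one of the four coefficient polynomials in the remaining pairs is not identically zero, and the induction hypothesis applies to the $(n-1)$-qubit state obtained by projecting qubit $n$ onto the corresponding basis vector. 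The base case (no remaining variables) is the same binary-quadratic-form observation.

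I do not expect a serious obstacle. The only substantive ingredient is Theorem \ref{thm:popescu-rohrlich}; everything else is the quadratic-form observation above combined with routine induction, and symmetry under relabelling handles arbitrary choices of designated pair. An alternative route would be to reopen the proof in \cite{gachechiladze_addendum_2016} and verify directly that for qubits each projection it constructs can be chosen from the four-element set, but the polynomial argument is shorter and independent of the details of that proof.
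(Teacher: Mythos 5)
Your proposal is correct, but it is not the paper's argument: the paper gives no self-contained proof of this corollary at all — it simply notes that the statement ``can be seen to hold by inspecting the proof in \cite{gachechiladze_addendum_2016}'', i.e.\ it takes exactly the route you set aside at the end, reopening the corrected Popescu--Rohrlich proof and checking that the projectors constructed there may be drawn from the computational and Hadamard bases. Your black-box reduction works: the separability indicator $f=\phi_{00}\phi_{11}-\phi_{01}\phi_{10}$ is homogeneous of degree two in each pair $(\alpha_k,\beta_k)$; Theorem \ref{thm:popescu-rohrlich} gives $f\not\equiv 0$ (the issue of complex conjugation in the projection is immaterial, since conjugation permutes the single-qubit states and the four discrete projectors have real coefficients); and the specialisation $A\alpha^2+B\alpha\beta+C\beta^2$ vanishes at $(1,0)$, $(0,1)$, $(1,1)$, $(1,-1)$ only if $A=B=C=0$, so one qubit can be peeled off at a time — and, importantly, your induction is on the nonvanishing polynomial rather than on the corollary itself, so you never need the projected state to satisfy the hypotheses of Theorem \ref{thm:popescu-rohrlich} again. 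What your route buys: it is independent of the details of the external proof, and it in fact proves more — a nonzero binary quadratic form has at most two projective roots, so \emph{any} fixed set of three (a fortiori four) pairwise linearly independent single-qubit projectors per qubit suffices; this shows that the paper's remark that Hadamard-linked bases are ``crucial'' applies to the inspection-based justification, not to the statement itself. What the paper's route buys is brevity, and it is worth noting that your degree-counting trick is essentially the technique the paper itself uses later, in Lemma \ref{lem:binary_notin_KcircM}, where linear and quadratic polynomials in $\alpha\in\{\pm 1,\pm i\}$ are handled the same way.
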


In other words, Theorem \ref{thm:popescu-rohrlich} holds when the systems are restricted to qubits and the projectors are restricted to be products of computational and Hadamard basis states.
Here, it is crucial to have projectors taken from two bases that are linked by the Hadamard transformation -- the corollary works only in that case.

Using the inductive entanglement classification, we now extend this result to the following theorem, which is proved in Section \ref{s:proof:three-qubit-entanglement}.

\begin{thm}\label{thm:three-qubit-entanglement}
 Let $\ket{\Psi}$ be an $n$-qubit entangled state with $n\geq 3$. There exists some choice of three of the $n$ qubits and a projection of the other $(n-3)$ qubits onto a tensor product of computational and Hadamard basis states that leaves the three qubits in a genuinely entangled state.
\end{thm}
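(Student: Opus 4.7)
The plan is to proceed by induction on $n$, with small cases ($n = 3, 4, 5$) handled via the inductive entanglement classification from Section \ref{s:inductive_classification} and Appendix \ref{s:appendix}, and larger cases reduced by projecting a single qubit onto one of $\{\ket{0},\ket{1},\ket{+},\ket{-}\}$ at a time while preserving entanglement. The base case $n=3$ is essentially trivial under the natural reading that $\ket{\Psi}$ is in fact genuinely entangled (otherwise no three-qubit genuinely entangled state could be produced by the empty projection), so we may take no projection at all.

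For $n = 4$ and $n = 5$, I would iterate over each family of genuinely entangled states in the inductive classification (ten families on four qubits, as mentioned in Section \ref{s:inductive_classification}, and analogously for five qubits). For each family I would take the representative from the appendix and exhibit an explicit product of one or two projectors drawn from $\{\ket{0},\ket{1},\ket{+},\ket{-}\}$ on $n-3$ qubits that leaves the remaining three qubits in a genuinely entangled state, then verify the latter using the inequalities of Li \emph{et al.}\ (Section \ref{s:li_et_al}) or directly by computing the Schmidt ranks across each single-qubit bipartition. A subtlety here is that SLOCC preserves the entanglement classes but does not preserve the restricted basis $\{\ket{0},\ket{1},\ket{+},\ket{-}\}$, so one cannot reduce to the representative alone; instead, the check must cover every value of the free parameters in the representative, which is feasible because genuine three-qubit entanglement is an open condition expressible through the Li \emph{et al.}\ criterion.

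For the inductive step $n \geq 6$, pick a qubit $q$ across which $\ket{\Psi}$ is entangled and write $\ket{\Psi} = \ket{0}_q \ket{\phi_0} + \ket{1}_q \ket{\phi_1}$, with $\ket{\phi_0},\ket{\phi_1}$ linearly independent; the four projections of $q$ yield the states $\ket{\phi_0}, \ket{\phi_1}, \ket{\phi_0}+\ket{\phi_1}, \ket{\phi_0}-\ket{\phi_1}$, all lying in $\fW = \spans\{\ket{\phi_0},\ket{\phi_1}\}$. I would argue that at least one of these $(n-1)$-qubit states is entangled by contradiction: if all four were fully separable, $\fW$ would contain four product vectors in a very specific affine configuration, forcing $\ket{\Psi}$ into a restrictive structure (GHZ-type in the inductive classification of $\fW$) incompatible with genuine entanglement across $n \geq 6$ qubits. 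The inductive hypothesis then applies to the surviving entangled $(n-1)$-qubit state. The case of $\ket{\Psi}$ entangled but not genuinely so is reduced by factoring $\ket{\Psi}$ into genuinely entangled tensor components, passing to the component of largest arity, and projecting all qubits outside that component onto any basis state.

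The main obstacle is the case analysis for $n = 4$ and $n = 5$: the representative states contain free parameters, so a single projection rule typically does not work uniformly, and one must either split on parameter values or combine several candidate projections to cover every state in the family. A secondary delicate point is the genuinely-versus-not-genuinely-entangled distinction: the statement silently presupposes that $\ket{\Psi}$ contains some three-qubit genuinely entangled factor, and I expect the proof either to build this in from the start or to recover it from the way the theorem is invoked in the \textsc{Holant}$^+$ application of Section \ref{s:Holant_plus}.
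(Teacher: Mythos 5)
Your overall skeleton (induction on $n$, small cases via the inductive classification, projecting one qubit at a time) matches the paper's, but the inductive step for large $n$ has a genuine gap. Showing that not all four of $\ket{\phi_0},\ket{\phi_1},\ket{\phi_0}\pm\ket{\phi_1}$ can be fully separable only yields \emph{some} entangled $(n-1)$-qubit state, and "entangled" is too weak to feed back into your induction: the statement (and hence your inductive hypothesis) fails for states that are entangled but contain no genuinely entangled factor on three or more qubits -- e.g.\ a product of two Bell pairs on four qubits, where no single-qubit projection can ever create genuine three-qubit entanglement, and your fallback of "passing to the component of largest arity" passes to a two-qubit component. What must be excluded is the scenario in which \emph{all four} post-projection states are products of one- and two-qubit pieces, and your argument does not touch this. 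That exclusion is exactly where the paper invests its effort: for $n\geq 7$ it combines Theorem \ref{thm:popescu-rohrlich} with a pigeonhole count (at least five possible entanglement partners for qubit 1 versus only four projectors, Lemma \ref{lem:n_geq_7}), and for $n=6$ a finer count ($\binom{5}{2}=10$ required pairs versus at most $4\cdot 2=8$ Bell-pair slots, Lemma \ref{lem:n=6}); these counts fail below $n=6$, which is precisely why the four- and five-qubit cases need the classification. As a side remark, your justification of the weaker claim is also off: a span $\fW$ with two linearly independent product vectors is GHZ-type and perfectly compatible with genuine $n$-qubit entanglement (the generalised GHZ state itself is of this form); the correct route is that if \emph{three} pairwise independent vectors of a two-dimensional span are product, the whole span consists of product vectors, forcing $\ket{\Psi}$ to be non-genuinely entangled.

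The $n=4,5$ part of your plan also has a gap in how it treats the restricted projector set. A general member of a family is $(F_1\otimes\cdots\otimes F_n)\ket{\Theta}$ for arbitrary invertible local operators $F_i$ \emph{in addition to} the free parameters, so verifying the fixed projectors $\{\ket{0},\ket{1},\ket{+},\ket{-}\}$ against the representatives for all parameter values says nothing about $\ket{\Psi}$ itself; openness of the Li \emph{et al.}\ criterion does not bridge this, since the four available projector directions are specific, not generic. The paper's proof of Lemma \ref{lem:n=4} resolves this by absorbing $F_1$ into the projector -- the four fixed states become four pairwise linearly independent but otherwise \emph{arbitrary} projectors $\bra{\theta}F_1$ -- and discarding $F_2,\ldots,F_4$ as irrelevant to the entanglement class of the post-projection state; it then checks, family by family, that the failing projector directions are too few to exhaust four independent ones. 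Finally, for $n=5$ you propose iterating over the five-qubit families of the inductive classification, but no such list exists (the paper says so explicitly); Lemma \ref{lem:n=5} instead writes $\ket{\Psi}=\ket{0}\ket{\Phi_0}+\ket{1}\ket{\Phi_1}$ in the computational basis, cases on the four-qubit entanglement types of $\ket{\Phi_0},\ket{\Phi_1}$, and keeps the projectors in the fixed set. You would need both of these devices (or substitutes) for your plan to go through.
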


This result is stronger than Theorem \ref{thm:popescu-rohrlich} in that we construct entangled three-qubit states rather than two-qubit ones, but on the other hand we do not require the result to hold for all choices of three qubits: all we require is the existence of some choice of three qubits for which it does hold.

\subsection{The dichotomy theorem}

Using Theorem \ref{thm:three-qubit-entanglement}, we prove our main result, a dichotomy for \textsc{Holant}$^+$ applying to complex, not necessarily symmetric signatures.
This is the only dichotomy with these properties where only a finite number of signatures are assumed to be freely available, other than the dichotomy for \#\textsc{R$_3$-CSP} \cite{cai_complexity_2014}.

\begin{thm}\label{thm:main}
 Let $\cF$ be a set of complex signatures. $\Holp[+]{\cF}$ is in \FP{} if $\cF$ satisfies one of the following conditions:
 \begin{itemize}
  \item $\Holp[*]{\cF}$ is in \FP, or
  \item $\cF\subseteq\cA$.
 \end{itemize}
 In all other cases, the problem is \sP-hard.
\end{thm}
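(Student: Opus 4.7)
The tractability direction is straightforward. If $\Holp[*]{\cF}\in\FP$, then $\Holp[+]{\cF}\leq_T\Holp[*]{\cF}\in\FP$, because the four free unary signatures $\{\ket{0},\ket{1},\ket{+},\ket{-}\}$ form a subset of the set $\cU$ of all unary signatures. If instead $\cF\subseteq\cA$, then since the four free unaries are themselves affine (they are stabiliser states), $\cF\cup\{\ket{0},\ket{1},\ket{+},\ket{-}\}\subseteq\cA$, and $\cA$ already contains every equality signature; hence $\Holp[+]{\cF}\leq_T\csp(\cA)$, which lies in $\FP$ by Theorem \ref{thm:csp}.

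The real work is the hardness direction. Assume $\Holp[*]{\cF}$ is $\sP$-hard and $\cF\not\subseteq\cA$. The overall plan is to realise, over $\cF\cup\{\ket{0},\ket{1},\ket{+},\ket{-}\}$, a symmetric non-degenerate ternary signature (and, when required, an auxiliary symmetric binary signature) and then invoke Theorem \ref{thm:GHZ-state}, Lemma \ref{lem:W-state}, or Theorem \ref{thm:W-state} directly. Since $\Holp[*]{\cF}$ is hard, $\cF\not\subseteq\avg{\cT}$, so there exists a non-degenerate $\ket{f}\in\cF$ of arity $n\geq 3$. Theorem \ref{thm:three-qubit-entanglement} then says that one can pin $n-3$ inputs of $\ket{f}$ to states in $\{\ket{0},\ket{1},\ket{+},\ket{-}\}$ to obtain a gadget realising a genuinely entangled three-qubit signature $\ket{g}$, which is either GHZ-class or W-class.

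In the GHZ branch, there is an invertible $M$ with $M^{\otimes 3}\ket{g}\doteq[1,0,0,1]$; applying the holographic reduction of Theorem \ref{thm:Valiant_Holant} sends $\cF$ to $(M^{-1})^T\circ\cF$ and reduces the situation to a bipartite signature grid with $\{[1,0,0,1]\}$ on one side. The required $\omega$-normalisation and, if $y_0=y_2=0$, the auxiliary unary, can be arranged using further gadgets built from the free unaries transformed by $(M^{-1})^T$. The hardness of $\Holp[*]{\cF}$ forces the transformed $\cF$ out of $\avg{\cE}$, and the assumption $\cF\not\subseteq\cA$, combined with the fact that the holographic transformations fixing $[1,0,0,1]$ act on $\cA$ as a group, forces it out of $\cA$ as well; Theorem \ref{thm:GHZ-state} then yields $\sP$-hardness. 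In the W branch, an analogous holographic transformation sends $\ket{g}$ to a signature of the form $[x_0,x_1,x_2,x_3]$ with $x_k=Ak\alpha^{k-1}+B\alpha^k$ and $A\neq 0$; Lemma \ref{lem:W-state} immediately gives $\sP$-hardness unless $\alpha=\pm i$, which corresponds precisely to $\ket{g}\in K\circ\cM\cup KX\circ\cM$. In that exceptional case, $\Holp[*]$-hardness guarantees a further signature $\ket{h}\in\cF$ whose image under the same holographic map lies outside $\avg{K\circ\cM}$ and $\avg{KX\circ\cM}$; pinning its surplus inputs with the free unaries produces a binary companion for $\ket{g}$, and Theorem \ref{thm:W-state} together with $\cF\not\subseteq\cA$ closes the case.

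The step I expect to be the main obstacle is the W-class exceptional subcase: showing that some other signature of $\cF$ can always be turned, by pinning with $\{\ket{0},\ket{1},\ket{+},\ket{-}\}$, into a symmetric binary signature whose combination with the W-type ternary escapes every tractable alternative in Theorem \ref{thm:W-state}. This requires ruling out a coordinated containment in $M\circ\cA$, $M\circ\cP$, or the constrained $[0,*,*]$/$[*,*,0]$ forms after any common holographic transformation, and is precisely where Theorem \ref{thm:three-qubit-entanglement} (rather than the weaker Theorem \ref{thm:popescu-rohrlich}) is essential, since only a ternary genuinely entangled gadget gives enough leverage to distinguish affine from W-type behaviour. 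A secondary nuisance is the bookkeeping needed to enforce $\omega$-normalisation in the GHZ branch using only the four free unaries after an arbitrary invertible holographic change of basis.
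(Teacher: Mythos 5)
There is a genuine gap: you have skipped the symmetrisation step, and the shortcut you use in its place is not valid. Theorem \ref{thm:three-qubit-entanglement} only gives a genuinely entangled ternary signature $\ket{g}$, which in general is \emph{not} symmetric: it is SLOCC-equivalent to $\ket{GHZ}$ or $\ket{W}$ via $(A\otimes B\otimes C)$ with three \emph{different} invertible matrices. A holographic transformation, however, must apply the \emph{same} matrix to every edge, so your claim that ``there is an invertible $M$ with $M^{\otimes 3}\ket{g}\doteq[1,0,0,1]$'' (and the analogous claim in the W branch) is false for non-symmetric $\ket{g}$; the normal forms $[1,0,0,1]$ and $[1,1,0,0]$, and all of Theorem \ref{thm:GHZ-state}, Lemma \ref{lem:W-state} and Theorem \ref{thm:W-state}, are stated only for \emph{symmetric} ternary signatures. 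The paper spends a substantial part of the argument bridging exactly this gap: Lemmas \ref{lem:GHZ_symmetrise} and \ref{lem:W_symmetrise} build a triangle gadget out of three copies of $\ket{g}$ whose signature is symmetric by construction, and the real content is the failure-case analysis — the GHZ-type gadget can only fail when $\ket{g}$ is already symmetric and lies in $K\circ\cE$, and the W-type gadget can only fail when $\ket{g}\in K\circ\cM$ or $KX\circ\cM$, in which case an auxiliary binary signature outside that set rescues the construction. None of this is present, or replaceable by a single-matrix holographic change of basis, in your outline.

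The same issue recurs in the exceptional W subcase, which you correctly flag as the main obstacle but do not resolve. Saying that ``pinning its surplus inputs with the free unaries produces a binary companion'' is not enough: you must guarantee that after pinning the resulting binary signature is still entangled \emph{and} still outside $K\circ\cM$ (resp.\ $KX\circ\cM$), which the paper proves by the inductive polynomial argument of Lemma \ref{lem:binary_notin_KcircM} (each pinning kills at most one value of the projection parameter for the $\cM$-condition and at most two for the entanglement condition, and four projectors are available). Moreover, to invoke Theorem \ref{thm:W-state} the binary companion must be \emph{symmetric}, which requires the further sandwich gadget of Lemma \ref{lem:binary-symmetric} and the explicit computation in Section \ref{s:hardness} showing the transformed binary signature has a non-zero $00$-entry, rather than the coordinated-containment argument you sketch. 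Your tractability direction and the overall skeleton (Theorem \ref{thm:three-qubit-entanglement}, then the ternary results of Section \ref{s:results_ternary_symmetric}) do match the paper, but without the symmetrisation gadgets and the binary-signature lemmas the hardness direction does not go through.
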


The tractable cases are almost the same as those for symmetric \textsc{Holant}$^c$ (see Theorem \ref{thm:Holant-c}), now without the restriction to symmetric signatures.
The only difference is that the holographic transformations allowed in the affine case of the \textsc{Holant}$^c$ dichotomy are trivial in the case of \textsc{Holant}$^+$: any transformation that maps $\{\ket{=_2}, \ket{0}, \ket{1}, \ket{+}, \ket{-}\}$ to a subset of $\cA$ must itself be in $\cA$.

The tractability proof for Theorem \ref{thm:main} follows immediately by reduction to \textsc{Holant}$^*$ or \csp{}, respectively.
For the hardness proof, we use Theorem \ref{thm:three-qubit-entanglement} to construct signatures corresponding to three-qubit entangled states.
We then show that, unless we are in one of the tractable cases, it is possible to construct ternary gadgets with non-degenerate symmetric signatures.
If the ternary symmetric signature is of GHZ type, Theorem \ref{thm:GHZ-state} applies.
If the ternary symmetric signature is of W type but not in $K\circ\cM$ or $KX\circ\cM$, we use Lemma \ref{lem:W-state}.
Finally, if the ternary symmetric signature is contained in $K\circ\cM$ (or $KX\circ\cM$), then by assumption the set of available signatures $\cF$ must contain some signature that is not in $K\circ\cM$ (or $KX\circ\cM$, respectively) -- else the problem is already known to be tractable.
We show how to use such a signature to construct a binary symmetric signature that is not in $K\circ\cM$ (or $KX\circ\cM$, respectively).
Then the desired result follows by Lemma \ref{lem:W-state}.

Theorem \ref{thm:three-qubit-entanglement} is proved in Section \ref{s:proof:three-qubit-entanglement}.
The gadget constructions for ternary symmetric signatures and the associated proofs are given in Section \ref{s:symmetrising_ternary}.
The gadget construction for a symmetric binary signature that is not in $K\circ\cM$ (or $KX\circ\cM$) follows in Section \ref{s:binary}.
Section \ref{s:hardness} contains the hardness proof itself, which completes the proof of the main theorem.

\subsection{Proof of Theorem \ref{thm:three-qubit-entanglement}}
\label{s:proof:three-qubit-entanglement}

The theorem is proved inductively: we show that, given an $n$-qubit entangled state with $n>3$, it is possible to project some qubit onto a computational or Hadamard basis state in such a way that at least one of the tensor factors of the remaining state has multipartite entanglement, i.e.\ contains more than 2 qubits that are entangled with each other.

There are still several components to the inductive proof:
\begin{itemize}
 \item for $n\geq 7$, we use a generic combinatorial argument based on Theorem \ref{thm:popescu-rohrlich},
 \item for $n=6$, we use a more specialised combinatorial argument,
 \item the combinatorial arguments fail for $n<6$, so we prove the cases $n=5$ and $n=4$ using (rather lengthy) case distinctions based on the inductive entanglement classification \cite{lamata_inductive_2006,lamata_inductive_2007,backens_inductive_2016} (cf.\ Section \ref{s:inductive_classification}).
\end{itemize}
Given a non-zero state that is a tensor product of several factors, it is straightforward to construct a gadget for one of the tensor factors: each of the other tensor factors must have at least one coefficient non-zero when expressed in the computational basis.
Project the factor onto that computational basis state; the resulting scalar does not affect the complexity of a Holant problem.

As explained in Section \ref{s:quantum_states}, $\bra{\phi}$ denotes the Hermitian adjoint of $\ket{\phi}$; the inner product of two vectors $\ket{\phi}, \ket{\psi}$ in some complex Hilbert space is written as $\braket{\phi}{\psi}$. Furthermore, we write:
\begin{equation}
 \bra{\phi}_k\ket{\psi}
\end{equation}
to denote the projection of the $k$-th qubit of $\ket{\psi}$ onto the state $\ket{\phi}$.
A gadget illustrating this idea is shown in Figure \ref{fig:phi_k-psi}.
The gadget does not actually represent a complex inner product -- no Hermitian conjugates are involved -- but since computational and Hadamard basis states have real coefficients, the notation can be used anyway.

\begin{figure}
 \centering
 \begin{tikzpicture}
	\begin{pgfonlayer}{nodelayer}
		\node [style=solidn, label={below:$\ket{\psi}$}] (0) at (0, -0.75) {};
		\node [style=solidn, label={above:$\bra{\phi}$}] (1) at (0, 0.5) {};
		\node [style=none, label={above:$\scriptstyle k-1$}] (2) at (-1.5, 0.5) {};
		\node [style=none, label={above:$\scriptstyle\ldots$}] (3) at (-2.75, 0.5) {};
		\node [style=none, label={above:$\scriptstyle 1$}] (4) at (-3.5, 0.5) {};
		\node [style=none, label={above:$\scriptstyle k+1$}] (5) at (1.5, 0.5) {};
		\node [style=none, label={above:$\scriptstyle\ldots$}] (6) at (2.75, 0.5) {};
		\node [style=none, label={above:$\scriptstyle n$}] (7) at (3.5, 0.5) {};
	\end{pgfonlayer}
	\begin{pgfonlayer}{edgelayer}
		\draw [bend right=15, looseness=1.00] (4.center) to (0);
		\draw [bend right=15, looseness=1.00] (2.center) to (0);
		\draw (1) to (0);
		\draw [bend right=15, looseness=1.00] (0) to (5.center);
		\draw [bend right=15, looseness=1.00] (0) to (7.center);
	\end{pgfonlayer}
\end{tikzpicture}
 \caption{Gadget corresponding to $\bra{\phi}_k\ket{\psi}$.}
 \label{fig:phi_k-psi}
\end{figure}

\begin{lem}\label{lem:n_geq_7}
 Let $\ket{\Psi}$ be an $n$-qubit genuinely entangled state, where $n\geq 7$. Then there exists a state $\ket{\theta}\in\left\{\ket{0},\ket{1},\ket{+},\ket{-}\right\}$ such that the state $\bra{\theta}_n\ket{\Psi}$ contains multipartite entanglement.
\end{lem}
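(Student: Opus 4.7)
The plan is to argue by contradiction using the corollary of Theorem \ref{thm:popescu-rohrlich} together with a counting argument on pairs of qubits. Suppose, toward contradiction, that for every $\ket{\theta}\in\{\ket{0},\ket{1},\ket{+},\ket{-}\}$ the state $\bra{\theta}_n\ket{\Psi}$ contains no multipartite entanglement. Since $\ket{\Psi}$ is nonzero and $\{\ket{0},\ket{1}\}$ is a basis for qubit $n$, at least one of the four projections is nonzero; we focus on those that are nonzero and simply discard the others (they contribute nothing to what follows).

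By assumption, for each such $\theta$ the $(n-1)$-qubit state $\bra{\theta}_n\ket{\Psi}$ decomposes as a tensor product of factors each involving at most two qubits. Let $P_\theta$ be the induced partition of $\{1,\ldots,n-1\}$ into singletons and pairs. The key observation is that for any pair $\{i,j\}\subseteq\{1,\ldots,n-1\}$, the corollary of Theorem \ref{thm:popescu-rohrlich} provides a projection of the other $n-2$ qubits (of $\ket{\Psi}$) onto a tensor product of computational and Hadamard basis states that leaves $i,j$ entangled; in particular, qubit $n$ is projected to some $\ket{\theta}$. Projecting the remaining $n-3$ qubits of $\bra{\theta}_n\ket{\Psi}$ onto basis states turns each 1-qubit factor into a scalar, each fully-projected 2-qubit factor into a scalar, and each partially-projected 2-qubit factor into a 1-qubit state; hence if $i$ and $j$ lay in different factors of $P_\theta$, the resulting state on $\{i,j\}$ would be a product state, not entangled. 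Therefore $\{i,j\}$ must appear as one of the pair-blocks of $P_\theta$.

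This yields a covering condition: every one of the $\binom{n-1}{2}$ pairs $\{i,j\}\subseteq\{1,\ldots,n-1\}$ is a pair-block of at least one of the four partitions $P_{\ket{0}},P_{\ket{1}},P_{\ket{+}},P_{\ket{-}}$. Each partition contains at most $\lfloor (n-1)/2\rfloor$ pair-blocks (being a partition of $n-1$ elements into singletons and pairs), so the total number of pair-blocks available is at most $4\lfloor (n-1)/2\rfloor\leq 2(n-1)$. Comparing with the required $\binom{n-1}{2}=\frac{(n-1)(n-2)}{2}$ forces $n-2\leq 4$, i.e.\ $n\leq 6$. For $n\geq 7$ this is a contradiction, proving the lemma.

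\paragraph*{Main obstacle.} The delicate step is justifying that $\{i,j\}$ must be a pair-block of $P_\theta$: it requires the observation that projecting all qubits outside $\{i,j\}$ of a state with no multipartite entanglement produces an entangled state on $\{i,j\}$ only when $i$ and $j$ sit inside a common 2-qubit factor. Once this is established, the rest is a pigeonhole count; I expect that this is also exactly why the combinatorial margin disappears below $n=7$ and a more careful, case-based analysis is needed for the lower values of $n$.
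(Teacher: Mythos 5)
Your proof is correct. The ingredients are the same as in the paper's argument -- the corollary of Theorem \ref{thm:popescu-rohrlich} with projectors restricted to $\{\ket{0},\ket{1},\ket{+},\ket{-}\}$, the fact that projecting onto products of single-qubit states cannot create entanglement, and a counting argument against the four available choices of $\ket{\theta}$ -- but your count is global where the paper's is local. The paper fixes the first qubit: after projection at least five potential partners remain while only four projectors exist, so by pigeonhole some $\ket{\theta}$ forces the first qubit to be entangled with two different qubits, which already yields multipartite entanglement. You instead require every one of the $\binom{n-1}{2}$ pairs to appear as a pair-block of one of the four partitions $P_\theta$ and compare with the at most $4\lfloor (n-1)/2\rfloor$ pair-blocks available. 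Your version is essentially the argument the paper reserves for the six-qubit case (Lemma \ref{lem:n=6}): keeping the floor, $4\lfloor 5/2\rfloor = 8 < 10 = \binom{5}{2}$, so your single count in fact covers all $n\geq 6$ and would allow Lemmas \ref{lem:n_geq_7} and \ref{lem:n=6} to be merged; it is only your relaxation of the bound to $2(n-1)$ that limits the stated conclusion to $n\geq 7$. Your treatment of the delicate step -- that the pair supplied by the corollary must lie inside a single block of $P_\theta$, because the remaining projections only turn factors into scalars or single-qubit states -- is exactly the justification the paper compresses into the remark that projecting onto a product of single-qubit states does not increase entanglement, and your discarding of any $\ket{\theta}$ with vanishing projection is harmless since the corollary's $\theta$ necessarily gives a nonzero state.
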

\begin{proof}
 After projecting the $n$-th qubit onto $\ket{\theta}$, we are left with an $(n-1)$-qubit state $\ket{\Phi}=\bra{\theta}_n\ket{\Psi}$.
 As $\ket{\Psi}$ is genuinely entangled, Theorem \ref{thm:popescu-rohrlich} applies, i.e.\ for any pair of qubits there exists some projection of the remaining $(n-2)$ qubits onto a product of computational and Hadamard basis states that leaves the two original qubits in an entangled state.
 Projecting onto a product of single-qubit states does not increase entanglement.
 Thus for each size-two subset of the first $(n-1)$ qubits, there must be some choice of $\ket{\theta}$ so that the two qubits in the subset remain entangled after projection.
 Now if $n\geq 7$ then the state after projection contains at least six qubits.
 That means there are at least five different qubits the first qubit could be entangled with.
 On the other hand, there are only four different projectors.
 By the pigeonhole principle, this means there is at least one choice of $\ket{\theta}$ for which the first qubit is entangled with more than one other qubit.
 Thus the corresponding post-projection state $\ket{\Phi}$ contains multipartite entanglement.
\end{proof}

The argument from Lemma \ref{lem:n_geq_7} fails for $n=6$ as there are five qubits left after projection, which means four possible entanglement partners for any given qubit -- corresponding exactly to the four projections.
 Nevertheless, this case can also be resolved by a combinatorial argument.

\begin{lem}\label{lem:n=6}
 Let $\ket{\Psi}$ be a six-qubit genuinely entangled state. Then there exists a state $\ket{\theta}$ chosen from the set of single-qubit computational and Hadamard basis states $\left\{\ket{0}, \ket{1}, \ket{+}, \ket{-}\right\}$ such that $\bra{\theta}_6\ket{\Psi}$ contains multipartite entanglement.
\end{lem}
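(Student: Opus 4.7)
The plan is to handle this borderline case by a refined pigeonhole argument, using the fact that although a single qubit has $5-1=4$ potential partners after one projection (matching the number of available projectors), the \emph{pairs} of qubits that can be entangled in a pair-decomposed five-qubit state are highly constrained.

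Suppose for contradiction that for every $\ket{\theta} \in \{\ket{0}, \ket{1}, \ket{+}, \ket{-}\}$ the five-qubit state $\bra{\theta}_6\ket{\Psi}$ contains no multipartite entanglement. Then each such state admits a tensor-product decomposition in which every factor involves at most two qubits; in particular, among the five qubits this decomposition induces a partial matching (the entangled pairs) together with some isolated singletons. A partial matching on five vertices has at most $\lfloor 5/2 \rfloor = 2$ edges, so for each $\ket{\theta}$ there are at most two pairs $\{i,j\}\subseteq\{1,\ldots,5\}$ that are entangled in $\bra{\theta}_6\ket{\Psi}$.

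Next, I would apply Theorem \ref{thm:popescu-rohrlich} to the six-qubit state $\ket{\Psi}$ for each of the $\binom{5}{2}=10$ pairs $\{i,j\}\subseteq\{1,\ldots,5\}$: there is a projection of the remaining four qubits (which in particular includes qubit 6) onto a product of computational and Hadamard basis states that leaves qubits $i$ and $j$ entangled. Writing the projector on qubit 6 as some $\ket{\theta}\in\{\ket{0},\ket{1},\ket{+},\ket{-}\}$, the intermediate state is $\bra{\theta}_6\ket{\Psi}$, and the remaining three projections act locally on distinct qubits; since local projections onto product states cannot create entanglement, the pair $\{i,j\}$ must already lie inside a common tensor factor of $\bra{\theta}_6\ket{\Psi}$. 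Under the contradiction hypothesis, this means $\{i,j\}$ is one of the (at most two) entangled pairs for that particular $\ket{\theta}$.

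Finally, counting: every one of the $\binom{5}{2}=10$ pairs must be realised as an entangled pair of $\bra{\theta}_6\ket{\Psi}$ for some $\ket{\theta}$, yet across the four available $\ket{\theta}$ we obtain at most $4 \times 2 = 8$ such pairs. This is the desired contradiction, so some $\ket{\theta}\in\{\ket{0},\ket{1},\ket{+},\ket{-}\}$ must yield a $\bra{\theta}_6\ket{\Psi}$ containing multipartite entanglement. The main conceptual obstacle here is really the justification that a pair witnessed as entangled after further projection must have been in the same tensor factor of $\bra{\theta}_6\ket{\Psi}$ to begin with; once that monotonicity step is cleanly phrased, the counting inequality $4\cdot 2 < \binom{5}{2}$ closes the argument.
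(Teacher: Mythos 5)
Your proposal is correct and follows essentially the same route as the paper: assume every post-projection state splits into factors of at most two qubits, invoke Theorem \ref{thm:popescu-rohrlich} for each of the $\binom{5}{2}=10$ pairs among the first five qubits, and derive the contradiction $10 > 4\cdot 2$. Your explicit justification that a pair witnessed entangled after further product projections must already lie in a single tensor factor of $\bra{\theta}_6\ket{\Psi}$ is the same monotonicity observation the paper uses, just spelled out more carefully.
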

\begin{proof}
 Suppose, for a contradiction, that each of the four post-projection states is a tensor product of one- and two-qubit entangled states.
 This gives the following possible structures for the state $\bra{\theta}_6\ket{\Psi}$:
 \begin{itemize}
  \item a tensor product of five single-qubit states,
  \item a tensor product of one two-qubit entangled state and three single-qubit states, or
  \item a tensor product of two two-qubit entangled states and one single-qubit state.
 \end{itemize}
 There are $\binom{5}{2} = 10$ different ways of choosing two out of the first five qubits for the purposes of an argument according to Theorem \ref{thm:popescu-rohrlich}.
 If none of the states after projection contain multipartite entanglement, then even if they all have different entanglement structures containing the maximum number of two-qubit entangled factors, that only yields $4\cdot 2 = 8$ different pairs, contradicting Theorem \ref{thm:popescu-rohrlich}.
 Hence at least one of the states after projection must have multipartite entanglement.
\end{proof}

As mentioned above, these combinatorial arguments fail for $n<6$.
Instead, for $n=4$ and 5, we look at the different families of entanglement classes arising from the inductive entanglement classification, and show that in each case there is a projection that leaves the remaining qubits in an entangled state.
The proofs are long and involved; they may be found in Appendix \ref{s:appendix}.

\begin{lem}\label{lem:n=4}
 Let $\ket{\Psi}$ be a genuinely entangled four-qubit state. Then there exists a state $\ket{\theta}$ chosen from the set of single-qubit computational and Hadamard basis states $\left\{\ket{0}, \ket{1}, \ket{+}, \ket{-}\right\}$ such that $\ket{\Phi} = \bra{\theta}_1\ket{\Psi}$ is genuinely three-partite entangled.
\end{lem}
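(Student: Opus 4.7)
The plan is to run a case analysis over the ten families of the inductive entanglement classification of four-qubit genuinely entangled states that is recalled in Section~\ref{s:inductive_classification}. Writing $\ket{\Psi}=\ket{0}\ket{\phi_0}+\ket{1}\ket{\phi_1}$, the four admissible projections on the first qubit yield exactly the four three-qubit vectors
\begin{equation*}
\bra{0}_1\ket{\Psi}=\ket{\phi_0},\quad \bra{1}_1\ket{\Psi}=\ket{\phi_1},\quad \bra{\pm}_1\ket{\Psi}=\ket{\phi_0}\pm\ket{\phi_1},
\end{equation*}
all lying in the two-dimensional subspace $\fW=\spans\{\ket{\phi_0},\ket{\phi_1}\}$. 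Since $\ket{\Psi}$ is genuinely four-partite entangled, $\ket{\phi_0}$ and $\ket{\phi_1}$ are linearly independent, and the inductive classification of four-qubit states is essentially a classification of $\fW$ by the kinds of entangled and product vectors it contains. The lemma reduces to the statement that at least one of these four specific members of $\fW$ is a genuinely three-partite entangled three-qubit state.

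First I would reduce to canonical form: within each SLOCC class, the lemma only needs to be checked on one representative, because SLOCC acts as $A\otimes B\otimes C\otimes D$ and any single-qubit projection on the first qubit transforms covariantly, preserving genuine three-partite entanglement of the remaining factor. So for each of the ten families I would substitute the representative state (from \cite{lamata_inductive_2007,backens_inductive_2016}, restated in the appendix), rewrite it in the form $\ket{0}\ket{\phi_0}+\ket{1}\ket{\phi_1}$, and inspect the four candidate projections, certifying genuine three-partite entanglement via the criteria of Li \emph{et al.} recalled in Section~\ref{s:li_et_al}: either the GHZ discriminant \eqref{eq:GHZ_criterion} is non-zero, or the three W inequalities \eqref{eq:W_criterion} all hold.

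The main obstacle is the families in which both $\ket{\phi_0}$ and $\ket{\phi_1}$ individually factor across the same bipartition of the remaining three qubits, so that the two computational-basis projections are biseparable. In such families the argument must rely on the Hadamard-basis projections $\ket{\phi_0}\pm\ket{\phi_1}$ to escape biseparability. The key observation is that if all four of $\ket{\phi_0},\ket{\phi_1},\ket{\phi_0}\pm\ket{\phi_1}$ were biseparable across a common bipartition, then $\fW$ would lie inside a tensor factor, forcing $\ket{\Psi}$ itself to factor and contradicting genuine four-qubit entanglement. The finer difficulty is that different elements of $\fW$ could be biseparable across \emph{different} bipartitions simultaneously, so for each family I would need to enumerate which bipartitions are realised by the representative and verify that the four available projections cannot all fall into the (at most three) biseparable directions; the inductive structure, which constrains $\fW$ to contain only very specific product vectors, is what makes the pigeonhole work out.

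Finally, for illustration, I would dispatch the easy families first — e.g.\ the four-qubit GHZ class $\ket{0000}+\ket{1111}$, for which $\bra{+}_1$ gives $\ket{000}+\ket{111}$ (genuinely GHZ-entangled by \eqref{eq:GHZ_criterion}), and the $\fW_{0_k\Psi,0_k\Psi}$ family, for which a direct computation of $\ket{\phi_0}\pm\ket{\phi_1}$ on the representative produces a GHZ-type three-qubit state provided $\lambda\neq\mu$, which is part of the family's definition — and leave the remaining families to a systematic table in the appendix, each checked against \eqref{eq:GHZ_criterion}–\eqref{eq:W_criterion}.
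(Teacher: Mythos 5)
Your overall architecture (case analysis over the ten families of the inductive classification, certifying entanglement via the Li \emph{et al.} criteria) is the same as the paper's, but the reduction-to-representative step as you state it is where the argument breaks. Writing $\ket{\Psi}=(F_1\otimes F_2\otimes F_3\otimes F_4)\ket{\Theta}$ with $\ket{\Theta}$ a family representative, the projection transforms as $\bra{\theta}_1\ket{\Psi}=(F_2\otimes F_3\otimes F_4)\,\bigl(\bra{\theta}F_1\bigr)_1\ket{\Theta}$. The factor $F_2\otimes F_3\otimes F_4$ is indeed harmless, but $F_1$ is not: it twists the projector, so what reaches the representative is not one of the four fixed states $\bra{0},\bra{1},\bra{+},\bra{-}$ but the covector $\bra{\theta}F_1$, and as $F_1$ ranges over invertible matrices the set $\{\bra{\theta}F_1\}$ is an essentially arbitrary set of four \emph{pairwise linearly independent} projectors. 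Consequently, "substitute the representative, rewrite it as $\ket{0}\ket{\phi_0}+\ket{1}\ket{\phi_1}$, and inspect the four candidate projections $\ket{\phi_0},\ket{\phi_1},\ket{\phi_0}\pm\ket{\phi_1}$" proves the lemma only for states with trivial $F_1$. Your illustrative checks show the problem: for the GHZ family you verify $\bra{+}_1(\ket{0000}+\ket{1111})=\ket{000}+\ket{111}$, but for $F_1=\left(\begin{smallmatrix}1&1\\-1&1\end{smallmatrix}\right)$ one has $\bra{+}F_1\propto\bra{1}$, and that particular projection of the representative is a product state; the lemma still holds there, but only via a pigeonhole over all four twisted projectors (at most one can hit each computational direction), which your fixed-projector check never engages. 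The paper's proof does exactly this repair: it absorbs $F_1$ into the projectors, normalises them as $\bra{\theta'}=x\bra{0}+y\bra{1}$ with pairwise distinct directions, and then, family by family, shows that the "bad" directions (those giving a biseparable post-projection state, e.g.\ $xy=0$ for GHZ, or the constraint $(\lambda-\mu)^2x^2y^2\neq 0$ for $\fW_{0_k\Psi,0_k\Psi}$) form too small a set to contain four pairwise independent covectors.

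A secondary, fixable imprecision: your "key observation" (that if all of $\ket{\phi_0},\ket{\phi_1},\ket{\phi_0}\pm\ket{\phi_1}$ were biseparable across a common bipartition then $\ket{\Psi}$ would factor) is the right intuition, but once the reduction is repaired the relevant vectors are general elements $x\ket{\phi_0}+y\ket{\phi_1}$ of $\fW=\spans\{\ket{\phi_0},\ket{\phi_1}\}$ rather than those four specific combinations, and the count of admissible biseparable directions has to be extracted from the defining constraints of each family (as the paper does in the $\mathfrak{W}_{000,GHZ}$ and $\mathfrak{W}_{0_i\Psi,0_j\Psi}$ cases). So the missing idea is not the case analysis itself but the correct quantification: the per-family verification must hold for every set of four pairwise linearly independent projectors, not for the four computational and Hadamard basis projectors applied to the representative.
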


\begin{lem}\label{lem:n=5}
 Let $\ket{\Psi}$ be a genuinely entangled five-qubit state. Then there exists a state $\ket{\theta}$ chosen from the set of single-qubit computational basis states and Hadamard basis states $\left\{\ket{0}, \ket{1}, \ket{+}, \ket{-}\right\}$ such that $\ket{\Phi} = \bra{\theta}_1\ket{\Psi}$ contains multipartite entanglement.
\end{lem}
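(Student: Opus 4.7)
The plan is to decompose $\ket{\Psi} = \ket{0}\ket{\phi_0} + \ket{1}\ket{\phi_1}$, where $\ket{\phi_0}$ and $\ket{\phi_1}$ are four-qubit states. Genuine entanglement of $\ket{\Psi}$ forces them to be linearly independent. The four admissible projections on the first qubit yield the vectors $\ket{\phi_0}$, $\ket{\phi_1}$, $\ket{\phi_0}+\ket{\phi_1}$, $\ket{\phi_0}-\ket{\phi_1}$, all lying in the two-dimensional subspace $\fW = \spans\{\ket{\phi_0}, \ket{\phi_1}\}$. It suffices to show that at least one of these four vectors contains a tensor factor spanning at least three of the four remaining qubits.

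I would argue by contradiction, assuming that each of the four vectors decomposes into single-qubit and two-qubit entangled factors only. If either $\ket{\phi_0}$ or $\ket{\phi_1}$ is genuinely four-qubit entangled the conclusion already holds, so both must be of one of three types: fully factorised, one two-qubit entangled pair plus two single-qubit factors, or two disjoint two-qubit entangled pairs. Because SLOCC on qubits $2$--$5$ acts simultaneously on all four projected states and preserves each vector's entanglement structure, I may apply a local invertible transformation to bring $\ket{\phi_0}$ into a canonical form drawn from the inductive four-qubit entanglement classification of \cite{lamata_inductive_2007, backens_inductive_2016}.

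The proof then proceeds by case analysis indexed by the canonical form of $\ket{\phi_0}$ and by the overlap pattern between its two-qubit entangled pairs and those of $\ket{\phi_1}$. In each case I would write $\ket{\phi_1}$ in full generality in the computational basis, impose the requirement that $\ket{\phi_0}+\ket{\phi_1}$ and $\ket{\phi_0}-\ket{\phi_1}$ admit the same restricted factorisation structure, and thereby derive a system of bilinear constraints on the coefficients of $\ket{\phi_1}$. The target is to conclude that $\fW$ must sit inside a subspace of vectors sharing a common single-qubit tensor factor; translating this back up to $\ket{\Psi}$ would force the first qubit to factor off from the remainder, contradicting genuine entanglement.

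The main obstacle I anticipate is the proliferation of sub-cases. Different overlap patterns between the two-qubit pairs of $\ket{\phi_0}$ and $\ket{\phi_1}$ --- whether they act on the same pair of qubits, share exactly one qubit, or are fully disjoint --- give rise to qualitatively different algebraic conditions, each of which must be resolved separately. No individual case is conceptually deep, but the bookkeeping is substantial, which is presumably why the author defers the full argument to the appendix.
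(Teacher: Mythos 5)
Your skeleton matches the paper's: decompose $\ket{\Psi}=\ket{0}\ket{\phi_0}+\ket{1}\ket{\phi_1}$, note that the four admissible projections give $\ket{\phi_0}$, $\ket{\phi_1}$, $\ket{\phi_0}\pm\ket{\phi_1}$ inside $\fW=\spans\{\ket{\phi_0},\ket{\phi_1}\}$, and case-analyse over the possible entanglement structures of $\ket{\phi_0}$ and $\ket{\phi_1}$. The paper runs this analysis directly -- for each joint structure it exhibits a projection whose output still contains multipartite entanglement, verified with the GHZ/W criteria of Li \etal{} and the four-qubit families -- whereas you organise it as a contradiction. That organisational choice would be acceptable, but the step on which you plan to land the contradiction is wrong as stated.

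You aim to conclude that all vectors of $\fW$ share a common \emph{single-qubit} tensor factor and that this ``forces the first qubit to factor off'' from $\ket{\Psi}$; neither half is correct. Take $\ket{\phi_0}=\ket{\Phi}_{23}\otimes\ket{\Xi}_{45}$ and $\ket{\phi_1}=\ket{\bar{\Phi}}_{23}\otimes\ket{\Xi}_{45}$ with $\ket{\Xi}$ entangled and $\ket{\Phi},\ket{\bar{\Phi}}$ linearly independent. Then every one of the four projected vectors has the form (state on qubits 2,3)$\,\otimes\ket{\Xi}_{45}$, so none contains multipartite entanglement, there is no common single-qubit factor, and qubit 1 does not factor off from $\ket{\Psi}$ (it is genuinely entangled with qubits 2 and 3); what violates genuine five-partite entanglement is that the entangled pair $4,5$ splits off. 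So the correct target of your contradiction is that some nonempty proper subset of qubits $2$--$5$ carries a common tensor factor (possibly itself an entangled block) of every vector in $\fW$, whence that block -- never qubit 1 -- factors out of $\ket{\Psi}$; qubit 1 can only factor off when $\ket{\phi_0}\propto\ket{\phi_1}$, which linear independence already excludes. Two further, smaller points: under your hypothesis $\ket{\phi_0}$ is not genuinely four-partite entangled, so the canonical forms you need are tensor products of Bell pairs and single-qubit states, not representatives of the ten four-qubit families; and once the target is corrected, establishing the common-block conclusion in every overlap pattern is at least as much bookkeeping as the paper's direct route of exhibiting a good projection in each case, since the direct route never has to characterise all the ways your ``all four projections are unentangled'' hypothesis can be consistent.
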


With these lemmas, we can now prove the desired result:

\begin{proof}[Proof of Theorem \ref{thm:three-qubit-entanglement}.]
 If $n=3$, no qubits are projected and the result is trivial.
 For $n=4$, 5 or 6, see Lemmas \ref{lem:n=4}, \ref{lem:n=5} and \ref{lem:n=6}, respectively.
 The case $n\geq 7$ is considered in Lemma \ref{lem:n_geq_7}.
 These lemmas can be applied repeatedly until there are only three qubits left.
\end{proof}

We have shown that we can construct a ternary non-degenerate signature in \textsc{Holant}$^+$ as long as there is some signature in the set that does contain multipartite entanglement.
If there is not, the problem is tractable anyway by the $\avg{\cT}$ case of the dichotomy for \textsc{Holant}$^*$ in Theorem \ref{thm:Holant-star}.

Yet, to be able to use the results recounted in Section \ref{s:results_ternary_symmetric}, we need to be able to construct \emph{symmetric} ternary non-degenerate signatures.

\subsection{Symmetrising ternary signatures}
\label{s:symmetrising_ternary}

To be able to use the results recounted in Section \ref{s:results_ternary_symmetric}, we require \emph{symmetric} ternary entangled signatures.
The signatures constructed according to the process outlined in the previous section are ternary and entangled, but they are not generally symmetric.
Nevertheless, as we show in this section, it is possible to use the general ternary entangled signatures to construct symmetric ones (possibly with the help of an additional binary non-degenerate signature).

We prove this by distinguishing cases according to whether the ternary signature constructed using Theorem \ref{thm:three-qubit-entanglement} is in the GHZ or W entanglement class.

First, consider a general GHZ class state $\ket{\psi}$.
This state is related to $\ket{GHZ}$ by SLOCC, i.e.\ there exist invertible complex 2 by 2 matrices $A,B,C$ such that:
\begin{equation}
 \ket{\psi} = (A\otimes B\otimes C)\ket{GHZ}.
\end{equation}
We can then draw the signature associated with $\ket{\psi}$ as a `virtual gadget':
\begin{center}
 \begin{tikzpicture}
	\begin{pgfonlayer}{nodelayer}
		\node [style=hollown] (0) at (0, -1) {};
		\node [style=none] (1) at (-2, 1.25) {};
		\node [style=map] (2) at (-2, 0.25) {$A$};
		\node [style=map] (3) at (0, 0.25) {$B$};
		\node [style=map] (4) at (2, 0.25) {$C$};
		\node [style=none] (5) at (0, 1.25) {};
		\node [style=none] (6) at (2, 1.25) {};
	\end{pgfonlayer}
	\begin{pgfonlayer}{edgelayer}
		\draw (0) to (3);
		\draw (1.center) to (2);
		\draw [bend right, looseness=1.00] (2) to (0);
		\draw [bend right, looseness=1.00] (0) to (4);
		\draw (4) to (6.center);
		\draw (3) to (5.center);
	\end{pgfonlayer}
\end{tikzpicture}
\end{center}
The `boxes' denoting the matrices are non-symmetric because the matrices will not in general be symmetric.
The white dot represents the GHZ state.
This notation is not meant to imply that the signatures $A,B,C$ or the ternary equality signature are available on their own.
It will simply make things easier to think of the signature as such a composite rather than a single object.

Three copies of $\ket{\psi}$ can be connected up to form the rotationally symmetric gadget shown in Figure \ref{fig:symmetrising_GHZ}.
In fact, the signature for that gadget is fully symmetric, in the sense that its value depends only on the Hamming weight of the inputs.
On the other hand, it may not be entangled or it may have the all-zero signature.

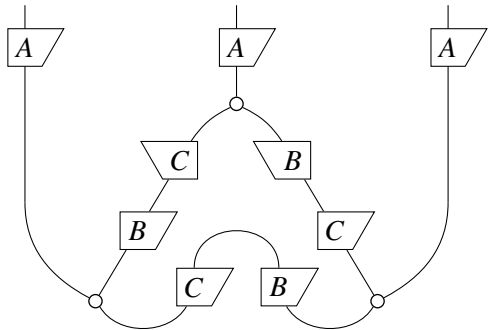
\begin{figure}
 \centering
 \begin{tikzpicture}[scale=1.5]
	\begin{pgfonlayer}{nodelayer}
		\node [style=hollown] (0) at (0, 1.25) {};
		\node [style=hollown] (1) at (-2.5, -2.25) {};
		\node [style=hollown] (2) at (2.5, -2.25) {};
		\node [style=none] (3) at (0, 3) {};
		\node [style=none] (4) at (-3.75, -0.5) {};
		\node [style=none] (5) at (3.75, -0.5) {};
		\node [style=none] (6) at (-3.75, 3) {};
		\node [style=none] (7) at (3.75, 3) {};
		\node [style=map] (8) at (-3.75, 2.25) {$A$};
		\node [style=map] (9) at (-1.75, -1) {$B$};
		\node [style=map] (10) at (-0.75, -2) {$C$};
		\node [style=map] (11) at (0.75, -2) {$B$};
		\node [style=map] (12) at (1.75, -1) {$C$};
		\node [style=map] (13) at (3.75, 2.25) {$A$};
		\node [style=map] (14) at (0, 2.25) {$A$};
		\node [style=transposemap] (15) at (-1, 0.25) {$C$};
		\node [style=transposemap] (16) at (1, 0.25) {$B$};
	\end{pgfonlayer}
	\begin{pgfonlayer}{edgelayer}
		\draw [in=-90, out=152, looseness=1.00] (1) to (4.center);
		\draw [in=-90, out=28, looseness=1.00] (2) to (5.center);
		\draw (3.center) to (0);
		\draw (6.center) to (4.center);
		\draw (7.center) to (5.center);
		\draw (1) to (9);
		\draw (2) to (12);
		\draw (9) to (15);
		\draw [bend left=15, looseness=1.00] (15) to (0);
		\draw [bend left=15, looseness=1.00] (0) to (16);
		\draw (16) to (12);
		\draw [bend right=60, looseness=1.00] (1) to (10);
		\draw [bend left=60, looseness=1.00] (2) to (11);
		\draw [bend left=90, looseness=1.50] (10) to (11);
	\end{pgfonlayer}
\end{tikzpicture}
 \caption{A symmetric gadget constructed from three copies of a ternary signature.}
 \label{fig:symmetrising_GHZ}
\end{figure}

For a general non-symmetric $\ket{\psi}$ there are three different such symmetric gadgets that can be constructed by permuting the roles of $A$, $B$, and $C$ in Figure \ref{fig:symmetrising_GHZ} -- in particular, which of the three ends up on the external edge of the gadget.

\begin{lem}\label{lem:GHZ_symmetrise}
 Let $\ket{\psi}$ be a three-qubit GHZ class state, i.e.\ $\ket{\psi}=(A\otimes B\otimes C)\ket{GHZ}$ for some invertible 2 by 2 matrices $A,B,C$. Then at least one of the three possible symmetric gadgets resulting from permutations of $A,B,C$ in Figure \ref{fig:symmetrising_GHZ} is non-degenerate unless $\ket{\psi}\in K\circ\cE$ and is furthermore already symmetric.
\end{lem}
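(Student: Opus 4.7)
The plan is to compute the effective signature of each of the three variants of the gadget in Figure~\ref{fig:symmetrising_GHZ}, characterise when all three are simultaneously degenerate, and from that deduce the form of $A,B,C$ and hence of $\ket\psi$.

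\emph{Computing the signature.} Writing each copy of $\ket\psi$ as $\sum_k A\ket k\otimes B\ket k\otimes C\ket k$ and performing the three internal $B$--$C$ contractions shown in the figure---each of which collapses a pair of legs to a matrix entry of $Q:=C^T B$ in the corresponding GHZ summation indices---yields, when $A$ sits on the external leg,
\[
  \ket{f^{(A)}} \;=\; A^{\otimes 3}\!\sum_{k_1,k_2,k_3\in\{0,1\}} Q_{k_1 k_2}\,Q_{k_2 k_3}\,Q_{k_3 k_1}\,\ket{k_1 k_2 k_3}.
\]
The coefficient of $\ket{k_1 k_2 k_3}$ is a product of three scalars arranged cyclically around a $3$-cycle on the two-element set $\{0,1\}$. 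It is manifestly invariant under cyclic permutations of $(k_1,k_2,k_3)$; and because multiplication is commutative, and because on a two-element vertex set the multiset of directed edges around any $3$-cycle is preserved by reversing the cycle's orientation, it is in fact invariant under all of $S_3$. Hence the inner tensor is fully symmetric, with Dicke values $[Q_{00}^3,\,Q_{00}Q_{01}Q_{10},\,Q_{01}Q_{10}Q_{11},\,Q_{11}^3]$. The same computation with $B$ or $C$ on the external leg gives the same structure but with the internal matrix $Q$ replaced by $A^T C$ or $B^T A$ (up to transpose), respectively.

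\emph{When is the gadget degenerate?} Since $A$ is invertible, $\ket{f^{(A)}}$ is degenerate iff the inner symmetric tensor $[y_0,y_1,y_2,y_3]$ is. A symmetric ternary tensor is degenerate precisely when $y_0 y_2=y_1^2$, $y_1 y_3=y_2^2$ and $y_0 y_3=y_1 y_2$ (these say the associated binary cubic is a perfect cube). Substituting the Dicke values above and using $\det Q=\det C\,\det B\neq 0$, a short case analysis shows that the only surviving solution is $Q_{00}=Q_{11}=0$, i.e.\ $Q$ anti-diagonal, in which case the inner tensor is identically zero. All three gadgets being degenerate is therefore equivalent to each of $C^T B$, $A^T C$ and $B^T A$ being anti-diagonal.

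\emph{From anti-diagonality to $K\circ\cE$.} Writing $A=(\ket{a_0}\mid\ket{a_1})$ and similarly for $B,C$, the three anti-diagonality conditions translate into pairwise bilinear orthogonality ($u^T v=0$) of the three ``$0$-columns'' $\{\ket{a_0},\ket{b_0},\ket{c_0}\}$ and likewise of the three ``$1$-columns''. A short linear-algebra argument shows that in $\CC^2$ any three non-zero pairwise bilinearly orthogonal vectors must all be scalar multiples of a single isotropic vector---necessarily $\ket 0+i\ket 1$ or $\ket 0-i\ket 1$, since up to scale these are the only solutions of $u^T u=0$. (Among three such vectors at least two are linearly dependent, say $v=\lambda u$ with $\lambda\neq 0$; then $u^T v=\lambda u^T u=0$ forces $u$ to be isotropic, and $u^T w=0$ forces $w\in\spans(u)$.) Invertibility of $A,B,C$ then forces the $0$-columns and $1$-columns to pick out the two distinct isotropic directions, so there exist diagonal matrices $D_A,D_B,D_C$ with either $A=KD_A,\,B=KD_B,\,C=KD_C$ or $A=KX D_A,\,B=KX D_B,\,C=KX D_C$. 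Substituting back yields $\ket\psi = K^{\otimes 3}\bigl(\gamma\ket{000}+\delta\ket{111}\bigr)$ (with $\gamma,\delta$ swapped in the $KX$ case), which is in $K\circ\cE$; and since the inner state is symmetric and $K^{\otimes 3}$ commutes with permutations of tensor factors, $\ket\psi$ is symmetric as required.

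\emph{Main obstacle.} The most delicate step is the linear-algebraic bookkeeping across the three variants of the gadget: getting the right transposes so that one can correctly read off the two sets of column-orthogonality conditions from the three anti-diagonality statements. Once that is in place, the isotropy constraint in $\CC^2$ is tight enough that the $K\circ\cE$ conclusion essentially writes itself, with no room for alternative structures.
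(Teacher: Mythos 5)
Your proposal is correct, and its skeleton matches the paper's: compute the signature of the symmetrised gadget as $A^{\otimes 3}$ applied to the symmetric tensor $[a^3,abc,bcd,d^3]$ with $\left(\begin{smallmatrix}a&b\\c&d\end{smallmatrix}\right)=C^TB$, show that (using $\det\neq 0$) the construction can only fail when this matrix is anti-diagonal, hence all three variants fail only when $C^TB$, $A^TC$ and $B^TA$ are all anti-diagonal, and then deduce $\ket\psi\in K\circ\cE$ and symmetry. You differ in two localized ingredients, both of which make the argument more self-contained. First, where the paper invokes the Li \emph{et al.} GHZ/W criteria from Section \ref{s:li_et_al} to see that the gadget is entangled unless $a=d=0$, you instead use the elementary ``perfect cube'' conditions $y_0y_2=y_1^2$, $y_1y_3=y_2^2$, $y_0y_3=y_1y_2$ for degeneracy of a symmetric ternary signature; combined with $ad-bc\neq0$ this gives exactly the same failure condition $a=d=0$, at the price of losing the paper's by-product information about which class (GHZ or W) the surviving gadget lies in -- information the paper does not actually need for this lemma, since the hardness argument re-examines the class anyway. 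Second, where the paper solves the matrix equations $B^TA=XD_1$, $C^TB=XD_2$, $A^TC=XD_3$ to get $B,C\in A\cdot\mathrm{diag}$ and then $A^TA\doteq X$, appealing to the Appendix \ref{a:ATA=X} characterisation, you read the anti-diagonality conditions column-wise as pairwise bilinear orthogonality of the $0$-columns and of the $1$-columns and use isotropy of $(1,\pm i)^T$ to force $A,B,C$ coherently into $K\cdot\mathrm{diag}$ or $KX\cdot\mathrm{diag}$; this is a neat shortcut that bypasses the appendix, though the underlying mechanism (the isotropic directions being the columns of $K$) is the same. One small point to tighten: your parenthetical claim that among the three pairwise bilinearly orthogonal vectors two must be proportional deserves its one-line justification -- the other two vectors both lie in the one-dimensional orthocomplement (the form is nondegenerate) of the third -- since three nonzero vectors in $\CC^2$ are not in general pairwise dependent.
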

\begin{proof}
 Consider the gadget in Figure \ref{fig:symmetrising_GHZ} and write:
 \begin{equation}
  M = C^T B = \begin{pmatrix}a&b\\c&d\end{pmatrix}.
 \end{equation}
 Then the signature of that gadget is:
 \begin{equation}\label{eq:symmetrising_GHZ}
  A\circ[a^3, abc, bcd, d^3].
 \end{equation}
 The SLOCC transformation by $A\t{3}$ does not change the entanglement class of the state, hence we ignore it from now on. 
 
 By the result of Li \emph{et al.} given in Section \ref{s:li_et_al}, this is a GHZ class state if and only if:
 \begin{equation}
  (ad+3bc)(ad-bc)^3 a^2 d^2\neq 0.
 \end{equation}
 Now, as $M$ is invertible, we have $ad-bc\neq 0$.
 Continuing with the analysis from Li \emph{et al.}, we find that \eqref{eq:symmetrising_GHZ} is a W SLOCC class state if either $ad+3bc=0$ or $a=0$ or $d=0$, and vanishes if $a=d=0$; other combinations of zero coefficients are excluded by the assumption of invertibility of $M$.
 
 As noted above, there are actually three different gadgets that can be constructed from the same non-symmetric three-qubit state, by cyclically permuting the roles of $A$, $B$, and $C$ in Figure \ref{fig:symmetrising_GHZ}.
 The only case in which all three gadget constructions fail is if the top left and bottom right components are zero for all of $B^TA, C^TB$,  and $A^TC$.
 In that case, there must exist invertible diagonal matrices $D_1$, $D_2$, and $D_3$ such that:
 \begin{equation}\label{eq:GHZ_fail_conditions}
  B^TA=XD_1, \qquad C^TB=XD_2, \quad\text{and}\quad A^TC = XD_3.
 \end{equation}
 This implies that $C = A X D_1^{-1} X D_2$ and $B = A X D_3^{-1} X D_2$.
 Both $X D_1^{-1} X D_2$ and $X D_3^{-1} X D_2$ are diagonal themselves; write these matrices as:
 \begin{equation}
  D_C = X D_1^{-1} X D_2 = \begin{pmatrix}\gamma_0&0\\0&\gamma_1\end{pmatrix} \quad\text{and}\quad D_B = X D_3^{-1} X D_2 = \begin{pmatrix}\beta_0&0\\0&\beta_1\end{pmatrix},
 \end{equation}
 respectively, for some $\beta_0,\beta_1,\gamma_0,\gamma_1\in\CC\setminus\{0\}$.
 Then:
 \begin{equation}
  \ket{\psi} = (A\otimes B\otimes C)\ket{GHZ} = A\t{3} (I\otimes D_B\otimes D_C) \ket{GHZ} = A\t{3} \left( \beta_0\gamma_0\ket{000} + \beta_1\gamma_1\ket{111} \right),
 \end{equation}
 where $I$ is the 2 by 2 identity matrix.
 This state is always symmetric.
 
 Furthermore, we find that:
 \begin{equation}
  A^TA = X D_3 X D_2^{-1} X D_1 = \begin{pmatrix}0&\alpha_0\\\alpha_1&0\end{pmatrix},
 \end{equation}
 where $\alpha_0,\alpha_1\in\CC\setminus\{0\}$.
 As the left-hand side of this equality is invariant under transpose, we must in fact have $\alpha_0=\alpha_1$, i.e.:
 \begin{equation}\label{eq:ATA=X}
  A^TA=\alpha_0 \begin{pmatrix}0&1\\1&0\end{pmatrix}.
 \end{equation}
 It is straightforward to check that all matrices satisfying \eqref{eq:ATA=X} must be of the form (see Appendix \ref{a:ATA=X}):
 \begin{equation}
  A = \begin{pmatrix}1&1\\i&-i\end{pmatrix} \begin{pmatrix}a_0&0\\0&a_1\end{pmatrix} \quad\text{or}\quad A = \begin{pmatrix}1&1\\i&-i\end{pmatrix} X  \begin{pmatrix}a_0&0\\0&a_1\end{pmatrix}
 \end{equation}
 for some $a_0,a_1\in\CC\setminus\{0\}$.
 But then:
 \begin{equation}
  \ket{\psi} = (A\otimes B\otimes C)\ket{GHZ} = K\t{3} \left( a_0^3\beta_0\gamma_0\ket{000} + a_1^3\beta_1\gamma_1\ket{111} \right) \in K\circ\cE.
 \end{equation}
 This completes the proof.
\end{proof}

We have shown that given a GHZ class signature, we can always construct a non-degenerate symmetric ternary signature: whenever the original signature is not already symmetric, we can use the gadget in Figure \ref{fig:symmetrising_GHZ}.
The new signature may be in the GHZ or the W class.

A similar approach works for W class signatures, though in some cases an additional binary signature is required.

\begin{lem}\label{lem:W_symmetrise}
 Let $\ket{\psi}$ be a three-qubit W class state, i.e.\ $\ket{\psi}=(A\otimes B\otimes C)\ket{W}$ for some invertible 2 by 2 matrices $A,B,C$. If $\ket{\psi}\in K\circ\cM$ (or $\ket{\psi}\in KX\circ\cM$), assume that we also have a two-qubit entangled state $\ket{\phi}$ that is not in $K\circ\cM$ (or $KX\circ\cM$, respectively). Then we can construct a symmetric three-qubit entangled state.
\end{lem}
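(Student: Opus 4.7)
The plan is to follow the same general strategy as Lemma~\ref{lem:GHZ_symmetrise}. First I would apply the rotationally symmetric gadget of Figure~\ref{fig:symmetrising_GHZ} to three copies of the W-class state $\ket{\psi}=(A\otimes B\otimes C)\ket{W}$, together with the two cyclic reorderings obtained by rotating the roles of $A$, $B$ and $C$ on the external leg. Each such gadget produces a fully symmetric ternary signature whose coefficients are polynomials in the entries of the matrix products $B^TA$, $C^TB$, and $A^TC$ -- exactly the same contractions that appeared in the GHZ case, since the underlying gadget topology is unchanged. Whether any of these signatures is non-degenerate can be tested via the Li--Kuang--Xu criterion recalled in Section~\ref{s:li_et_al}; as soon as one of the three cyclic variants is non-degenerate we have the desired symmetric three-qubit entangled signature.

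The substantive case is when all three cyclic gadgets produce degenerate signatures simultaneously. I would convert this into a system of equations on $B^TA$, $C^TB$ and $A^TC$ and try to solve them by the same kind of manipulation as in \eqref{eq:GHZ_fail_conditions}. The expectation is that the only way all three symmetrisations can collapse at once is for $A^TA$, $B^TB$, $C^TC$ to all be scalar multiples of the anti-diagonal matrix of \eqref{eq:ATA=X}; Appendix~\ref{a:ATA=X} then forces each of $A$, $B$, $C$ to equal $K$ or $KX$ times an invertible diagonal matrix. Substituting back into $(A\otimes B\otimes C)\ket{W}$ and using that $\ket{W}$ is the indicator function for Hamming weight one, $\ket{\psi}$ is then pinned to $K\circ\cM$ (if the matrices are all of $K$-type) or $KX\circ\cM$ (if they are all of $KX$-type).

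Only in this exceptional situation do we invoke the auxiliary two-qubit entangled state $\ket{\phi}$. Since $\ket{\phi}$ is entangled, when viewed as a $2\times 2$ matrix it is invertible; call this matrix $N$. Contracting one leg of $\ket{\phi}$ onto one external leg of $\ket{\psi}$ yields a new W-class signature $\ket{\psi'}=(NA\otimes B\otimes C)\ket{W}$. The hypothesis $\ket{\phi}\notin K\circ\cM$ (respectively $\ket{\phi}\notin KX\circ\cM$) is precisely the condition needed to ensure that $N$ does not preserve the relevant $K\circ\cM$ structure on the leg to which it is attached, so $\ket{\psi'}$ no longer lies in that problematic class. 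Re-running the first-stage symmetrisation on $\ket{\psi'}$ then succeeds by the earlier argument.

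The main obstacle is the algebraic bookkeeping in the simultaneous-failure analysis of the second paragraph: the W-tensor contraction is more intricate than the GHZ one because $\ket{W}$ is not diagonal in any single basis, so the polynomial equations produced by the Li--Kuang--Xu criterion for the three cyclic gadgets are considerably more involved than \eqref{eq:GHZ_fail_conditions}, and a careful case analysis will likely be required to rule out exotic simultaneous degenerations lying outside of $K\circ\cM$ and $KX\circ\cM$. A secondary subtlety is making precise the sense in which the assumption $\ket{\phi}\notin K\circ\cM$ prevents the modified state $\ket{\psi'}$ from sliding back into the same class, which should reduce to a concrete description of how one-leg contractions act on the $K\circ\cM$ normal form.
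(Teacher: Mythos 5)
Your overall strategy is the same as the paper's: symmetrise three copies of $\ket{\psi}$ using the gadget of Figure~\ref{fig:symmetrising_GHZ} in its three cyclic variants, show that simultaneous failure of all three forces $\ket{\psi}\in K\circ\cM\cup KX\circ\cM$, and use $\ket{\phi}$ contracted onto one leg to escape that case (the paper performs the escape first and then symmetrises, but that reordering is immaterial). The genuine gap is in your failure analysis. For the W state the degeneracy condition of one cyclic gadget is much weaker than in the GHZ case: writing $C^TB=\left(\begin{smallmatrix}a&b\\c&d\end{smallmatrix}\right)$, the symmetrised signature is $A\circ[b^3+c^3+3abd+3acd,\;ab^2+abc+ac^2+a^2d,\;a^2b+a^2c,\;a^3]$, which is GHZ class whenever $a\neq 0$ and degenerate exactly when $a=0$. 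So simultaneous failure only gives vanishing of the three $(0,0)$ entries of $B^TA$, $C^TB$, $A^TC$, not of both diagonal entries as in \eqref{eq:GHZ_fail_conditions}, and one cannot conclude $A^TA\doteq X$; your proposed appeal to Appendix~\ref{a:ATA=X} and the normal form ``$K$ or $KX$ times a diagonal matrix'' therefore misses failure cases. Concretely, $A=B=C=K\left(\begin{smallmatrix}1&1\\0&1\end{smallmatrix}\right)$ makes all three symmetrisations collapse, yet $A^TA\doteq\left(\begin{smallmatrix}0&1\\1&2\end{smallmatrix}\right)$ is not proportional to $X$. The paper instead puts $A,B,C$ into PLDU form, shows that the vanishing conditions force the lower-triangular parameters to equal $\pm i$ in a compatible pattern, and concludes that each matrix is $K$ (or each is $KX$) times an \emph{upper triangular} matrix; combined with the observation that $(U_1\otimes U_2\otimes U_3)\ket{W}\in\cM$ for upper triangular $U_i$, this is exactly membership in $K\circ\cM$ or $KX\circ\cM$. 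That triangular normal form, rather than the diagonal one, is the missing idea.

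The escape step also needs one more verification than you give. After contracting $\ket{\phi}$ you must know the new state avoids \emph{both} $K\circ\cM$ and $KX\circ\cM$, since your failure criterion covers both, and even avoidance of the original class is asserted rather than proved. The paper does this by direct computation: with $\ket{\psi}\doteq K^{\otimes 3}(a_0\ket{000}+a_1\ket{001}+a_2\ket{010}+a_4\ket{100})$, $a_1a_2a_4\neq 0$, and $\ket{\phi}\doteq K^{\otimes 2}(b_0\ket{00}+b_1\ket{01}+b_2\ket{10}+b_3\ket{11})$, $b_3\neq 0$, the contracted state acquires nonzero coefficients $a_1b_3$ and $a_2b_3$ on the weight-two basis states inside the $K^{\otimes 3}$ frame, so it is not in $K\circ\cM$; a short further computation using $b_0b_3-b_1b_2\neq 0$ rules out $KX\circ\cM$ as well. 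Your reformulation $\ket{\psi'}=(NA\otimes B\otimes C)\ket{W}$ with $N$ invertible cleanly shows the state stays in the W class, but the class-membership claim still requires this explicit check rather than the heuristic that $N$ ``does not preserve the $K\circ\cM$ structure''.
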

\begin{proof}
 First note that if $\ket{\psi}\in K\circ\cM$ and $\ket{\phi}\notin K\circ\cM$, then:
 \begin{align}
  \ket{\psi} &= K\t{3} \left( a_0\ket{000} + a_1\ket{001} + a_2\ket{010} + a_4\ket{100} \right) \quad\text{and}\\
  \ket{\phi} &= K\t{2} \left( b_0\ket{00} + b_1\ket{01} + b_2\ket{10} + b_3\ket{11} \right),
 \end{align}
 where $a_1a_2a_4\neq 0$ by entanglement of $\ket{\psi}$, $b_0b_3-b_1b_2\neq 0$ by entanglement of $\ket{\phi}$ and $b_3\neq 0$ because $\ket{\phi}\notin K\circ\cM$.
 Then we can construct a gadget by connecting the second input of a vertex carrying the signature corresponding to $\ket{\phi}$ to the first input of a vertex carrying the signature corresponding to $\ket{\psi}$:
 \begin{equation}
  \begin{tikzpicture}
	\begin{pgfonlayer}{nodelayer}
		\node [style=solidn] (0) at (0, -0.25) {};
		\node [style=solidn] (1) at (-0.75, 0.25) {};
		\node [style=none] (2) at (-1, 0.75) {};
		\node [style=none] (3) at (0, 0.75) {};
		\node [style=none] (4) at (1, 0.75) {};
	\end{pgfonlayer}
	\begin{pgfonlayer}{edgelayer}
		\draw [bend right=15, looseness=1.00] (2.center) to (1);
		\draw [bend right=15, looseness=1.00] (1) to (0);
		\draw (0) to (3.center);
		\draw [bend right, looseness=1.00] (0) to (4.center);
	\end{pgfonlayer}
\end{tikzpicture}
 \end{equation}
 This gadget has signature:
 \begin{align}
  K\t{3} \left( (a_0b_1+a_4b_0)\ket{000} + a_1b_1\ket{001} + a_1b_3\ket{010} + (a_0b_3+a_4b_2)\ket{100} + a_1b_3\ket{101} + a_2b_3\ket{110} \right)
 \end{align}
 As $a_1,a_2$ and $b_3$ must be non-zero, the coefficients of $\ket{101}$ and $\ket{110}$ are non-zero: hence this state is not in $K\circ\cM$.
 A similar argument works for $KX\circ\cM$.
 Hence we can assume that $\ket{\psi}\notin K\circ\cM \cup KX\circ\cM$ by replacing it with the above gadget if necessary.

 If the circular dots in Figure \ref{fig:symmetrising_GHZ} are W states then that diagram represents a gadget consisting of three copies of $\ket{\psi}$, again each decomposed into a `virtual gadget'.
 Let:
 \begin{equation}
  M = C^T B = \begin{pmatrix}a&b\\c&d\end{pmatrix}.
 \end{equation}
 Then the signature of the gadget in Figure \ref{fig:symmetrising_GHZ} is:
 \begin{equation}\label{eq:symmetrising_W}
  A\circ[b^3+c^3+3abd+3acd, \; ab^2+abc+ac^2+a^2d, \; a^2b+a^2c, \; a^3].
 \end{equation}
 As before, the SLOCC transformation by $A\t{3}$ does not change the entanglement class of the state, so we ignore it from now on. 
 
 By the result of Li \emph{et al.} \cite{li_simple_2006} given in Section \ref{s:li_et_al}, the gadget represents a GHZ class state if $4(ad-bc)^3 a^6\neq 0$.
 The matrix $M$ is invertible, so $\det M = ad-bc\neq 0$.
 Therefore, the gadget fails to be a GHZ state only if $a=0$.
 In that case, \eqref{eq:symmetrising_W} reduces to $[b^3+c^3,0,0,0]$, which is degenerate.
  
 For simplicity, relabel $A,B,C$ to $A_1,A_2,A_3$.
 Each of those matrices can be written in PLDU form as a product of either the identity or X, a lower triangular matrix with ones on the diagonal, an invertible diagonal matrix, and an upper triangular matrix with ones on the diagonal:
 \begin{equation}
  A_k = P_k \begin{pmatrix} 1 & 0 \\ \alpha_k & 1 \end{pmatrix} \begin{pmatrix} \beta_k & 0 \\ 0 & \gamma_k \end{pmatrix} \begin{pmatrix} 1 & \delta_k \\ 0 & 1 \end{pmatrix} = P_k \begin{pmatrix} \beta_k & \beta_k\delta_k \\ \alpha_k\beta_k & \alpha_k\beta_k\delta_k+\gamma_k \end{pmatrix}
 \end{equation}
 for some $P_k\in\{I,X\}$, $\alpha_k,\beta_k,\gamma_k,\delta_k\in\CC$ with $\beta_k, \gamma_k\neq 0$ for $k=1,2,3$.
 
 As in Lemma \ref{lem:GHZ_symmetrise}, there are different ways of combining the three non-symmetric signatures into a symmetric gadget.
 To take these into account, we assume that the matrix $M$ (the edge function between two vertices with signature $\ket{W}$) is equal to $A_i^TA_j$ for some $i,j\in\{1,2,3\}$ with $i\neq j$.
 Thus:
 \begin{equation}
  M = \begin{pmatrix} \beta_i & \alpha_i\beta_i \\ \beta_i\delta_i & \alpha_i\beta_i\delta_i+\gamma_i \end{pmatrix} P_i P_j \begin{pmatrix} \beta_j & \beta_j\delta_j \\ \alpha_j\beta_j & \alpha_j\beta_j\delta_j+\gamma_j \end{pmatrix}
 \end{equation}
 which means that:
 \begin{equation}\label{}
  M_{00} = \begin{cases} \beta_i\beta_j(1+\alpha_i\alpha_j) & \text{if } P_i=P_j \\ \beta_i\beta_j(\alpha_i+\alpha_j) & \text{if } P_i\neq P_j. \end{cases}
 \end{equation}
 Up to transpose (which does not change the value of $M_{00}$), there are three different ways of combining three copies of $\ket{\Psi}$ in a symmetrical way.
 The symmetrisation procedure fails only if none of those work, i.e.\ if:
 \begin{itemize}
  \item $P_1=P_2=P_3$ and $0=1+\alpha_1\alpha_2=1+\alpha_2\alpha_3=1+\alpha_1\alpha_3$, which means $\alpha_1=\alpha_2=\alpha_3=\pm i$, or
  \item $P_1=P_2\neq P_3$ and $0=1+\alpha_1\alpha_2=\alpha_1+\alpha_3=\alpha_2+\alpha_3$, which means $\alpha_1=\alpha_2=-\alpha_3=\pm i$, or
  \item any permutation of $1,2,3$ in the above.
 \end{itemize}
 Note that $K$ and $KX$ have the following PLDUs:
 \begin{equation}
  \begin{pmatrix} 1 & 1 \\ \pm i & \mp i \end{pmatrix} = \begin{pmatrix} 1 & 0 \\ \pm i & 1 \end{pmatrix} \begin{pmatrix} 1 & 0 \\ 0 & \mp 2i \end{pmatrix} \begin{pmatrix} 1 & 1 \\ 0 & 1 \end{pmatrix} = X \begin{pmatrix} 1 & 0 \\ \mp i & 1 \end{pmatrix} \begin{pmatrix} \pm i & 0 \\ 0 & 2 \end{pmatrix} \begin{pmatrix} 1 & -1 \\ 0 & 1 \end{pmatrix}.
 \end{equation}
 
 Consider the case $P_1=P_2=P_3=I$ and $\alpha_1=\alpha_2=\alpha_3=i$; the other cases can be treated analogously.
 We can write $A_k$ as:
 \begin{align}
  A_k &= \begin{pmatrix} 1 & 0 \\ i & 1 \end{pmatrix} \begin{pmatrix} \beta_k & 0 \\ 0 & \gamma_k \end{pmatrix} \begin{pmatrix} 1 & \delta_k \\ 0 & 1 \end{pmatrix}
  = K \begin{pmatrix} 1 & -1 \\ 0 & 1 \end{pmatrix} \begin{pmatrix} 1 & 0 \\ 0 & i/2 \end{pmatrix} \begin{pmatrix} \beta_k & 0 \\ 0 & \gamma_k \end{pmatrix} \begin{pmatrix} 1 & \delta_k \\ 0 & 1 \end{pmatrix} \\
  &= K \begin{pmatrix} \beta_k & \beta_k\delta_k - \frac{i\gamma_k}{2} \\ 0 & \frac{i\gamma_k}{2} \end{pmatrix}.
 \end{align}
 Analogous arguments work for the other failure cases: we always find that all $A_k$ correspond to a product of $K$ and some upper triangular matrix, or all correspond to a product of $KX$ with some upper triangular matrix.
 
 Now, if $U_1,U_2,U_3$ are all upper triangular, then:
 \begin{equation}
  (U_1\otimes U_2\otimes U_3)\ket{W} \in \cM.
 \end{equation}
 Thus, the symmetrisation procedure fails only if $\ket{\psi}\in K\circ\cM$ or $\ket{\psi}\in KX\circ\cM$.
 But under the conditions of the lemma, we were able to assume that $\ket{\psi}$ was not in either of those sets.
 Hence the gadget works in all required cases.
\end{proof}

\subsection{Constructing binary signatures}
\label{s:binary}

We have shown in the previous section that it is possible to construct a non-degenerate ternary symmetric signature from any ternary GHZ class signature; this is Lemma \ref{lem:GHZ_symmetrise}.
Furthermore, we have shown in Lemma \ref{lem:W_symmetrise} that a similar result holds for ternary W class signatures -- though if the W class signature is in $K\circ\cM$ or $KX\circ\cM$, that result requires the presence of a non-degenerate binary signature not in that set.

Here, we show that if the full signature set $\cF$ is not a subset of $K\circ\cM$ (or $KX\circ\cM$), then it is possible to construct a symmetric binary gadget over $\cF\cup\{\ket{0}, \ket{1}, \ket{+}, \ket{-}\}$ whose signature is not in $K\circ\cM$ (or $KX\circ\cM$, respectively).
This signature can be used in Lemma \ref{lem:W_symmetrise}, and it will also be required for a hardness proof according to Theorem \ref{thm:W-state}.

\begin{lem}\label{lem:binary_notin_KcircM}
 Suppose $\ket{\psi}$ is a genuinely entangled $n$-qubit state with $n\geq 2$, and $\ket{\psi}\notin K\circ\cM$. Then there exists a non-degenerate binary gadget over $\{\ket{\psi},\ket{0},\ket{1},\ket{\pm}\}$ with signature $\ket{\varphi}\notin K\circ\cM$.
\end{lem}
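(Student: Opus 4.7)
My plan is to translate the condition ``not in $K\circ\cM$'' into an explicit inner-product condition, produce a naive two-qubit gadget satisfying it, and then adjust the gadget via a Popescu--Rohrlich-style argument to achieve non-degeneracy.

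A direct calculation using $\bra{1}K^{-1}\propto\bra{\chi_2}$ with $\ket{\chi_2}:=\ket{0}+i\ket{1}$ shows that a two-qubit state $\ket{\varphi}$ lies outside $K\circ\cM$ precisely when $\bra{\chi_2\chi_2}\ket{\varphi}\neq 0$; more generally, $\ket{\psi}\notin K\circ\cM$ iff there exists some $y\in\{0,1\}^n$ with $|y|\geq 2$ such that $\bra{\chi_{y_1+1}}\otimes\cdots\otimes\bra{\chi_{y_n+1}}\ket{\psi}\neq 0$, where $\ket{\chi_1}:=\ket{0}-i\ket{1}$. Fixing such a $y$ and two positions $i,j$ with $y_i=y_j=1$, the non-vanishing assumption factors through the $(n-2)$-qubit state $\ket{\xi}:=(\bra{\chi_2}_i\otimes\bra{\chi_2}_j)\ket{\psi}$, forcing $\ket{\xi}\neq 0$. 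Since $\{\ket{0},\ket{1}\}^{n-2}$ is a basis, some $z\in\{0,1\}^{n-2}$ satisfies $\braket{z}{\xi}\neq 0$; projecting each qubit $k\notin\{i,j\}$ of $\ket{\psi}$ onto $\ket{z_k}\in\{\ket{0},\ket{1}\}$ therefore produces a ``naive'' two-qubit gadget output $\ket{\varphi}_0$ with $\bra{\chi_2\chi_2}\ket{\varphi}_0\neq 0$, hence outside $K\circ\cM$. If $\ket{\varphi}_0$ is also entangled, we are done.

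The main obstacle is the case where $\ket{\varphi}_0$ turns out to be a product state. Here I would vary the projections, considering the family of two-qubit outputs $\ket{\varphi}(\theta)$ parametrised by $\theta\in\{\ket{0},\ket{1},\ket{+},\ket{-}\}^{n-2}$ together with the two scalar functionals $C(\theta):=\bra{\chi_2\chi_2}\ket{\varphi}(\theta)$ and $E(\theta):=\varphi_{00}\varphi_{11}-\varphi_{01}\varphi_{10}$. The argument above shows that $C$ is not identically zero on the allowed discrete set, while the corollary of Theorem \ref{thm:popescu-rohrlich} applied to the pair $(i,j)$ shows that neither is $E$. Writing $\ket{\theta_k}=\alpha_k\ket{0}+\beta_k\ket{1}$, both functionals are polynomials in the pairs $(\alpha_k,\beta_k)$ -- $C$ being multilinear and $E$ being of bidegree two in each pair -- and I would conclude by a multilinearity/counting argument: the four allowed single-qubit choices correspond to the four vectors $(1,0),(0,1),(1,1),(1,-1)\in\CC^2$, any non-zero linear form in $(\alpha_k,\beta_k)$ annihilates at most one of them, and any non-zero binary quadratic form annihilates at most two, so combining these bounds along the $n-2$ projected qubits yields some $\theta$ with $C(\theta)E(\theta)\neq 0$.

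The delicate step is handling slices where the restricted polynomial happens to vanish identically (so the single-qubit counts above become vacuous for that position); I expect to manage this by re-selecting the pair $(i,j)$ from the 1-positions of $y$, which is possible whenever $|y|\geq 3$, or by choosing a different $y$ witnessing $\ket{\psi}\notin K\circ\cM$. In the residual small cases (in particular $n=3$, or $|y|=2$ with no other witness), I anticipate needing an explicit case analysis on the three-qubit entanglement structure of the remainder after projection, using genuine entanglement of $\ket{\psi}$ together with the specific $\bra{\chi_2\chi_2}$-non-vanishing condition to rule out simultaneous degeneracy and containment in $K\circ\cM$.
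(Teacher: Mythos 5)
Your reduction of ``not in $K\circ\cM$'' to the non-vanishing of the multilinear functional $C(\theta)=\bra{\chi_2\chi_2}\ket{\varphi(\theta)}$, the use of the corollary to Theorem \ref{thm:popescu-rohrlich} for the pair $(i,j)$ to get $E\not\equiv 0$ on the grid, and the count ``a nonzero linear form kills at most one of the four pairwise independent admissible vectors, a nonzero binary quadratic form at most two'' are exactly the ingredients of the paper's proof (the paper first conjugates by $K^{-1}$, so it speaks of $\cM$ and of projectors $\bra{0}+\alpha\bra{1}$ with $\alpha\in\{\pm 1,\pm i\}$, but that is only a change of picture). The genuine gap is the step you yourself flag: ``combining these bounds along the $n-2$ projected qubits''. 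As stated this is not an argument -- the bad sets for one qubit depend on the values chosen at the other qubits, so the per-qubit counts cannot simply be accumulated -- and the repairs you sketch (re-selecting $(i,j)$, choosing a different witness $y$, an explicit case analysis for $n=3$ or $|y|=2$) are neither carried out nor the right fix.

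The gap closes with material you already have, provided the combination is organised as a greedy, qubit-by-qubit induction carrying \emph{two separate witnesses}. Maintain the invariant: after some of the projected qubits have been fixed to admissible states, there exist completions $c$ and $e$ of the remaining coordinates by admissible states such that $C(\mathrm{fixed},c)\neq 0$ and $E(\mathrm{fixed},e)\neq 0$; initially $c$ is your computational-basis point $z$ and $e$ is the point supplied by the corollary to Theorem \ref{thm:popescu-rohrlich}. At the next qubit, restrict $C$ with the remaining coordinates set to the tail of $c$, and $E$ with the remaining coordinates set to the tail of $e$: each restriction is a form in $(\alpha_k,\beta_k)$ of degree one, respectively two, which is nonzero at an admissible vector (namely that witness's own entry for this qubit), hence is not identically zero, so it excludes at most one, respectively two, of the four admissible choices; pick a surviving choice and pass to the tails of $c$ and $e$. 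The ``identically vanishing slice'' you worry about therefore never occurs, because every slice is taken through a witness completion, and no re-choice of $(i,j)$ or $y$ and no small-$n$ case analysis is needed. With this bookkeeping your argument becomes precisely the induction in the paper's proof, whose invariants \eqref{eq:not_cM} and \eqref{eq:entangled} are exactly your two witnesses transported through $K^{-1}$.
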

\begin{proof}
 If $n=2$, $\ket{\psi}$ is the desired binary signature and we are done.
 
 Otherwise, note the following: connected gadgets over $\{\ket{\psi},\ket{0},\ket{1},\ket{\pm}\}$ containing exactly one copy of $\ket{\psi}$ are in one-to-one correspondence with connected gadgets over $\{K^{-1}\circ\ket{\psi},\ket{\pm},\ket{\pm i}\}$ containing exactly one copy of $K^{-1}\circ\ket{\psi}$.
 Here, $\ket{\pm i} = \ket{0}\pm i\ket{1}$.
 
 The problem of finding a binary gadget over $\{\ket{\psi},\ket{0},\ket{1},\ket{\pm}\}$ with signature not in $K\circ\cM$ is then equivalent to that of finding a binary gadget over $\{K^{-1}\circ\ket{\psi},\ket{\pm},\ket{\pm i}\}$ with signature not in $\cM$, as long as the gadget also satisfies the above conditions.
 
 In this case, we proceed by induction.
 We will show that, given a $k$-qubit state $\ket{\theta}$ satisfying:
 \begin{equation}\label{eq:not_cM}
  \bra{1}_1\bra{1}_2\bra{y_3}_3\ldots\bra{y_k}_k \ket{\theta} \neq 0.
 \end{equation}
 for some $y_3,\ldots,y_k\in\{0,1\}$, and:
 \begin{multline}
  \bra{0}_1\bra{0}_2\bra{\phi_3}_3\ldots\bra{\phi_k}_k \ket{\theta} \bra{1}_1\bra{1}_2\bra{\phi_3}_3\ldots\bra{\phi_k}_k \ket{\theta} \\
  - \bra{0}_1\bra{1}_2\bra{\phi_3}_3\ldots\bra{\phi_k}_k \ket{\theta} \bra{1}_1\bra{0}_2\bra{\phi_3}_3\ldots\bra{\phi_k}_k \ket{\theta} \neq 0,
  \label{eq:entangled}
 \end{multline}
 for some $\ket{\phi_3},\ldots,\ket{\phi_k}\in\CC^2$, it is possible to construct a $(k-1)$-qubit state satisfying analogous properties.
 If $k=2$, these conditions are just those of being non-degenerate and not an element of $\cM$.

 First, note that $\ket{\psi'}=K^{-1}\circ\ket{\psi}$ satisfies the properties: The assumption $\ket{\psi}\notin K\circ\cM$ implies $\ket{\psi'}\notin\cM$.
 Therefore, there exists an $n$-bit string $y$ with Hamming weight at least 2 such that $\braket{y}{\psi'}\neq 0$.
 Wlog, assume $y_1=y_2=1$ (otherwise relabel the qubits).
 Then the condition becomes equivalent to \eqref{eq:not_cM}, with $\ket{\theta}=\ket{\psi'}$.
 Furthermore, as $\ket{\psi}$ is genuinely entangled, so is $\ket{\psi'}$.
 Thus, by Theorem \ref{thm:popescu-rohrlich}, there exist $\ket{\phi_3},\ldots,\ket{\phi_n}\in\{\ket{\pm},\ket{\pm i}\}$ such that the state $\bra{\phi_3}_3\ldots\bra{\phi_n}_n \ket{\psi'}$ is entangled.
 In other words, there exist single-qubit projections such that \eqref{eq:entangled} holds for $\ket{\theta}=\ket{\psi'}$.

 Now, for the inductive step, assume we have a $k$-qubit state $\ket{\theta}$ satisfying both \eqref{eq:not_cM} and \eqref{eq:entangled}. 
 If $\ket{\phi}\in\{\ket{\pm}, \ket{\pm i}\}$, we can write $\bra{\phi}$ as $\bra{0}+\alpha\bra{1}$, where $\alpha\in\{\pm 1, \pm i\}$. Then the expression:
 \begin{multline*}
  \bra{1}_1\bra{1}_2\bra{y_3}_3\ldots\bra{y_{k-1}}_{k-1}\bra{\phi}_k \ket{\theta} \\
  = \bra{1}_1\bra{1}_2\bra{y_3}_3\ldots\bra{y_{k-1}}_{k-1}\bra{0}_k \ket{\theta} + \alpha \bra{1}_1\bra{1}_2\bra{y_3}_3\ldots\bra{y_{k-1}}_{k-1}\bra{1}_k \ket{\theta}
 \end{multline*}
 is a linear polynomial in $\alpha$. If $y_3,\ldots,y_{k-1}$ are chosen to satisfy \eqref{eq:not_cM}, the polynomial is not identically zero. Hence this expression vanishes for at most one value of $\alpha$.
 
 Furthermore, the expression:
 \begin{multline*}
  \left(\bra{0}_1\bra{0}_2\bra{\phi_3'}_3\ldots\bra{\phi_{k-1}'}_{k-1}\bra{\phi}_k \ket{\theta}\right) \left(\bra{1}_1\bra{1}_2\bra{\phi_3'}_3\ldots\bra{\phi_{k-1}'}_{k-1}\bra{\phi}_k \ket{\theta}\right) \\
  - \left(\bra{0}_1\bra{1}_2\bra{\phi_3'}_3\ldots\bra{\phi_{k-1}'}_{k-1}\bra{\phi}_k \ket{\theta}\right) \left(\bra{1}_1\bra{0}_2\bra{\phi_3'}_3\ldots\bra{\phi_{k-1}'}_{k-1}\bra{\phi}_k \ket{\theta}\right)
 \end{multline*}
 is a quadratic polynomial in $\alpha$.
 By \eqref{eq:entangled}, that polynomial is not identically zero. Hence, the expression vanishes for at most two values of $\alpha$.

 As a consequence, there must be at least one value of $\alpha$ for which both \eqref{eq:entangled} and:
 \[
  \bra{1}_1\bra{1}_2\bra{y_3}_3\ldots\bra{y_{k-1}}_{k-1}\bra{\phi}_k \ket{\theta} \neq 0
 \]
 are satisfied. For this $\alpha$, $\bra{\phi}_k\ket{\theta}$ is an $(n-1)$-qubit state which is not in $\cM$ and in which the first two qubits are entangled. If this state is not separable, repeat the procedure on the full state. If it is separable, pick the tensor factor containing the first two qubits: this will be a state not contained in $\cM$. Thus, by induction, the desired binary signature can be constructed.
\end{proof}

The binary signature required in Lemma \ref{lem:W_symmetrise} does not need to be symmetric, only non-degenerate.
Yet to show hardness using Theorem \ref{thm:W-state}, a symmetric non-degenerate binary signature is needed.

\begin{lem}\label{lem:binary-symmetric}
 Suppose $\ket{\psi}\in K\circ\cM$ is a three-qubit symmetric entangled state and $\ket{\phi}\notin K\circ\cM$ is a two-qubit entangled state. Then there exists a gadget over $\left\{ \ket{\psi}, \ket{\phi}, \ket{0}, \ket{1}, \ket{\pm} \right\}$ such that its signature $\ket{\varphi}$ is a two-qubit symmetric entangled state and $\ket{\varphi}\notin K\circ\cM$.
\end{lem}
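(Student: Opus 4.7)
The plan is to build a $\ZZ_2$-symmetric gadget by sandwiching $\ket{\psi}$ between two copies of $\ket{\phi}$: connect two legs of $\psi$ each to one leg of a separate copy of $\phi$, and absorb the third leg of $\psi$ with a unary $u \in \{\ket{0},\ket{1},\ket{+},\ket{-}\}$. The remaining legs of the two $\phi$-copies become the external legs $x, y$, giving signature $\varphi_{x,y}=\sum_{a,b,c}\psi_{a,b,c}\phi_{a,x}\phi_{b,y}u_c$. Symmetry under $x\leftrightarrow y$ is then automatic from the bijective relabelling $a\leftrightarrow b$ combined with the full symmetry of $\psi$.

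Writing $\varphi_M = \phi_M^T (\psi u)_M \phi_M$ in matrix form, where $(\psi u)_M$ is the $2\times 2$ matrix with entries $\sum_c \psi_{a,b,c}u_c$, one has $\det\varphi_M = (\det\phi_M)^2\det(\psi u)_M$, so entanglement of $\varphi$ reduces to $\det(\psi u)_M \neq 0$. The key idea for the remaining two conditions is to transform by $K^{-1}$ and exploit the identity $K^TK = 2X$: a three-qubit symmetric entangled state in $K\circ\cM$ must have the form $\ket{\psi}=K\t{3}(\alpha\ket{000}+\beta\ket{W_3})$ with $\beta\neq 0$, while $\ket{\phi}\notin K\circ\cM$ forces the coefficient $c_3 := [(K^{-1})\t{2}\ket{\phi}]_{1,1}$ to be nonzero.

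A direct computation then yields $\det(\psi u)_M = 4\beta^2\tilde u_0^2$, where $\tilde u = K^T u$; since $\tilde u_0 \neq 0$ for each of the four allowed unaries, $\ket{\varphi}$ is entangled regardless of the choice of $u$. For non-membership in $K\circ\cM$, I would compute $[K^{-1}\varphi_M (K^{-1})^T]_{1,1}$: repeatedly using $K^TK = 2X$, this simplifies to a linear expression in $(\tilde u_0, \tilde u_1)$ whose coefficient on $\tilde u_1$ turns out to be $4\beta c_3^2$. By the hypotheses $\beta \neq 0$ and $c_3 \neq 0$, this coefficient is nonzero, and the expression therefore cannot vanish on all four $\tilde u$-vectors (which span $\CC^2$); any $u$ for which the expression is nonzero yields the required $\ket{\varphi}$.

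The main obstacle will be the $K$-transformation bookkeeping: organising the matrix manipulations so that the decisive nonzero coefficient $4\beta c_3^2$ drops out cleanly from exactly the two hypotheses $\beta\neq 0$ (entanglement of $\psi$) and $c_3\neq 0$ ($\phi\notin K\circ\cM$). Once those substitutions are in place, the rest is routine linear algebra and a counting argument over the four available unaries.
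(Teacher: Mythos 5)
Your proposal is correct and follows essentially the same route as the paper: the same gadget (two copies of $\ket{\phi}$ attached to two legs of $\ket{\psi}$ with a unary from $\{\ket{0},\ket{1},\ket{\pm}\}$ on the third), the same normal forms $\ket{\psi}\doteq K\t{3}(\alpha\ket{000}+\beta\ket{W_3})$ and $\ket{\phi}$ with nonzero $\ket{11}$-coefficient in the $K^{-1}$-frame, and the same counting over the four pairwise independent unary choices to avoid the single bad value. The only differences are presentational -- you phrase non-degeneracy via $\det\varphi_M=(\det\phi_M)^2\det(\psi u)_M$ and non-membership in $K\circ\cM$ via a nonzero linear functional in $K^Tu$, where the paper computes the coefficients explicitly and reads off the condition $v'=-2\phi_{10}$ -- and your asserted key quantities ($\det(\psi u)_M\doteq\beta^2\tilde u_0^2$ and the $\tilde u_1$-coefficient $\doteq\beta c_3^2$) check out.
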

\begin{proof}
 We can write:
 \begin{align}
  \ket{\psi} &\doteq K\t{3} \left( v\ket{000}+\ket{W_3} \right) \\
  \ket{\phi} &\doteq K\t{2} \left( \phi_{00}\ket{00}+\phi_{01}\ket{01}+\phi_{10}\ket{10}+\ket{11} \right),
 \end{align}
 where $v,\phi_{00},\phi_{01},\phi_{10}\in\CC$ and $\doteq$ denotes equality up to non-zero constant scalar factor. As $\ket{\phi}\notin K\circ\cM$, $\phi_{11}$ must be non-zero; we have normalised it to 1 for simplicity.
 
 \begin{figure}
  \centering
  \begin{tikzpicture}
	\begin{pgfonlayer}{nodelayer}
		\node [label={below:$\psi$}, style=solidn] (0) at (0, -1) {};
		\node [style=solidn] (1) at (0, 0.25) {};
		\node [style=transposemap] (2) at (-1.5, 0.25) {$\phi$};
		\node [style=map] (3) at (1.5, 0.25) {$\phi$};
		\node [style=none] (4) at (-1.5, 1.25) {};
		\node [style=none] (5) at (1.5, 1.25) {};
		\node [style=none] (6) at (-1.5, -0.25) {};
		\node [style=none] (7) at (1.5, -0.25) {};
	\end{pgfonlayer}
	\begin{pgfonlayer}{edgelayer}
		\draw (4.center) to (6.center);
		\draw [bend right, looseness=1.00] (6.center) to (0);
		\draw [bend right, looseness=1.00] (0) to (7.center);
		\draw (7.center) to (5.center);
		\draw (0) to (1);
	\end{pgfonlayer}
\end{tikzpicture}
  \caption{Gadget for a symmetric binary signature that is not in $K\circ\cM$ (or $KX\circ\cM$). The degree-1 vertex has some signature chosen from the set $\{\ket{0},\ket{1},\ket{\pm}\}$.}
  \label{fig:symmetric_binary}
 \end{figure}
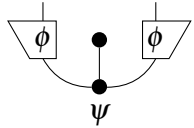

 Connecting a state from the set $\{\ket{0},\ket{1},\ket{\pm}\}$ to $\ket{\psi}$ gives a new state:
 \begin{equation}
  \ket{\psi'} = K\t{2} \left( v'\ket{00}+\ket{01}+\ket{10} \right)
 \end{equation}
 where $v'\in\{v\pm i, v\pm 1 \}$.
 Now sandwich a copy of this state between $\ket{\phi}$ and its transpose as in Figure \ref{fig:symmetric_binary}.
 That gadget has signature:
 \begin{equation}\label{eq:binary_symmetric}
  \ket{\varphi} = K\t{2} \left( \left(2\phi_{00}\phi_{01} + \phi_{01}^2 v'\right)\ket{00} + \left(\phi_{00}+\phi_{01}\phi_{10}+\phi_{01}v'\right) (\ket{01}+\ket{10}) + \left(2\phi_{10} + v'\right) \ket{11}\right).
 \end{equation}
 This signature is symmetric and non-degenerate for all $v'$.
 Furthermore, $\ket{\varphi}\in K\circ\cM$ only if $v'=-2\phi_{10}$.
 Thus, there is a choice of single-qubit state in $\{\ket{0},\ket{1},\ket{\pm}\}$ such that the gadget satisfies the requirements.
\end{proof}

An analogous argument holds with $KX$ instead of $K$.
Hence, we can construct a non-degenerate symmetric binary signature  satisfying the required properties whenever needed.

\subsection{The hardness proof}
\label{s:hardness}

Suppose $\cF$ is not in one of the tractable cases.
Then, in particular, $\cF\not\subseteq\avg{\cT}$ -- i.e.\ $\cF$ must contain multipartite entanglement  (cf.\ Section \ref{s:Holant_star}).

It is therefore possible to use Theorem \ref{thm:three-qubit-entanglement} to construct a ternary entangled signature.
The quantum state associated with this signature must be either in the GHZ or in the W SLOCC class.

In the GHZ case, it is always possible to construct a non-degenerate symmetric ternary signature by Lemma \ref{lem:GHZ_symmetrise}.
In the W case, if the ternary signature is not in $K\circ\cM$ or $KX\circ\cM$, it can be used to construct a non-degenerate ternary symmetric signature on its own by Lemma \ref{lem:W_symmetrise}.
If the ternary signature is in $K\circ\cM$, by Lemma \ref{lem:binary_notin_KcircM}, we can construct a binary signature that is not in $K\circ\cM$ since $\cF\not\subseteq K\circ\cM$; and similarly with $KX$ instead of $K$.
This then enables the use of Lemma \ref{lem:W_symmetrise}.

Hence if $\cF$ is not one of the tractable sets, it is always possible to construct a non-degenerate symmetric ternary signature.
Again, the quantum state associated with this signature must be either in the GHZ or in the W SLOCC class.

If it is a GHZ class state, use the following theorem and corollary to reduce the problem to Theorem \ref{thm:GHZ-state}.
This theorem yields \sP-hardness unless $\cF$ is a subset of $\avg{O\circ\cE}$ or of $\cA$, in which cases the problem is tractable.

\begin{lem}\label{lem:partial_bipartite}
 Let $f$ be a signature and $\cG$ a set of signatures. Then:
 \begin{equation}
  \Holant(\{f\}\cup\cG) \equiv_T \Holant(\{f, [1,0,1]\} \mid \cG\cup\{[1,0,1]\}).
 \end{equation}
\end{lem}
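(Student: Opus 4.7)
The plan is to prove both polynomial-time reductions by exploiting the fact that $[1,0,1]$ is the binary equality signature $=_2$, so inserting or removing a degree-2 vertex carrying this signature in the middle of an edge leaves the Holant unchanged: the constraint it imposes forces its two incident edges to carry the same value, exactly mimicking a single edge. This is essentially the device already used in the proof of the corollary following Theorem \ref{thm:Valiant_Holant}.

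For the direction $\Holp{\{f\}\cup\cG}\leq_T\Holp{\{f,[1,0,1]\}\mid\cG\cup\{[1,0,1]\}}$, given a signature grid $\Omega=(G,\{f\}\cup\cG,\pi)$, I would partition the vertex set into $V_f$ (those labelled by $f$) and $V_{\cG}$ (those labelled by some element of $\cG$), placing them on the left and right of the intended bipartition. For every edge of $G$ whose two endpoints lie on the same side of this bipartition I subdivide the edge by inserting a new $[1,0,1]$-labelled vertex on the opposite side (and similarly for self-loops). The resulting grid is bipartite with signatures from $\{f,[1,0,1]\}$ on the left and $\cG\cup\{[1,0,1]\}$ on the right; the construction is clearly polynomial-time, and each subdivision preserves the Holant, so the two values agree.

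For the reverse direction, given a bipartite instance $\Omega'$ of $\Holp{\{f,[1,0,1]\}\mid\cG\cup\{[1,0,1]\}}$, I would iteratively contract every vertex carrying signature $[1,0,1]$ by deleting it and merging its two incident edges into a single edge. After all contractions, the remaining vertices are each labelled by $f$ or by some element of $\cG$, yielding an instance of $\Holp{\{f\}\cup\cG}$ with the same Holant value. The only steps requiring care are a few degenerate configurations — in particular, isolated closed components formed entirely of $[1,0,1]$-vertices (e.g.\ two such vertices joined by a pair of parallel edges) contribute a known scalar factor of $2$ that can be precomputed, and self-loops created during contraction are absorbed in the obvious way. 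I do not expect any deep obstacle here: the lemma is essentially the familiar observation that one may freely interpose or suppress binary equality vertices without changing the Holant, repackaged as an equivalence between the bipartite and non-bipartite Holant formulations.
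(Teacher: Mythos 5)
Your proof is correct and takes essentially the same route as the paper: both directions rest on inserting binary equality vertices to make the grid bipartite and contracting them to go back, exactly as in the paper's proof. Your explicit handling of the degenerate configurations (closed components consisting only of $[1,0,1]$-vertices contributing a scalar factor, and self-loops created by contraction) is in fact slightly more careful than the paper's, which simply says to remove all $[1,0,1]$-vertices and merge the incident edges.
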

\begin{proof}
 To see $\Holant(\{f\}\cup\cG) \leq_T \Holant(\{f, [1,0,1]\} \mid \cG\cup\{[1,0,1]\})$, consider a signature grid over $\{f\}\cup\cG$.
 This can be transformed into a bipartite signature grid over $\{f, [1,0,1]\} \mid \cG\cup\{[1,0,1]\}$ by adding a new vertex with signature $[1,0,1]$ in the middle of any edge connecting two copies of $f$ or two signatures from $\cG\setminus\{f\}$.
 If $[1,0,1]\in\cG$, the converse is trivial.
 Otherwise, to get from a bipartite signature grid over $\{f, [1,0,1]\} \mid \cG\cup\{[1,0,1]\}$ to one over $\{f\}\cup\cG$, simply remove all vertices with signature $[1,0,1]$ and merge the edges originally incident on them.

 This process can be used whether or not $f$ is in $\cG$.
 Thus the proof is complete.
\end{proof}

\begin{cor}\label{cor:partial_holographic}
 Let $f$ be a signature and $\cG$ a set of signatures, and let $M$ be an invertible 2 by 2 matrix. Then:
 \begin{equation}
  \Holant (\{M\circ f\}\cup\cG) \equiv_T \Holant\left(\left\{f, M^{-1}\circ [1,0,1]\right\} \,\middle|\, (\cG\cup\{[1,0,1]\})\circ M^T \right).
 \end{equation}
\end{cor}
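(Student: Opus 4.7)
The plan is to combine Lemma \ref{lem:partial_bipartite} with Valiant's Holant Theorem in a two-step reduction. First I would invoke Lemma \ref{lem:partial_bipartite} with the distinguished signature taken to be $M\circ f$ and the ambient signature set taken to be $\cG$. This immediately yields
\[
 \Holant(\{M\circ f\}\cup\cG) \equiv_T \Holant(\{M\circ f,\, [1,0,1]\} \mid \cG\cup\{[1,0,1]\}),
\]
turning the non-bipartite problem into a bipartite one in which the invertibly-transformed signature $M\circ f$ sits on one side.

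Next I would apply Theorem \ref{thm:Valiant_Holant} to this bipartite instance using the invertible matrix $M^{-1}$. By the statement of that theorem, every signature on the left partition is replaced by $M^{-1}\circ(\cdot)$, while every signature on the right partition is replaced by $((M^{-1})^{-1})^T\circ(\cdot) = M^T\circ(\cdot)$. Since $(M^{-1})^{\otimes n}(M)^{\otimes n} = I$, we have $M^{-1}\circ(M\circ f) = f$, so the left-hand signature set becomes $\{f,\, M^{-1}\circ[1,0,1]\}$, while the right-hand signature set becomes $M^T\circ(\cG\cup\{[1,0,1]\})$, which is precisely the notation $(\cG\cup\{[1,0,1]\})\circ M^T$ used in the statement of the corollary. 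Chaining the two reductions gives the claim.

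I do not expect any genuine obstacle here, since both ingredients are already in place. The only point requiring care is bookkeeping of the direction of the holographic transformation: one must verify that choosing $M^{-1}$ (rather than $M$, or $(M^{-1})^T$) for the left-side holographic change of basis is what produces exactly $M^T$ on the right side and cancels the prefactor on $f$. This is immediate from Theorem \ref{thm:Valiant_Holant} once one tracks the $\bigl((\cdot)^{-1}\bigr)^T$ on the opposite partition. It is also worth noting that the argument goes through uniformly regardless of whether $f$ or $[1,0,1]$ already belongs to $\cG$, since Lemma \ref{lem:partial_bipartite} has been stated and proved in this generality.
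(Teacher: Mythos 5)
Your proposal is correct and matches the paper's own (one-line) proof, which simply cites Lemma \ref{lem:partial_bipartite} together with Valiant's Holant Theorem; you have merely made explicit the right bookkeeping, namely applying Theorem \ref{thm:Valiant_Holant} with the matrix $M^{-1}$ so that the left side becomes $\{f, M^{-1}\circ[1,0,1]\}$ and the right side picks up $M^T$, which is exactly what the notation $(\cG\cup\{[1,0,1]\})\circ M^T$ denotes. No gaps.
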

\begin{proof}
 This follows from Lemma \ref{lem:partial_bipartite} and Valiant's Holant theorem.
\end{proof}

If the non-degenerate symmetric ternary signature $\ket{\psi}$ constructed in Section \ref{s:symmetrising_ternary} is in the W class, by Lemma \ref{lem:W-state}, the problem is \sP-hard unless the signature is in $K\circ\cM$ (or $KX\circ\cM$).
In the latter case, as $\cF\not\subseteq K\circ\cM$ (or $\cF\not\subseteq KX\circ\cM$), use Lemmas \ref{lem:binary_notin_KcircM} and \ref{lem:binary-symmetric} to construct a symmetric binary signature that is not in $K\circ\cM$ (or $KX\circ\cM$, respectively).

Consider the case $\ket{\psi}\in K\circ\cM$, i.e.:
\begin{equation}
 \ket{\psi} \doteq K\t{3} (v\ket{000}+\ket{W})
\end{equation}
for some $v\in\CC$.
Then the symmetric binary signature $\ket{\varphi}$ constructed in Lemma \ref{lem:binary-symmetric} is, up to scalar factor:
\begin{equation}
 K\circ \left[ 2\phi_{00}\phi_{01} + \phi_{01}^2 v', \;\;\phi_{00}+\phi_{01}\phi_{10}+\phi_{01}v', \;\; 2\phi_{10} + v' \right]
\end{equation}
for some $v', \phi_{00}, \phi_{01}, \phi_{10}\in\CC$ satisfying $\phi_{00}-\phi_{01}\phi_{10}\neq 0$ and $v'\in\{v\pm i, v\pm 1\}$.
Thus:
\begin{equation}
 \Holp{\{\ket{\varphi}\}\mid\{\ket{\psi}\}} \equiv_T \Holp{M\circ\{\ket{\varphi}\}\mid\{\ket{W}\}},
\end{equation}
where:
\begin{equation}
 M = \begin{pmatrix}1&0\\-\frac{v}{3}&1\end{pmatrix} K^T.
\end{equation}
The signature associated with $M\circ\ket{\varphi}$ is:
\begin{equation}
 \begin{pmatrix}0&1\\1&-\frac{v}{3}\end{pmatrix}\circ \left[ 2\phi_{00}\phi_{01} + \phi_{01}^2 v', \;\;\phi_{00}+\phi_{01}\phi_{10}+\phi_{01}v', \;\; 2\phi_{10} + v' \right]
\end{equation}
On input $00$, this signature takes the value $2\phi_{01}+v'$, which is non-zero by the construction of $\ket{\varphi}$.

Now, if we have a set of states $\cF$ and a symmetric ternary W-class state $\ket{\psi}$, $\Holant(\{\ket{\psi}\})$ is \sP-hard unless $\ket{\psi}\in K\circ\cM$.
But in the latter case, unless $\cF\subseteq K\circ\cM$, we can construct a binary symmetric entangled state $\ket{\varphi}\notin K\circ\cM$. Then $\Holant(\{\ket{\psi}\}\mid\{\ket{\varphi}\})$ is \sP-hard by the result of Cai \emph{et al.} \cite{cai_holant_2012} quoted in Section \ref{s:results_ternary_symmetric}.

A similar argument holds with $KX$ instead of $K$.

This concludes the \sP-hardness proof.
We have thus proved Theorem \ref{thm:main}.
\section{Conclusions}

Inspired by the connection between holographic algorithms and quantum computation, we apply knowledge from quantum information theory to Holant problems.
In particular, we reformulate existing dichotomies in the framework of quantum entanglement, leading to a concise way of expressing many known tractable classes of functions.
Motivated by this and by existing results in entanglement theory, we define a new Holant family.
This family is denoted \textsc{Holant}$^+$ and it has four freely available unary functions, including the two that are available in \textsc{Holant}$^c$.
We derive a full dichotomy for this family, which is closely related to the dichotomy for symmetric \textsc{Holant}$^c$ \cite{cai_holant_2012}.

As no full dichotomy is known for \textsc{Holant}$^c$, our dichotomy sits at the frontier of Holant research.
The similarity to the restricted \textsc{Holant}$^c$ dichotomy indicates that our result may be a useful stepping stone towards a full \textsc{Holant}$^c$ dichotomy, and thus to a full dichotomy for all Holant problems.

In deriving our \textsc{Holant}$^+$ dichotomy, we prove a new result in entanglement theory: given any $n$-qubit genuinely entangled state, it is possible to find some subset of $(n-3)$ qubits and a projector which is a tensor product of $(n-3)$ computational and Hadamard basis states so that the projection leaves the remaining three qubits in a genuinely entangled state.
This is a generalisation of a similar result about constructing two-qubit entangled states \cite{popescu_generic_1992,gachechiladze_addendum_2016}, though our result is slightly weaker: the original theorem applies for any choice of two qubits, whereas we just show that there exists some choice of three qubits for which the statement holds.
It may be possible to strengthen the argument in future work.

We expect that further analysis of Holant problems using methods from quantum information and quantum computation will lead to further new insights, both into the complexity of Holant problems and into entanglement or other areas of quantum theory.

\section*{Acknowledgements}

I would like to thank Ashley Montanaro, both for introducing me to Holant problems and for helpful comments on earlier drafts of this paper.
I acknowledge funding from EPSRC via grant EP/L021005/1.

\bibliographystyle{eptcs}
\bibliography{refs}

\appendix
\section{Proofs of Lemmas \ref{lem:n=4} and \ref{lem:n=5}}
\label{s:appendix}

Lemmas \ref{lem:n=4} and \ref{lem:n=5} state that, given a genuinely entangled state on four or five qubits, respectively, it is always possible to project one of the qubits onto one of the states $\ket{0}, \ket{1}, \ket{+}$ or $\ket{-}$ in such a way that the remaining state still contains multipartite entanglement.

In proving these statements, we use the inductive entanglement classification by Lamata \emph{et al.}\cite{lamata_inductive_2006} (see also Section \ref{s:inductive_classification}).
This classification introduces a way of grouping the infinite number of entanglement classes into families and of determining representative states for each family.
The process is the following: an $n$-qubit state $\ket{\Psi}$ can be expressed as:
\begin{equation}
 \ket{\Psi} = \ket{0}\ket{\Phi_0} + \ket{1}\ket{\Phi_1},
\end{equation}
where $\ket{\Phi_0}$ and $\ket{\Phi_1}$ are linearly independent if $\ket{\Psi}$ is genuinely entangled.
A family is then determined by the type of $(n-1)$-qubit entanglement found in $\spans\left\{\ket{\Phi_0},\ket{\Phi_1}\right\}$.
The resulting classification is invariant under SLOCC, i.e.\ if $\ket{\Psi_1}$ and $\ket{\Psi_2}$ are related by SLOCC, then they are in the same family.

Representative states for each family are constructed by using SLOCC to eliminate as many free parameters as possible.
A complete list of entanglement families with their representative states is known for four-qubit states \cite{lamata_inductive_2007,backens_inductive_2016}.
For five-qubit states, such a list does not exist yet.
To avoid having to classify all five-qubit entangled states in detail, we use slightly different proof approaches for the two Lemmas.

\begin{proof}[Proof of Lemma \ref{lem:n=4}.]
 In this proof, we use the explicit inductive classification of four-qubit entanglement by Lamata \emph{et al.}\cite{lamata_inductive_2007}, as updated in \cite{backens_inductive_2016}.\footnote{This is also why we are now projecting the first qubit rather than the last -- as the list of cases is exhaustive, this is without loss of generality.} 
 Those two papers give 10 families of genuinely entangled four-qubit states, together with the corresponding representative states.
 
 Any four-qubit genuinely entangled state can be written as $\ket{\Psi}=(F_1\otimes F_2\otimes F_3\otimes F_4)\ket{\Theta}$, where $\ket{\Theta}$ is one of the entanglement superclass representatives and $F_1,F_2,F_3,F_4$ are invertible single-qubit operators.
 Then:
 \[
  \ket{\Phi} = \bra{\theta}_1 (F_1\otimes F_2\otimes F_3\otimes F_4) \ket{\Theta} = \left( \left(\bra{\theta} F_1\right)\otimes F_2\otimes F_3\otimes F_4 \right) \ket{\Theta}.
 \]
 Now, the operator $F_2\otimes F_3\otimes F_4$ has no effect on the entanglement classification of $\ket{\Phi}$; we will therefore ignore it from now on.
 The operator $F_1$ can be absorbed into the projector, i.e.\ let $\bra{\theta'} = \bra{\theta}F_1$.
 As $\ket{\theta} \in \left\{\ket{0}, \ket{1}, \ket{+}, \ket{-}\right\}$, the set $\left\{ \bra{\theta'} \right\}$ contains four pairwise linearly independent projectors.
 The normalisation of the projectors does not change the entanglement classification either, so choose the following:
 \[
  \bra{\theta'} = \begin{cases} \bra{0} &\text{if } \braket{\theta'}{1}=0, \\ x\bra{0} + \bra{1} &\text{otherwise.} \end{cases}
 \]
 Pairwise linear independence then implies that at most one $\bra{\theta'}$ is equal to $\bra{0}$, and the others all have distinct values for $x$.
 Sometimes we will write $\bra{\theta'}=x\bra{0}+y\bra{1}$, with the understanding that either $y=0$ or $y=1$.
 Thus it suffices to consider the representative state $\ket{\Theta}$ of each entanglement class and the modified set of projectors $\{\bra{\theta'}\}$.
 
 This leads to the following cases, taken from \cite{lamata_inductive_2007,backens_inductive_2016}.
 \begin{description}
  \item [$\mathfrak{W}_{000,000}$] The representative state is $\ket{GHZ_4}=\ket{0000}+\ket{1111}$; any non-computational basis projector yields another GHZ-type entangled state.
  
  \item [$\mathfrak{W}_{000,0_k\Psi}$] There are two representative states, $\ket{0000}+\ket{1100}+\ket{1111}$ and $\ket{0000}+\ket{1101}+\ket{1110}$. In either case, projecting the first qubit onto a non-computational basis state yields a three-qubit entangled state. (That state is in the GHZ SLOCC class for the first representative and in the W SLOCC class for the second one.)
  
  \item [$\mathfrak{W}_{000,GHZ}$] A four-qubit state is in this entanglement superclass if it can be written as:
   \[
    \left(F_1\otimes F_2\otimes F_3\otimes F_4\right) \left( \ket{0\varphi\psi\theta} + \ket{1000}+\ket{1111} \right)
   \]
   where $\spans\left\{\ket{\varphi\psi\theta},\ket{000}+\ket{111}\right\}$ contains no separable vectors other than $\ket{\varphi\psi\theta}$. Projecting the first qubit onto some state $\ket{\theta}$ satisfying $\bra{\theta}F_1=x\bra{0}+y\bra{1}$ yields:
   \[
    \left(F_2\otimes F_3\otimes F_4\right) \left(  x\ket{\varphi\psi\theta} + y\left(\ket{000}+\ket{111}\right) \right),
   \]
   i.e.\ up to SLOCC an element of $\spans\left\{\ket{\varphi\psi\theta},\ket{000}+\ket{111}\right\}$. Hence, for $y\neq 0$, this must be a three-qubit entangled state.
   
  \item [$\mathfrak{W}_{000,W}$] The representative state is $\ket{W_4} = \ket{0001}+\ket{0010}+\ket{0100}+\ket{1000}$. 
  The argument is analogous to that for $\mathfrak{W}_{000,GHZ}$.
  
  \item [$\mathfrak{W}_{0_k\Psi,0_k\Psi}$] There are two representative states (or six if one includes permutations of the last three qubits, which do not affect whether the post-projection state is genuinely entangled):
   \begin{gather*}
    \ket{0000}+\ket{1100}+\lambda\ket{0011}+\mu\ket{1111} \quad\text{or} \\
    \ket{0000}+\ket{1100}+\lambda\ket{0001}+\lambda\ket{0010}+\mu\ket{1101}+\mu\ket{1110},
   \end{gather*}
   where $\lambda\neq\mu$.
   
   In the first case, projecting the first qubit onto a state $\bra{\theta'}=x\bra{0}+y\bra{1}$ yields $x\ket{000}+\lambda x\ket{011}+y\ket{100}+\mu y\ket{111}$. This is in the GHZ SLOCC class iff $(\lambda - \mu)^2 x^2 y^2\neq 0$ (cf.\ Section \ref{s:li_et_al}), which is satisfied for any $xy\neq 0$, i.e.\ whenever $\bra{\theta'}$ is not a computational basis state.
   
   In the second case, projecting the first qubit onto a state $\bra{\theta'}$ as above yields:
   \[
    x\ket{000} + \lambda x\ket{001} + \lambda x \ket{010} + y\ket{100} + \mu y \ket{101} + \mu y \ket{110},
   \]
   which is in the W SLOCC class iff:
   \[
    \left( 0\neq \lambda^2x^2 \vee \mu^2y^2\neq 0 \right) \wedge (\lambda xy \neq \mu xy).
   \]
   The last inequality is satisfied whenever $xy\neq 0$ since $\lambda$ cannot be equal to $\mu$.
   But $\lambda,\mu$ cannot both be zero, so $xy\neq 0$ is also sufficient to satisfy at least one of the first two inequalities. Hence the state is in the W class whenever the projector is not a computational basis state.
  \item [$\mathfrak{W}_{0_i\Psi,0_j\Psi}$] A four-qubit state is in this entanglement class if it can be written as:
   \[
    \ket{\phi\varphi_1\Psi_1} + \ket{\bar{\phi}\varphi_2\psi_2\theta_2}+\ket{\bar{\phi}\bar{\varphi}_2\psi_2\bar{\theta}_2},
   \]
   where $\ket{\Psi_1}$ is a two-qubit entangled state and overbars denote linear independence. Additionally, the classification requires that $\spans\left\{\ket{\varphi_1\Psi_1}, \ket{\varphi_2\psi_2\theta_2}+\ket{\bar{\varphi}_2\psi_2\bar{\theta}_2}\right\}$ contain no fully separable states and no two bipartite separable states that are separable along the same partition. But there are only three different bipartitions, so any set of four pairwise linearly independent elements of $\spans\left\{\ket{\varphi_1\Psi_1}, \ket{\varphi_2\psi_2\theta_2}+\ket{\bar{\varphi}_2\psi_2\bar{\theta}_2}\right\}$ must contain a three-qubit entangled state.
 \end{description}
 The arguments for the four remaining cases -- $\mathfrak{W}_{0_k,GHZ}$, $\mathfrak{W}_{0_k,W}$, $\mathfrak{W}_{GHZ,W}$ and $\mathfrak{W}_{W,W}$ -- are analogous to that for $\mathfrak{W}_{000,GHZ}$.
\end{proof}

\begin{proof}[Proof of Lemma \ref{lem:n=5}.]
 The approach here is similar to, if not quite the same as, the four-qubit case.
 The main difference is that the inductive classification of five-qubit entanglement does not exist yet, and we would like to avoid having to classify all the states ourselves.
 
 Instead, we decompose any five-qubit entangled state as $\ket{0}\ket{\Phi_0}+\ket{1}\ket{\Phi_1}$, where $\ket{\Phi_0},\ket{\Phi_1}$ are non-vanishing four-qubit states, and consider the entanglement families of $\ket{\Phi_0}$ and $\ket{\Phi_1}$.
 We use the computational basis on the first qubit rather than the eigenvectors of the coefficient matrix, as used in the inductive classification.
 As we do not care about assigning a unique entanglement family to each state but only need to ensure that each state is considered at least once, this approach simplifies the process.
 
 Furthermore, unlike in the proof of Lemma \ref{lem:n=4}, we do not transform to representative states and then use a correspondingly transformed set of projectors: instead we keep the general entangled states and the projectors are taken to be computational or Hadamard basis states.
 
 Throughout, we do not explicitly analyse different permutations of the same state, as permutations of the qubits do not change whether there is multipartite entanglement or not.
 In this sense, when we give an entanglement class of the post-projection four-qubit state, the exact class is not relevant as it may change under qubit permutations; what matters is that it is a class of genuine four-partite entanglement (and this fact does not change under qubit permutations).
 
 Following \cite{lamata_inductive_2006}, we use lower-case Greek letters to denote arbitrary single-qubit states and upper-case Greek letters to denote two-qubit entangled states.
 Overbars denote states that are linearly independent, e.g.\ $\ket{\phi}$ and $\ket{\bar{\phi}}$ are required to be linearly independent.
 We also employ the labels used by Lamata \textit{et al.}: 0000 for a product state (corresponding to a degenerate signature), $00\Psi$ for the tensor product of two single-qubit states and a two-qubit state, and so on.
 Additionally, we use the label $\Phi$ for an arbitrary four-qubit entangled state -- as we only care about the presence of multipartite entanglement, we do not need to distinguish the different families of four-qubit entanglement.
 
 Thus the following cases need to be considered. 
 \begin{description}
  \item [$0000,0000$] In this case, we can write $\ket{\Phi_0}=\ket{\phi\varphi\psi\theta}$ and $\ket{\Phi_1}=\ket{\bar{\phi}\bar{\varphi}\bar{\psi}\bar{\theta}}$. Then $\ket{\Phi_0}\pm\ket{\Phi_1}$ is in the GHZ$_4$ SLOCC class.
  
  \item [$0000,00\Psi$] In this case, $\ket{\Phi_0}=\ket{\phi\varphi\psi\theta}$ and $\ket{\Phi_1}=\ket{\bar{\phi}\bar{\varphi}\Psi}$.
   As in the $\mathfrak{W}_{000,0\Psi}$ case of the classification of four-qubit states \cite{lamata_inductive_2007}, there are two options: either $\spans\left\{\ket{\psi\theta},\ket{\Psi}\right\}$ is spanned by two product vectors, or this subspace only contains one product vector. In the former case, $\ket{\Psi}$ can be written as $a\ket{\psi\theta}+b\ket{\bar{\psi}\bar{\theta}}$ with $ab\neq 0$, in the latter as $a\ket{\psi\theta}+b\left(\ket{\psi\bar{\theta}}+\ket{\bar{\psi}\theta}\right)$ with $b\neq 0$. Projecting the first qubit onto $\ket{+}$ yields:
   \[
    \ket{\phi\varphi\psi\theta}+a\ket{\bar{\phi}\bar{\varphi}\psi\theta}+b\ket{\bar{\phi}\bar{\varphi}\bar{\psi}\bar{\theta}} \quad\text{or}\quad \ket{\phi\varphi\psi\theta}+a\ket{\bar{\phi}\bar{\varphi}\psi\theta}+b\ket{\bar{\phi}\bar{\varphi}\psi\bar{\theta}}+b\ket{\bar{\phi}\bar{\varphi}\bar{\psi}\theta}.
   \]
   The former is SLOCC equivalent to $\ket{0000}+\ket{1100}+\ket{1111}$, i.e.\ the state is in the $\mathfrak{W}_{000,0\Psi}$ class.
   For the latter, there are two cases: if $a=0$, it is SLOCC equivalent to $\ket{0000} + \ket{1101} + \ket{1110}$, which is also in the $\mathfrak{W}_{000,0\Psi}$ class. If $a\neq 0$ on the other hand, the SLOCC equivalence is to $\ket{0000} + \ket{1100} + \frac{b}{a}\ket{1101} + \frac{b}{a}\ket{1110}$, which is in the $\mathfrak{W}_{0_k\Psi,0_k\Psi}$ class (the equivalence is to the second canonical state with $\lambda_1=0$ and $\lambda_2=b/a$).
   
   \item [$0000,\Psi\Psi$]: In this case, $\ket{\Phi_0}=\ket{\phi\varphi\psi\theta}$ and $\ket{\Phi_1}=\ket{\Psi\Phi}$.
   Now we use the above argument about product vectors in subspaces twice, for $\spans\left\{\ket{\phi\varphi},\ket{\Psi}\right\}$ and for $\spans\left\{\ket{\psi\theta},\ket{\Phi}\right\}$.
   This gives four representative states for the class.
   Note, though, that two of the cases are related by a permutation of the qubits: namely the ones in which one of the subspaces contains two product vectors and the other does not.
   These two cases do not to be considered separately as the property of having genuine multipartite entanglement is invariant under qubit permutations.
   Hence the cases we need to distinguish are the following:
   \begin{itemize}
    \item $\ket{\Psi} = a\ket{\phi\varphi}+b\ket{\bar{\phi}\bar{\varphi}}$ and $\ket{\Phi} = c\ket{\psi\theta}+d\ket{\bar{\psi}\bar{\theta}}$ with $ab\neq 0$ and $cd\neq 0$: projecting the first qubit onto $\ket{\pm}$ yields:
     \[
      (1\pm ac) \ket{\phi\varphi\psi\theta} \pm ad\ket{\phi\varphi\bar{\psi}\bar{\theta}} \pm bc \ket{\bar{\phi}\bar{\varphi}\psi\theta} \pm bd \ket{\bar{\phi}\bar{\varphi}\bar{\psi}\bar{\theta}}
     \]
     If $1\pm ac=0$, the state is SLOCC equivalent to $\ket{0011}+\ket{1100}+\ket{1111}$, which is in the $\mathfrak{W}_{000,0_k\Psi}$ class (to see this, apply $I\otimes I\otimes X\otimes X$).
     If $1\pm ac\neq 0$, the state is SLOCC equivalent to:
     \[
      \ket{0000}+\frac{\pm ad}{1\pm ac}\ket{0011}+\ket{1100}+\frac{d}{c}\ket{1111},
     \]
     which is in the $\mathfrak{W}_{0_k\Psi,0_k\Psi}$ class.
    \item $\ket{\Psi} = a\ket{\phi\varphi}+b\ket{\bar{\phi}\bar{\varphi}}$ and $\ket{\Phi} = c\ket{\psi\theta}+d\left(\ket{\psi\bar{\theta}}+\ket{\bar{\psi}\theta}\right)$, where $ab\neq 0$ and $d\neq 0$: projecting the first qubit onto $\ket{\pm}$ yields:
     \[
      (1\pm ac)\ket{\phi\varphi\psi\theta} \pm ad\left(\ket{\phi\varphi\psi\bar{\theta}} + \ket{\phi\varphi\bar{\psi}\theta}\right) \pm bc \ket{\bar{\phi}\bar{\varphi}\psi\theta} \pm bd \left(\ket{\bar{\phi}\bar{\varphi}\psi\bar{\theta}} + \ket{\bar{\phi}\bar{\varphi}\bar{\psi}\theta}\right).
     \]
     If $1\pm ac=0$, this is SLOCC equivalent to:
     \[
      \ket{00}\ket{\Psi^+} + \ket{11}(c\ket{00}+d\ket{01}+d\ket{10}),
     \]
     which is in the $\mathfrak{W}_{0_k\Psi,0_k\Psi}$ class. In particular, the state $c\ket{00}+d\ket{01}+d\ket{10}$ is always linearly independent from $\ket{\Psi^+}=\ket{01}+\ket{10}$ as $ac=\mp 1$.
     If $1\pm ac\neq 0$, the state is SLOCC equivalent to:
     \[
      \ket{00}\left(\left(\frac{c}{d}\pm\frac{1}{ad}\right)\ket{00} + \ket{\Psi^+}\right) + \ket{11}\left(\frac{c}{d}\ket{00} + \ket{\Psi^+}\right).
     \]
     The two terms in parentheses are two entangled two-qubit states, and they are linearly independent. Hence this state is again in the $\mathfrak{W}_{0_k\Psi,0_k\Psi}$ class.
    \item $\ket{\Psi} = a\ket{\phi\varphi}+b\left(\ket{\phi\bar{\varphi}}+\ket{\bar{\phi}\varphi}\right)$ and $\ket{\Phi} = c\ket{\psi\theta}+d\left(\ket{\psi\bar{\theta}}+\ket{\bar{\psi}\theta}\right)$, where $b,d\neq 0$: projecting the first qubit onto $\ket{\pm}$ yields:
     \begin{multline*}
      (1\pm ac)\ket{\phi\varphi\psi\theta} \pm ad\left(\ket{\phi\varphi\psi\bar{\theta}} + \ket{\phi\varphi\bar{\psi}\theta}\right) \pm bc \left( \ket{\phi\bar{\varphi}\psi\theta} + \ket{\bar{\phi}\varphi\psi\theta}\right) \\
      \pm bd \left(\ket{\phi\bar{\varphi}\psi\bar{\theta}} + \ket{\phi\bar{\varphi}\bar{\psi}\theta} + \ket{\bar{\phi}\varphi\psi\bar{\theta}} + \ket{\bar{\phi}\varphi\bar{\psi}\theta}\right).
     \end{multline*}
     This can be rewritten as:
     \begin{multline*}
      \ket{\phi} \left( (1\pm ac)\ket{\varphi\psi\theta} \pm ad \ket{\varphi\psi\bar{\theta}} \pm ad \ket{\varphi\bar{\psi}\theta} \pm bc \ket{\bar{\varphi}\psi\theta} \pm bd \ket{\bar{\varphi}\psi\bar{\theta}} \pm bd \ket{\bar{\varphi}\bar{\psi}\theta}\right) \\ \pm b \ket{\bar{\phi}\varphi\Phi}
     \end{multline*}
     The three-qubit state in parentheses on the first line is always in the W SLOCC class, so the full four-qubit state is in the $\mathfrak{W}_{0\Psi,W}$ class.
   \end{itemize}

  \item [${0000,0GHZ}$] This case contains three-partite entanglement when projected onto the correct computational basis state.
 
  \item [${0000,0W}$] This case contains three-partite entanglement when projected onto the correct computational basis state.
 
  \item [${0000,\Phi}$] This case contains multipartite entanglement when projected onto the correct computational basis state.
 
  \item [${00\Psi,00\Psi}$] In this case, $\ket{\Phi_0} = \ket{\phi\varphi\Psi}$ and $\ket{\Phi_1} = \ket{\bar{\phi}\bar{\varphi}\bar{\Psi}}$. Projecting the first qubit onto $\ket{+}$ results in the state $\ket{\Phi_0}+\ket{\Phi_1}$, which is in the $\mathfrak{W}_{0_k\Psi,0_k\Psi}$ class and thus genuinely four-partite entangled.
 
  \item [${00\Psi,0\Psi0}$] Analogous to the above, projecting the first qubit onto $\ket{+}$ results in a state from the $\mathfrak{W}_{0_i\Psi,0_j\Psi}$ class, which is genuinely four-partite entangled.
  
  \item [${00\Psi,\Psi00}$] In this case, $\ket{\Phi_0} = \ket{\phi\varphi\Psi}$ and $\ket{\Phi_1} = \ket{\Phi\psi\theta}$, where $\ket{\Psi}$ and $\ket{\Phi}$ are entangled two-qubit states. As in the $\mathfrak{W}_{0000,\Psi\Psi}$ case, there are four cases, depending on whether $\spans\left\{\ket{\phi\varphi},\ket{\Phi}\right\}$ and $\spans\left\{\ket{\psi\theta},\ket{\Psi}\right\}$ contain one or two product vectors respectively. Again, the two cases where one subspace contains two product vectors and the other does not are symmetric. We therefore distinguish the following three cases.
  \begin{itemize}
   \item $\ket{\Psi}=a\ket{\psi\theta}+b\ket{\bar{\psi}\bar{\theta}}$ and $\ket{\Phi}=c\ket{\phi\varphi}+d\ket{\bar{\phi}\bar{\varphi}}$, where $ab\neq 0$ and $cd\neq 0$: projecting the first qubit onto $\ket{\pm}$ yields:
    \[
     (a\pm c)\ket{\phi\varphi\psi\theta} + b\ket{\phi\varphi\bar{\psi}\bar{\theta}}  \pm d\ket{\bar{\phi}\bar{\varphi}\psi\theta}
    \]
    As $a,c\neq 0$, for at least one choice of sign $a\pm c\neq 0$. For this choice, the state is in the $\mathfrak{W}_{000,0_k\Psi}$ class.
   \item $\ket{\Psi}=a\ket{\psi\theta}+b\left(\ket{\psi\bar{\theta}}+\ket{\bar{\psi}\theta}\right)$ and $\ket{\Phi}=c\ket{\phi\varphi}+d\ket{\bar{\phi}\bar{\varphi}}$, where $b\neq 0$ and $cd\neq 0$: projecting the first qubit onto $\ket{\pm}$ yields:
    \[
     (a\pm c)\ket{\phi\varphi\psi\theta} + b\ket{\phi\varphi\psi\bar{\theta}} + b\ket{\phi\varphi\bar{\psi}\theta} \pm d\ket{\bar{\phi}\bar{\varphi}\psi\theta}.
    \]
    This states is in the $\mathfrak{W}_{000,0_k\Psi}$ class for any values of $a,c$.
    \item $\ket{\Psi}=a\ket{\psi\theta} + b\left(\ket{\psi\bar{\theta}}+\ket{\bar{\psi}\theta}\right)$ and $\ket{\Phi}=c\ket{\phi\varphi} + d\left(\ket{\phi\bar{\varphi}}+\ket{\bar{\phi}\varphi}\right)$, where $b,d\neq 0$: projecting the first qubit onto $\ket{\pm}$ yields:
     \[
      (a\pm c)\ket{\phi\varphi\psi\theta} + b\ket{\phi\varphi\psi\bar{\theta}} + b\ket{\phi\varphi\bar{\psi}\theta} \pm d\ket{\phi\bar{\varphi}\psi\theta} \pm d\ket{\bar{\phi}\varphi\psi\theta}.
     \]
     This state is in the $\mathfrak{W}_{000,W}$ class for any values of $a,c$.
  \end{itemize}
  
  \item [${00\Psi,\Psi\Psi}$] In this case, $\ket{\Phi_0} = \ket{\phi\varphi\Psi}$ and $\ket{\Phi_1} = \ket{\Phi\bar{\Psi}}$. Again, there are four cases, depending on whether $\spans\left\{\ket{\phi\varphi},\ket{\Phi}\right\}$ and $\spans\left\{\ket{\Psi},\ket{\bar{\Psi}}\right\}$ contain one or two product vectors respectively.
  \begin{itemize}
   \item $\ket{\Phi}=r\ket{\phi\varphi}+s\ket{\bar{\phi}\bar{\varphi}}$, $\ket{\Psi} = a\ket{\psi\theta}+b\ket{\bar{\psi}\bar{\theta}}$ and $\ket{\bar{\Psi}} = c\ket{\psi\theta}+d\ket{\bar{\psi}\bar{\theta}}$, where $rs\neq 0$, $ab\neq0$, $cd\neq 0$ and $ad-bc\neq 0$:\footnote{The last condition is linear independence of $\ket{\Psi}$ and $\ket{\bar{\Psi}}$.} projecting the first qubit onto $\ket{\pm}$ yields:
    \[
     (a\pm cr)\ket{\phi\varphi\psi\theta} + (b\pm dr) \ket{\phi\varphi\bar{\psi}\bar{\theta}} \pm cs \ket{\bar{\phi}\bar{\varphi}\psi\theta} \pm ds \ket{\bar{\phi}\bar{\varphi}\bar{\psi}\bar{\theta}}
    \]
    Since $a,b,c,d,r$ are all non-zero, the two expressions $a\pm cr$ and $b\pm dr$ cannot both be zero for both choices of sign.
    When they are not both zero, the following cases arise:
    \begin{itemize}
     \item $a\pm cr=0\neq b\pm dr$: map states with an overbar to $\ket{0}$, non-barred ones to $\ket{1}$ to get:
      \[
       (b\pm dr) \ket{1100} \pm cs \ket{0011} \pm ds \ket{0000}.
      \]
      This state is in the $\mathfrak{W}_{0_k\Psi,0_k\Psi}$ class.
     \item $a\pm cr\neq 0 = b\pm dr$: map non-barred states to $\ket{0}$ and ones with an overbar to $\ket{1}$ to get:
      \[
       (a\pm cr)\ket{0000} \pm cs \ket{1100} \pm ds \ket{1111}.
      \]
      This state is in the $\mathfrak{W}_{0_k\Psi,0_k\Psi}$ class.
     \item $a\pm cr,b\pm dr \neq 0$: map non-barred states to $\ket{0}$ and ones with an overbar to $\ket{1}$ to get:
      \[
       (a\pm cr)\ket{0000} + (b\pm dr) \ket{0011} \pm cs \ket{1100} \pm ds \ket{1111}.
      \]
      The state can be further transformed to:
      \[
       \ket{0000} + \frac{b\pm dr}{a\pm cr} \ket{0011} + \ket{1100} + \frac{d}{c} \ket{1111}
      \]
      by a SLOCC operation on the first qubit that re-scales the computational basis states.
      Now it is straightforward to check that $\frac{b\pm dr}{a\pm cr}\neq\frac{d}{c}$, which implies that this state is in the $\mathfrak{W}_{0_k\Psi,0_k\Psi}$ class.
    \end{itemize}
   \item $\ket{\Phi}=r\ket{\phi\varphi}+s\ket{\bar{\phi}\bar{\varphi}}$, $\ket{\Psi} = a\ket{\psi\theta} + b\left(\ket{\psi\bar{\theta}}+\ket{\bar{\psi}\theta}\right)$ and (as $\ket{\Psi}$ and $\ket{\bar{\Psi}}$ must always take the same general form) $\ket{\bar{\Psi}} = c\ket{\psi\theta} + d\left(\ket{\psi\bar{\theta}}+\ket{\bar{\psi}\theta}\right)$, where $rs\neq 0$, $b\neq0$, $d\neq 0$ and $ad-bc\neq 0$: projecting the first qubit onto $\ket{\pm}$ yields:
    \[
     (a\pm cr)\ket{\phi\varphi\psi\theta} + (b\pm dr) \left( \ket{\phi\varphi\psi\bar{\theta}} + \ket{\phi\varphi\bar{\psi}\theta} \right) \pm cs \ket{\bar{\phi}\bar{\varphi}\psi\theta} \pm ds \left( \ket{\bar{\phi}\bar{\varphi}\psi\bar{\theta}} + \ket{\bar{\phi}\bar{\varphi}\bar{\psi}\theta} \right)
    \]
    Note that this can be rewritten as:
    \[
     \ket{\phi\varphi}\left((a\pm cr)\ket{\psi\theta} + (b\pm dr) \left( \ket{\psi\bar{\theta}} + \ket{\bar{\psi}\theta} \right) \right) \pm s \ket{\bar{\phi}\bar{\varphi}\bar{\Psi}}
    \]
    The parameters $b,d,r$ are all non-zero, so there exists a choice of sign such that $b\pm dr\neq 0$. For this choice, the two-qubit state in parentheses is entangled. Furthermore, it is linearly independent of $\ket{\bar{\Psi}}$ as $(a\pm cr)d-(b\pm dr)c = ad-bc\neq 0$. Hence the state is in the $\mathfrak{W}_{0_k\Psi,0_k\Psi}$ class.
   \item $\ket{\Phi}=r\ket{\phi\varphi}+s\left(\ket{\phi\bar{\varphi}}+\ket{\bar{\phi}\varphi}\right)$, $\ket{\Psi} = a\ket{\psi\theta}+b\ket{\bar{\psi}\bar{\theta}}$ and $\ket{\bar{\Psi}} = c\ket{\psi\theta}+d\ket{\bar{\psi}\bar{\theta}}$, where $s\neq 0$, $ab\neq0$, $cd\neq 0$ and $ad-bc\neq 0$: projecting the first qubit onto $\ket{\pm}$ yields:
    \[
     (a\pm cr)\ket{\phi\varphi\psi\theta} + (b\pm dr) \ket{\phi\varphi\bar{\psi}\bar{\theta}} \pm cs \left(\ket{\phi\bar{\varphi}\psi\theta} + \ket{\bar{\phi}\varphi\psi\theta}\right) \pm ds \left( \ket{\phi\bar{\varphi}\bar{\psi}\bar{\theta}} + \ket{\bar{\phi}\varphi\bar{\psi}\bar{\theta}} \right)
    \]
    Continue the inductive classification according to the last (rather than first) qubit. This yields:
    \[
      \left( (a\pm cr)\ket{\phi\varphi} \pm cs \ket{\phi\bar{\varphi}} + \ket{\bar{\phi}\varphi}\right)\ket{\psi\theta} + \left( (b\pm dr) \ket{\phi\varphi} \pm ds \left( \ket{\phi\bar{\varphi}} + \ket{\bar{\phi}\varphi} \right) \right) \ket{\bar{\psi}\bar{\theta}}.
    \]
    The two terms in parentheses are entangled two-qubit states as $cds\neq 0$. They are furthermore linearly independent. Thus the state is entangled and (up to a permutation of the qubits) in the $\mathfrak{W}_{0_k\Psi,0_k\Psi}$ class.
   \item $\ket{\Phi}=r\ket{\phi\varphi}+s\left(\ket{\phi\bar{\varphi}}+\ket{\bar{\phi}\varphi}\right)$ with $s\neq 0$, $\ket{\Psi} = a\ket{\psi\theta} + b\left(\ket{\psi\bar{\theta}}+\ket{\bar{\psi}\theta}\right)$ and $\ket{\bar{\Psi}} = c\ket{\psi\theta} + d\left(\ket{\psi\bar{\theta}}+\ket{\bar{\psi}\theta}\right)$, where $b\neq0$, $d\neq 0$ and $ad-bc\neq 0$: projecting the first qubit onto $\ket{\pm}$ yields:
    \begin{multline*}
     (a\pm cr)\ket{\phi\varphi\psi\theta} + (b\pm dr) \left( \ket{\phi\varphi\psi\bar{\theta}} + \ket{\phi\varphi\bar{\psi}\theta} \right) \pm cs \left(\ket{\phi\bar{\varphi}\psi\theta} + \ket{\bar{\phi}\varphi\psi\theta}\right) \\
     \pm ds \left( \ket{\phi\bar{\varphi}\psi\bar{\theta}} + \ket{\phi\bar{\varphi}\bar{\psi}\theta} + \ket{\bar{\phi}\varphi\psi\bar{\theta}} + \ket{\bar{\phi}\varphi\bar{\psi}\theta} \right)
    \end{multline*}
    This state is in the $\mathfrak{W}_{000,W}$ class.
  \end{itemize}
  
  \item [${0\Psi0,\Psi\Psi}$] In this case, $\ket{\Phi_0} = \ket{\phi\varphi\psi\theta}+\ket{\phi\bar{\varphi}\bar{\psi}\theta}$ and $\ket{\Phi_1} = \ket{\Phi\Psi}$. Let $\ket{\bar{\phi}}$ be some state that is linearly independent of $\ket{\phi}$, and similarly pick some $\ket{\bar{\theta}}$. Then there exist complex numbers $r,s,t,u,a,b,c,d$ such that:
  \begin{align*}
   \ket{\Phi} &= r\ket{\phi\varphi}+s\ket{\phi\bar{\varphi}}+t\ket{\bar{\phi}\varphi}+u\ket{\bar{\phi}\bar{\varphi}} \\
   \ket{\Psi} &= a\ket{\psi\theta}+b\ket{\psi\bar{\theta}}+c\ket{\bar{\psi}\theta}+d\ket{\bar{\psi}\bar{\theta}}
  \end{align*}
  where $ru-st\neq 0$ and $ad-bc\neq 0$. Projecting the first qubit onto $\ket{\pm}$ yields (up to scalar factor):
  \begin{multline*}
   (ar\pm 1)\ket{\phi\varphi\psi\theta} + br\ket{\phi\varphi\psi\bar{\theta}} + cr\ket{\phi\varphi\bar{\psi}\theta} + dr\ket{\phi\varphi\bar{\psi}\bar{\theta}} + as\ket{\phi\bar{\varphi}\psi\theta} + bs\ket{\phi\bar{\varphi}\psi\bar{\theta}} \\ + (cs\pm 1)\ket{\phi\bar{\varphi}\bar{\psi}\theta} + ds\ket{\phi\bar{\varphi}\bar{\psi}\bar{\theta}} + at\ket{\bar{\phi}\varphi\psi\theta} + bt\ket{\bar{\phi}\varphi\psi\bar{\theta}} + ct\ket{\bar{\phi}\varphi\bar{\psi}\theta} + dt\ket{\bar{\phi}\varphi\bar{\psi}\bar{\theta}} \\ + au\ket{\bar{\phi}\bar{\varphi}\psi\theta} + bu\ket{\bar{\phi}\bar{\varphi}\psi\bar{\theta}} + cu\ket{\bar{\phi}\bar{\varphi}\bar{\psi}\theta} + du\ket{\bar{\phi}\bar{\varphi}\bar{\psi}\bar{\theta}}
  \end{multline*}
  Proceeding with the inductive classification, this can be thought of as $\ket{\phi}\ket{\Phi_0'}+\ket{\bar{\phi}}\ket{\Phi_1'}$. Then $\ket{\Phi_1'}$ is always bipartite entangled of type $0_1\Psi$. The state $\ket{\Phi_0'}$ is in the GHZ SLOCC class if $br+ds\neq 0$. If $br+ds=0$, it is in the W SLOCC class as long as all of the following statements hold:
  \[
   \begin{array}{rclcrcl}
   ((ad-bc)r\pm d)r & \neq & 0 & \vee & 0 &\neq & ((ad-bc)s\mp b)s \\
   0 & \neq & bs & \vee & dr & \neq & 0 \\
   0 & \neq & 0 & \vee & 0 & \neq & ar+cs\pm 1.
   \end{array}
  \]
  As $b,d$ cannot both be zero and $r,s$ cannot both be zero, the non-GHZ cases are the following:
  \begin{itemize}
   \item $b=0=s$ (which implies $adrt\neq 0$):
     \begin{itemize}
      \item if $ar\pm 1\neq 0$, the state is in the W SLOCC class,
      \item if $ar\pm 1= 0$, it is in the $0_2\Psi$ class,
     \end{itemize}
   \item $d=0=r$ (which implies $bcsu\neq 0$):
     \begin{itemize}
      \item if $cs\pm 1\neq 0$, the state is in the W SLOCC class,
      \item if $cs\pm 1= 0$, it is in the $0_2\Psi$ class,
     \end{itemize}
   \item $bdrs\neq 0$ and $br+ds=0$:
     \begin{itemize}
      \item if $(ad-bc)r\pm d\neq 0$, the state is in the W SLOCC class,
      \item if $(ad-bc)r\pm d= 0$, it is in the $0_2\Psi$ class.
     \end{itemize}
  \end{itemize}
  In any case, the four-qubit state is entangled.
 
 \item [${00\Psi,0GHZ}$] This case trivially contains three-partite entanglement.
 
 \item [${00\Psi,GHZ0}$] This case trivially contains three-partite entanglement.
 
 \item [${00\Psi,0W}$] This case trivially contains three-partite entanglement.
 
 \item [${00\Psi,W0}$] This case trivially contains three-partite entanglement.
 
 \item [${00\Psi,\Phi}$] This case trivially contains multipartite entanglement.
 
  \item [${\Psi_{ij}\Psi_{kl},\Psi_{ij}\Psi_{kl}}$] Wlog let $i=1$, $j=2$, $k=3$, $l=4$; the argument works analogously for any other choice of indices. In this case, $\ket{\Phi_0} = \ket{\Phi\Psi}$ and $\ket{\Phi_1} = \ket{\bar{\Phi}\bar{\Psi}}$. There are four cases depending on whether $\spans\left\{\ket{\Phi},\ket{\bar{\Phi}}\right\}$ and $\spans\left\{\ket{\Psi},\ket{\bar{\Psi}}\right\}$ contain one or two product vectors, respectively. As before, the cases where one subspace contains two product vectors and the other contains one are symmetric; so it suffices to consider three cases.
  \begin{itemize}
   \item $\ket{\Phi}=r\ket{\phi\varphi}+s\ket{\bar{\phi}\bar{\varphi}}$, $\ket{\bar{\Phi}}=t\ket{\phi\varphi}+u\ket{\bar{\phi}\bar{\varphi}}$, $\ket{\Psi}=a\ket{\psi\theta}+b\ket{\bar{\psi}\bar{\theta}}$ and $\ket{\bar{\Psi}}=c\ket{\psi\theta}+d\ket{\bar{\psi}\bar{\theta}}$, where $abcdrstu\neq 0$, $ru-st\neq 0$ and $ad-bc\neq 0$: projecting the first qubit onto $\ket{\pm}$ yields:
    \begin{multline*}
     (ar\pm ct)\ket{\phi\varphi\psi\theta} + (br\pm dt)\ket{\phi\varphi\bar{\psi}\bar{\theta}} + (as\pm cu) \ket{\bar{\phi}\bar{\varphi}\psi\theta} + (bs\pm du) \ket{\bar{\phi}\bar{\varphi}\bar{\psi}\bar{\theta}} \\
     = \ket{\phi\varphi}\left((ar\pm ct)\ket{\psi\theta} + (br\pm dt)\ket{\bar{\psi}\bar{\theta}}\right) + \ket{\bar{\phi}\bar{\varphi}}\left((as\pm cu) \ket{\psi\theta} + (bs\pm du) \ket{\bar{\psi}\bar{\theta}}\right)
    \end{multline*}
    It is straightforward to check that:
    \begin{equation}\label{eq:PsiPsi+PsiPsi_l.i.}
     (ar\pm ct)(bs\pm du)-(br\pm dt)(as\pm cu) = \pm (ad-bc)(ru-st) \neq 0,
    \end{equation}
    hence the two states in parentheses are always linearly independent. Thus, depending on the values of the parameters, the state belongs to the $\mathfrak{W}_{0_k\Psi,0_k\Psi}$ class (if none of the four terms in parentheses in the top row vanish), the $\mathfrak{W}_{000,0_k\Psi}$ class (if one of those terms vanishes), or the $\mathfrak{W}_{000,000}$ class (if two vanish). Because of the linear independence condition, the state after projection is always genuinely entangled.
   \item $\ket{\Phi}=r\ket{\phi\varphi}+s\ket{\bar{\phi}\bar{\varphi}}$, $\ket{\bar{\Phi}}=t\ket{\phi\varphi}+u\ket{\bar{\phi}\bar{\varphi}}$, $\ket{\Psi}=a\ket{\psi\theta}+b\left(\ket{\psi\bar{\theta}}+\ket{\bar{\psi}\theta}\right)$ and $\ket{\bar{\Psi}}=c\ket{\psi\theta}+d\left(\ket{\psi\bar{\theta}}+\ket{\bar{\psi}\theta}\right)$, where $bdrstu\neq 0$, $ru-st\neq 0$ and $ad-bc\neq 0$: projecting the first qubit onto $\ket{\pm}$ yields:
    \begin{multline*}
     (ar\pm ct)\ket{\phi\varphi\psi\theta} + (br\pm dt)\left(\ket{\phi\varphi\psi\bar{\theta}}+\ket{\phi\varphi\bar{\psi}\theta}\right) + (as\pm cu) \ket{\bar{\phi}\bar{\varphi}\psi\theta} \\
     + (bs\pm du) \left( \ket{\bar{\phi}\bar{\varphi}\psi\bar{\theta}} + \ket{\bar{\phi}\bar{\varphi}\bar{\psi}\theta} \right)
    \end{multline*}
    The argument then proceeds as in the previous case.
   \item $\ket{\Phi}=r\ket{\phi\varphi}+s\left(\ket{\phi\bar{\varphi}}+\ket{\bar{\phi}\varphi}\right)$, $\ket{\bar{\Phi}}=t\ket{\phi\varphi}+u\left(\ket{\phi\bar{\varphi}}+\ket{\bar{\phi}\varphi}\right)$, $\ket{\Psi}=a\ket{\psi\theta}+b\left(\ket{\psi\bar{\theta}}+\ket{\bar{\psi}\theta}\right)$ and $\ket{\bar{\Psi}}=c\ket{\psi\theta}+d\left(\ket{\psi\bar{\theta}}+\ket{\bar{\psi}\theta}\right)$, where $bdsu\neq 0$, $ru-st\neq 0$ and $ad-bc\neq 0$: projecting the first qubit onto $\ket{\pm}$ yields:
    \begin{multline*}
     (ar\pm ct)\ket{\phi\varphi\psi\theta} + (br\pm dt)\left(\ket{\phi\varphi\psi\bar{\theta}}+\ket{\phi\varphi\bar{\psi}\theta}\right) + (as\pm cu) \left(\ket{\phi\bar{\varphi}\psi\theta} + \ket{\bar{\phi}\varphi\psi\theta} \right) \\
     + (bs\pm du) \left( \ket{\phi\bar{\varphi}\psi\bar{\theta}} + \ket{\phi\bar{\varphi}\bar{\psi}\theta} +  \ket{\bar{\phi}\varphi\psi\bar{\theta}} + \ket{\bar{\phi}\varphi\bar{\psi}\theta} \right)
    \end{multline*}
    Continuing with the inductive classification, we need to classify the states:
    \begin{multline*}
     (ar\pm ct)\ket{\varphi\psi\theta} + (br\pm dt)\ket{\varphi\psi\bar{\theta}} + (br\pm dt)\ket{\varphi\bar{\psi}\theta} + (as\pm cu) \ket{\bar{\varphi}\psi\theta} \\ + (bs\pm du)\ket{\bar{\varphi}\psi\bar{\theta}} + (bs\pm du) \ket{\bar{\varphi}\bar{\psi}\theta}
    \end{multline*}
    and $\ket{\varphi}\left((as\pm cu)\ket{\psi\theta}+(bs\pm du) \left(  \ket{\psi\bar{\theta}} + \ket{\bar{\psi}\theta} \right)\right)$.
    From \eqref{eq:PsiPsi+PsiPsi_l.i.}, we deduce that $as\pm cu$ and $bs\pm du$ cannot be zero at the same time, similarly $br\pm dt$ and $bs\pm du$.
    Thus the latter state cannot vanish completely; it is either of type $000$ (if $bs\pm du=0$), or $0\Psi$. 
    The former is always a W SLOCC class state.
    Hence the four-qubit post-projection state belongs to either the $\mathfrak{W}_{000,W}$ or the $\mathfrak{W}_{0\Psi,W}$ class.
  \end{itemize}
  
  \item [${\Psi_{ij}\Psi_{kl},\Psi_{ik}\Psi_{jl}}$] Wlog let $i=1$, $j=3$, $k=2$, $l=4$; the argument works analogously for any other choice of indices. We can find single-qubit states (labelled using the usual conventions) so that:
  \[
   \ket{\Phi_0} = \ket{\phi\varphi\psi\theta} + \ket{\phi\bar{\varphi}\psi\bar{\theta}} +  \ket{\bar{\phi}\varphi\bar{\psi}\theta} + \ket{\bar{\phi}\bar{\varphi}\bar{\psi}\bar{\theta}}.
  \]
  Then $\ket{\Phi_1} = \ket{\Phi\Psi}$ with:
  \begin{align*}
   \ket{\Phi} &= r\ket{\phi\varphi}+s\ket{\phi\bar{\varphi}}+t\ket{\bar{\phi}\varphi}+u\ket{\bar{\phi}\bar{\varphi}} \\
   \ket{\Psi} &= a\ket{\psi\theta}+b\ket{\psi\bar{\theta}}+c\ket{\bar{\psi}\theta}+d\ket{\bar{\psi}\bar{\theta}},
  \end{align*}
  where $ru-st\neq 0$ and $ad-bc\neq 0$. Projecting the first qubit onto $\ket{\pm}$ yields (up to scalar factor):
  \begin{multline*}
   (ar\pm 1)\ket{\phi\varphi\psi\theta} + br\ket{\phi\varphi\psi\bar{\theta}} + cr\ket{\phi\varphi\bar{\psi}\theta} + dr\ket{\phi\varphi\bar{\psi}\bar{\theta}} + as\ket{\phi\bar{\varphi}\psi\theta} + (bs\pm 1)\ket{\phi\bar{\varphi}\psi\bar{\theta}} \\ + cs\ket{\phi\bar{\varphi}\bar{\psi}\theta} + ds\ket{\phi\bar{\varphi}\bar{\psi}\bar{\theta}} + at\ket{\bar{\phi}\varphi\psi\theta} + bt\ket{\bar{\phi}\varphi\psi\bar{\theta}} + (ct\pm 1)\ket{\bar{\phi}\varphi\bar{\psi}\theta} + dt\ket{\bar{\phi}\varphi\bar{\psi}\bar{\theta}} \\ + au\ket{\bar{\phi}\bar{\varphi}\psi\theta} + bu\ket{\bar{\phi}\bar{\varphi}\psi\bar{\theta}} + cu\ket{\bar{\phi}\bar{\varphi}\bar{\psi}\theta} + (du\pm 1)\ket{\bar{\phi}\bar{\varphi}\bar{\psi}\bar{\theta}}
  \end{multline*}
  Projecting the formerly second and now first qubit onto a state $\ket{\chi}$ satisfying $\braket{\chi}{\phi}=x$ and $\braket{\chi}{\bar{\phi}}=y$ yields:
  \begin{multline}\label{eq:post_2_proj}
   (a(rx+ty)\pm x)\ket{\varphi\psi\theta}
   + b(rx+ty)\ket{\varphi\psi\bar{\theta}}
   + (c(rx+ty)\pm y)\ket{\varphi\bar{\psi}\theta}
   + d(rx+ty)\ket{\varphi\bar{\psi}\bar{\theta}} \\
   + a(sx+uy)\ket{\bar{\varphi}\psi\theta}
   + (b(sx+uy)\pm x)\ket{\bar{\varphi}\psi\bar{\theta}}
   + c(sx+uy)\ket{\bar{\varphi}\bar{\psi}\theta}
   + (d(sx+uy)\pm y)\ket{\bar{\varphi}\bar{\psi}\bar{\theta}}.
  \end{multline}
  The vector $(x,y)$ is the complex conjugate of the components of $\ket{\chi}$ in the $\left\{\ket{\phi},\ket{\bar{\phi}}\right\}$ basis, so with $\ket{\chi}\in\left\{\ket{0},\ket{1},\ket{+},\ket{-}\right\}$, we get four pairwise linearly independent vectors.
  
  Now, \eqref{eq:post_2_proj} is a GHZ state iff:
  \[
   \left( (cr+ds)x^2 + (-ar-bs+ct+du) xy - (at+bu)y^2 \right)^2 \neq 0.
  \]
  As the polynomial is the square of a quadratic form, there are at most two linearly independent null vectors $(x,y)$ unless the quadratic form vanishes. We have a choice of four pairwise linearly-independent vectors $(x,y)$, so if the polynomial is not identically zero, there is one that gives a GHZ state.
  
  The state has W type if the above inequality is false and furthermore:
  \begin{align*}
   (- (\delta r \pm d)x + (-\delta t \pm b)y)(rx + ty) \neq 0 &\vee (-\delta s\pm c)x - (\delta u \pm a)y)(sx + uy) \neq 0 \\
   ((ar+bs\pm 1)x + (at+bu)y)x\neq 0 &\vee ((cr+ds)x + (ct+ du \pm 1)y)y\neq 0 \\
   (dx - by)(rx + ty)\neq 0 &\vee (cx - ay)(sx + uy)\neq 0,
  \end{align*}
  where $\delta = ad-bc$.
  
  The GHZ polynomial is identically zero if:
  \begin{equation}\label{eq:GHZ_id_zero}
   0=cr+ds=ct+du-ar-bs=at+bu.
  \end{equation}
  Under that assumption, we have:
  \begin{align*}
   \delta r \pm d &= adr-bcr \pm d = adr + bds \pm d = d(ar+bs\pm 1) \\
   -\delta t \pm b &= bct-adt \pm b = b(ar+bs-du)+bdu\pm b = b(ar+bs\pm 1) \\
   -\delta s \pm c &= bcs - ads \pm c = bcs + acr \pm c = c (ar+bs\pm 1) \\
   \delta u \pm a &= adu - bcu \pm a = a(ar+bs-ct)+act\pm a = a(ar+bs\pm 1),
  \end{align*}
  so the W conditions become:
  \begin{align*}
   -(dx-by)(rx+ty)(ar+bs\pm 1) \neq 0 &\vee (cx-ay)(sx+uy)(ar+bs\pm 1) \neq 0 \\
   (ar+bs\pm 1)x^2\neq 0 &\vee (ar+bs \pm 1)y^2\neq 0 \\
   (dx - by)(rx + ty)\neq 0 &\vee (cx - ay)(sx + uy)\neq 0.
  \end{align*}
  Note that $cr+ds=0$ implies that $ar+bs\neq 0$ by entanglement of $\ket{\Psi}$ and $\ket{\Phi}$. If the plus or minus sign are chosen so as to make $ar+bs\pm 1\neq 0$, then the middle condition is true whenever $x,y$ are not both zero.
  
  Now, unless $x=y=0$, $(rx+ty)$ and $(sx+uy)$ cannot be zero simultaneously by entanglement of $\ket{\Phi}$. Similarly, $(cx-ay)$ and $(dx-by)$ cannot be zero simultaneously unless $x=y=0$ by entanglement of $\ket{\Psi}$.
  Thus the only way the third statement is not satisfied is if $dx-by=0=sx+uy$ or $rx+ty=0=cx-ay$. That means that at least two of the four vectors $(x,y)$ do satisfy the third statement. But then the first statement is also satisfied, so there is always a choice of $(x,y)$ that results in a W type state.
 \end{description}
 Any other decomposition trivially contains multipartite entanglement in at least one of $\ket{\Phi_0}$ and $\ket{\Phi_1}$.
\end{proof}

\section{Complex QR decomposition and solutions of $A^TA\doteq X$}
\label{a:ATA=X}

We first extend the orthogonal QR decomposition from real to complex matrices.
Recall that any real square matrix $M$ can be written as $M=QR$ where $Q$ is an orthogonal matrix and $R$ is upper (or lower) triangular.
The equivalent result for complex matrices requires $Q$ to be unitary instead of orthogonal.
Nevertheless, many complex 2 by 2 matrices do admit a decomposition with a complex orthogonal matrix and an upper or lower triangular matrix.
Where this is not possible, we give an alternative decomposition using the matrix $K$ defined below instead of an orthogonal matrix.
Recall that :
\begin{equation}
 K = \begin{pmatrix}1&1\\i&-i\end{pmatrix} \quad\text{and}\quad X = \begin{pmatrix}0&1\\1&0\end{pmatrix}.
\end{equation}

\begin{lem}[Orthogonal QR decomposition for complex matrices]\label{lem:QR_decomposition}
 Let $M$ be an invertible 2 by 2 complex matrix, write $O$ for the set of all 2 by 2 complex orthogonal matrices, and let $K$ be as defined above.
 Then the following hold:
 \begin{itemize}
  \item There exists $Q\in O\cup\{K, KX\}$ such that $Q^{-1}M$ is upper triangular.
  \item There exists $Q\in O\cup\{K,KX\}$ such that $Q^{-1}M$ is lower triangular.
  \item If $Q^{-1}M$ is neither lower nor upper triangular for any orthogonal $Q$, then $M=KD$ or $M=KXD$, where $D$ is diagonal.
 \end{itemize}
\end{lem}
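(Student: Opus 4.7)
The plan is to exploit the geometry of the symmetric (non-Hermitian) bilinear form $\langle v,w\rangle = v^Tw$ on $\CC^2$. Call a nonzero vector $v$ \emph{isotropic} if $v^Tv=0$; writing $v=(a,b)^T$, the condition $a^2+b^2=0$ forces $a\neq 0$ and $b=\pm ia$, so the isotropic directions are exactly the two lines spanned by $(1,i)^T$ and $(1,-i)^T$. A complex orthogonal $Q$ is one whose columns form an orthonormal basis for this bilinear form, so one can prescribe $Qe_1$ (or $Qe_2$) to be any non-isotropic unit vector and extend: explicitly, given $v=(a,b)^T$ with $v^Tv=1$, the vector $w=(-b,a)^T$ satisfies $v^Tw=0$ and $w^Tw=1$. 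The crucial observation about $K$ is that its two columns are exactly the two isotropic directions: $Ke_1=(1,i)^T$ and $Ke_2=(1,-i)^T$, and consequently $KXe_1=Ke_2$ and $KXe_2=Ke_1$.

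To prove the first bullet, let $v$ be the first column of $M$. If $v^Tv\neq 0$, rescale $v$ to a unit vector under the bilinear form and complete it to an orthonormal basis as above; the resulting orthogonal matrix $Q$ has $Qe_1$ proportional to $v$, so the first column of $Q^{-1}M$ is a multiple of $e_1$, i.e.\ $Q^{-1}M$ is upper triangular. If $v^Tv=0$, then $v$ is proportional to $(1,i)^T=Ke_1$ or to $(1,-i)^T=KXe_1$, and taking $Q=K$ or $Q=KX$ respectively gives $Q^{-1}v\propto e_1$, again yielding upper triangular $Q^{-1}M$. The second bullet is proved identically working with the second column of $M$: one needs $Qe_2$ proportional to that column, and in the isotropic case $Ke_2\propto(1,-i)^T$ and $KXe_2\propto(1,i)^T$ cover both possibilities.

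For the third bullet, suppose no orthogonal $Q$ makes $Q^{-1}M$ either upper or lower triangular. Reading the first two bullets contrapositively, this forces both columns of $M$ to be isotropic. Invertibility of $M$ prevents the two columns from lying in the same isotropic line, so one column is a nonzero multiple of $(1,i)^T$ and the other a nonzero multiple of $(1,-i)^T$. If the first column is $\alpha(1,i)^T$ and the second $\beta(1,-i)^T$ then $M=K\diag(\alpha,\beta)$, while if the order is reversed then $M=KX\diag(\beta,\alpha)$.

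The only step that requires any care is the ``complete to an orthonormal basis'' argument, since over $\CC$ one must explicitly verify that the standard construction works for non-isotropic unit vectors (it does, as recorded above); the rest is a direct case split on which columns of $M$ are isotropic, matched against the fact that the columns of $K$ and the columns of $KX$ together exhaust the two isotropic directions.
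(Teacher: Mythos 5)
Your proof is correct and is essentially the paper's argument in more conceptual packaging: the paper parametrises a general complex orthogonal $Q$ by $\alpha,\beta$ with $\alpha^2+\beta^2=1$ and finds that triangularisation by an orthogonal matrix fails exactly when the relevant column of $M$ satisfies $x^2+z^2=0$ (resp.\ $y^2+w^2=0$), which is precisely your isotropy condition, with $K$ and $KX$ handling those degenerate cases and the third bullet arising when both columns are isotropic. The only presentational difference is that you build $Q$ by completing a non-isotropic column to a basis that is orthonormal for the bilinear form $v^Tw$, whereas the paper solves directly for the entries of $Q$; the two constructions are equivalent, and your observation that the columns of $K$ and $KX$ are exactly the two isotropic directions is the same fact the paper verifies by explicit matrix computation.
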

\begin{proof}
 Without loss of generality, a complex orthogonal matrix $Q$ can be written as:
 \begin{equation}
  Q = \begin{pmatrix} \alpha&\beta\\\pm\beta&\mp\alpha \end{pmatrix},
 \end{equation}
 where $\alpha,\beta\in\CC$ satisfy $\alpha^2+\beta^2=1$.
 Write $M$ as:
 \begin{equation}
  M = \begin{pmatrix} x&y\\z&w\end{pmatrix}.
 \end{equation}
 We assumed $M$ was invertible, so $\det M = xw-yz\neq 0$.
 
 Let:
 \begin{equation}
  R = QM = \begin{pmatrix} \alpha&\beta\\\pm\beta&\mp\alpha \end{pmatrix}\begin{pmatrix} x&y\\z&w\end{pmatrix} = \begin{pmatrix} \alpha x + \beta z & \alpha y + \beta w \\ \mp(\alpha z -\beta x) & \mp(\alpha w -\beta y) \end{pmatrix}
 \end{equation}
 For a lower triangular decomposition, we want $\alpha y + \beta w=0$.
 If $y=0$, the matrix is already lower triangular, i.e.\ the decomposition is trivial with $Q=I$.
 Otherwise, if $y\neq 0$, the condition for lower triangularity implies $\alpha=-\beta w/y$.
 Then:
 \begin{equation}
  R = \begin{pmatrix} \beta(z-xw/y) & 0 \\ \pm\beta(x+zw/y) & \pm\beta(y+w^2/y) \end{pmatrix}.
 \end{equation}
 Now $z-xw/y$ is non-zero because $M$ is invertible.
 But for complex $y,w$, it is possible to have $y+w^2/y=0$, or equivalently $y^2+w^2=0$.
 In that case, $M$ cannot be equal to $Q^TR$ because $R$ is not invertible whereas $M$ was assumed to be.
 Therefore the orthogonal decomposition fails.
 Yet note that $y^2+w^2=0$ is equivalent to $w=\pm iy$.
 If $w=iy$, we have:
 \begin{equation}
  (KX)^{-1} M = \frac{1}{2}\begin{pmatrix}1& i \\ 1&- i\end{pmatrix} \begin{pmatrix} x&y\\z& iy\end{pmatrix} = \frac{1}{2}\begin{pmatrix} x+iz & 0 \\ x-iz & 2y \end{pmatrix},
 \end{equation}
 which is lower triangular, and similarly with just $K$ if $w=-iy$.

 Now try an upper triangular decomposition, which requires $\alpha z -\beta x = 0$.
 If $z=0$, the decomposition is trivial with $Q=I$, so assume otherwise.
 Then this can be rewritten as $\alpha = \beta x/z$, and:
 \begin{equation}
  R = \begin{pmatrix} \beta(z+ x^2/z) & \beta(w + xy/z) \\ 0 & \pm\beta(y-xw/z) \end{pmatrix}.
 \end{equation}
 As before, $y-xw/z$ is always non-zero, but $x^2+z^2$ may vanish, in which case the orthogonal decomposition fails.
 Then $z = \pm ix$.
 Note that if $z=ix$:
 \begin{equation}
  K^{-1} M = \frac{1}{2}\begin{pmatrix}1& -i \\ 1&i\end{pmatrix} \begin{pmatrix} x&y\\ix&w\end{pmatrix} = \frac{1}{2}\begin{pmatrix} 2x & y-iw \\ 0 & y+iw \end{pmatrix},
 \end{equation}
 is upper triangular, and similarly with $KX$ if $z=-ix$.
 
 Both types of orthogonal decomposition fail only if $x^2+z^2=0$ and $y^2+w^2=0$ simultaneously.
 Again, write $z=\pm ix$.
 Then, by invertibility of $M$, $w=\mp iy$.
 Thus, letting $D=\left(\begin{smallmatrix}x&0\\0&y\end{smallmatrix}\right)$:
 \begin{equation}
  M = \begin{pmatrix} x&y\\\pm ix&\mp iy\end{pmatrix} = \begin{pmatrix} 1&1\\\pm i&\mp i\end{pmatrix} \begin{pmatrix} x&0\\0&y\end{pmatrix} = \begin{cases} KD &\text{if $\pm$ goes to $+$, or} \\ KXD & \text{if $\pm$ goes to $-$.} \end{cases}
 \end{equation}
 This completes the proof.
\end{proof}

Using the complex QR decomposition, we can now consider the solutions of $A^TA\doteq X$, where `$\doteq$' denotes equality up to scalar factor.

\begin{prop}
 The solutions of:
 \begin{equation}\label{eq:ATA-X}
  A^TA \doteq X
 \end{equation}
 are exactly those matrices $A$ satisfying $A=KD$ or $A=KXD$ for some invertible diagonal matrix $D$.
\end{prop}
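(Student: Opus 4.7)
The plan is to prove both directions separately, using the complex QR-style decomposition of Lemma \ref{lem:QR_decomposition} as the main tool for the hard direction.

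\textbf{Easy direction.} First I would verify that any matrix of the stated form is a solution. A direct computation gives $K^TK = \left(\begin{smallmatrix}0&2\\2&0\end{smallmatrix}\right) = 2X$, and since $X^TX = I$, one also has $(KX)^T(KX) = X^T(K^TK)X = 2X^3 = 2X$. Conjugating by an invertible diagonal matrix $D=\mathrm{diag}(d_1,d_2)$ gives $DXD = d_1 d_2 X$, so in both cases $A^TA = 2 d_1 d_2 X \doteq X$, as required.

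\textbf{Hard direction.} Suppose $A^TA \doteq X$. Since $X$ is invertible, $A$ must also be invertible (otherwise $A^TA$ would have determinant zero). Now apply Lemma \ref{lem:QR_decomposition} to $A$. The key observation is that the orthogonal case of that lemma cannot occur: if $A = QR$ with $Q$ complex orthogonal and $R$ (upper or lower) triangular, then $A^TA = R^TQ^TQR = R^TR$, so $R^TR \doteq X$. Writing out $R^TR$ for a triangular $R$ shows this is impossible: an upper triangular $R = \left(\begin{smallmatrix}a&b\\0&d\end{smallmatrix}\right)$ yields $R^TR = \left(\begin{smallmatrix}a^2 & ab \\ ab & b^2+d^2\end{smallmatrix}\right)$, and $\doteq X$ forces $a^2=0$ (so $a=0$) yet $ab\ne 0$, a contradiction; the lower triangular case is symmetric.

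\textbf{Conclusion.} Having ruled out both orthogonal cases of Lemma \ref{lem:QR_decomposition}, the third alternative of that lemma applies, giving $A = KD$ or $A = KXD$ for some diagonal $D$. Finally, since $A$ is invertible and $K$, $KX$ are invertible, $D$ must be invertible as well. This combined with the easy direction yields the stated characterisation. I expect no real obstacle: the main work is the case analysis of $R^TR$ for triangular $R$, which is a short algebraic calculation, and the rest is just bookkeeping from the QR lemma already proved.
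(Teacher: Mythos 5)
Your proposal is correct and follows essentially the same route as the paper: the easy direction by the direct computations $K^TK=2X$, $(KX)^T(KX)=2X$ and $DXD=d_1d_2X$, and the hard direction by ruling out the orthogonal cases of Lemma \ref{lem:QR_decomposition} via the explicit form of $R^TR$ for triangular $R$, so that only the $KD$/$KXD$ alternative remains. Your explicit remarks that $A^TA\doteq X$ forces $A$ (and hence $D$) to be invertible are a small but welcome tightening of details the paper leaves implicit.
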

\begin{proof}
 First, we check that matrices of the form $KD$ or $KXD$ for some invertible diagonal matrix $D$ satisfy \eqref{eq:ATA-X}.
 Indeed:
 \begin{equation}
  (KD)^TKD = D^T K^T K D = D \begin{pmatrix}1&i\\1&-i\end{pmatrix} \begin{pmatrix}1&1\\i&-i\end{pmatrix} D = 2 D X D = 2 xy X \doteq X,
 \end{equation}
 where $D = \left(\begin{smallmatrix}x&0\\0&y\end{smallmatrix}\right)$ with $x,y\in\CC\setminus\{0\}$.
 Similarly:
 \begin{equation}
  (KXD)^TKXD = D^T X^T K^T KXD = 2DX^3D = 2DXD = 2xyX \doteq X.
 \end{equation}
 This completes the first part of the proof.

 It remains to be shown that these are the only solutions of $A^TA\doteq X$.
 Assume, for the purposes of deriving a contradiction, that there is a solution that has an orthogonal QR decomposition.
 In particular, suppose $A=QR$ for some upper triangular matrix $R$.
 Then, by orthogonality of $Q$:
 \begin{equation}
  A^T A = R^TQ^TQR = R^T R = \begin{pmatrix} R_{00} & 0 \\ R_{01} & R_{11} \end{pmatrix} \begin{pmatrix} R_{00} & R_{01} \\ 0 & R_{11} \end{pmatrix} = \begin{pmatrix} R_{00}^2 & R_{00}R_{01} \\ R_{00}R_{01} & R_{01}^2+R_{11}^2 \end{pmatrix}.
 \end{equation}
 The only way for the top left component of this matrix to be zero, as required, is if $R_{00}$ is zero. 
 Yet, in that case, the top right and bottom left components of $A^T A$ are zero too, hence $A^T A$ cannot be invertible.
 That is a contradiction because any non-zero scalar multiple of $X$ is invertible.
 
 A similar argument applies if $R$ is lower triangular.
 
 Thus, all solutions of \eqref{eq:ATA-X} have to fall into the third case of Lemma \ref{lem:QR_decomposition}: i.e.\ all solutions must be of the form $A=KD$ or $A=KXD$.
\end{proof}

\end{document}